\title{Synthesis of Memory-Efficient  Real-Time  Controllers for \\Safety Objectives
\thanks{%
This work has been financially supported in part by the
 European Community's Seventh Framework Programme
 via project Control for coordination of distributed systems
 (C4C; Grant Agreement number INFSO-ICT-223844); and
 by Austrian FWF NFN ARiSE funding.}
}
\author{Krishnendu Chatterjee$^1$    
\and Vinayak S.~Prabhu$^2$ }
\institute{
  $^1$ Institute of Science and Technology (IST) Austria\\
  $^2$ University of Porto\\
  {\tt krish.chat@ist.ac.at, vinayak@eecs.berkeley.edu}
}
\date{}
\begin{document}
\maketitle

\thispagestyle{empty}

\begin{abstract}
We study synthesis of controllers for real-time systems, where the 
objective is to stay in a given safe set.
The problem is solved by obtaining winning strategies in concurrent two-player 
\emph{timed automaton games} with safety objectives.
To prevent a player from winning by blocking time, we restrict each player 
to strategies that ensure that the player cannot be responsible for causing 
a zeno run.
We construct winning strategies for the controller which require access 
only to (1)~the system clocks (thus,  controllers which require their own 
internal infinitely precise clocks are not necessary), and 
(2)~a linear (in the number of clocks) number of  memory bits.
Precisely, we show that a memory of size $\big(3\cdot|C|+1 + \lg(|C|+1)\big)$ 
bits suffices for winning controller
strategies for safety objectives, where $C$ is the set of clocks of the 
timed automaton game, significantly improving the previous known 
exponential bound.
We also settle the open question of whether \emph{region} strategies 
for controllers require memory for safety objectives by showing 
with an example that region strategies do require memory for safety objectives.
\end{abstract}

\section{Introduction}

Synthesizing controllers to ensure that a plant stays in a safe set is an
important problem in the area of systems control.
We study the synthesis of \emph{timed} controllers in the present paper.
Our formalism is based on  timed automata~\cite{AlurD94}, which are 
models of real-time systems in which
states consist of discrete locations and values for real-time clocks.
The transitions between locations are dependent on the clock values.
The real-time controller synthesis problem is modeled using 
\emph{timed automaton games}, which are played by two players on 
timed automata, where player~1 is the  ``controller'' and player~2 
the ``plant''.
Obtaining  winning strategies for player~1 in such games 
corresponds to the construction of  controllers for real-time
systems with desired objectives.

The issue of \emph{time divergence} is crucial in timed games, as a 
naive control strategy might simply block time, leading to 
``zeno'' runs.
The following approaches have been proposed to avoid such invalid zeno
solutions: 
(1)~discretize time so that 
players can only take 
transitions at integer multiples of some fixed time period, e.g. 
in~\cite{HenKop99};
(2)~put syntactic restrictions on the timed game structure so that
zeno runs are not possible (the syntactic restriction is usually presented 
as the 
\emph{strong non-zenoness} assumption where the obtained 
controller synthesis algorithms
are guaranteed to work correctly only on
timed automaton games where every cycle is such that in it 
some clock is reset to 0 and is also greater than an integer value at some
point, e.g. in~\cite{AsarinM99,BouyerBL04,maler98controller});
(3)~require player~1 to ensure time divergence (e.g. by only taking transitions
if player~2 can never take transitions in the future from the current location,
as in~\cite{DSouzaM02,BouyerDMP03});
(4)~give the controller access to  an extra (infinitely precise) clock which
measure global time
and require that 
player~1 wins if either its moves are  chosen only finitely often, or if the 
ticks of this extra clock are seen infinitely often while satisfying the desired
objective, e.g, in~\cite{AFHM+03,AluHen97}.

The above  approaches are not optimal in many cases and below we point out 
some drawbacks.
Discretizing the system blows up the state space; and might not be faithful
to the real-time semantics.
Putting syntactic restrictions is troublesome as it can lead to disallowing
certain system models.
For example, consider the timed automaton game $\A$ in 
Figure~\ref{figure:example-syntactic}.
\begin{figure}[t]
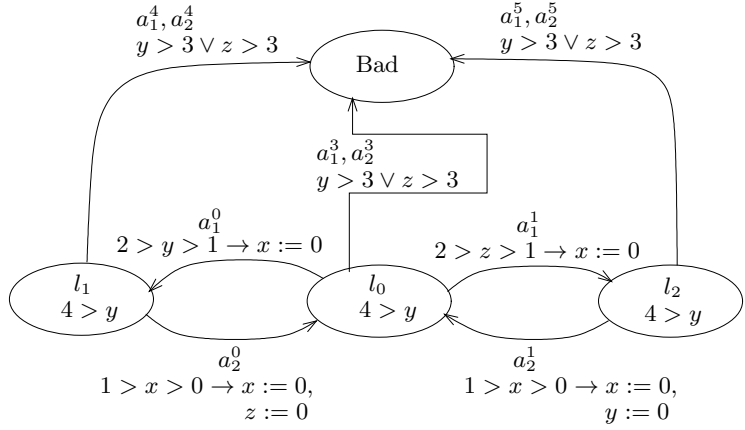

\strut\centerline{\input Figures/example-syntactic.eepic}
\caption{A timed automaton game.}
\label{figure:example-syntactic}
\end{figure}
The details of the game are not important and are omitted here for the sake
of brevity.
In the figure, the edges are labelled as $a_1^j$ for actions controlled
by player~1; and by $a_2^j$ for actions controlled
by player~2.
The safety objective is to avoid the location ``$\bad$'' (player~1 can satisfy
this objective without blocking time).
One can easily show that zeno runs are possible in this timed automaton game,
mainly, due to the edges $a_2^0$ and $a_2^1$.
The game $\A$ can be made to be non-zeno syntactically by changing the
guards of the edges $a_2^0$ and $a_2^1$ to $1>x>d$, where $d$ is some 
conservative constant (say  $0.001$ time units , where it is assumed that the 
plant takes
at least $0.001$ time units to transition out of $l_1$ and $l_2$).
This change unfortunately blows up the finite state \emph{region abstraction}
of the timed automaton game (the region abstraction
is used in every current solution to the real-time controller synthesis
problem for timed automaton games).
If the constant $d$ is $0.001$, then the number of states in the region
abstraction blows up from roughly $2.5*10^5$ for the original game to
$2.5*10^5*10^9$; a blow up by a \emph{factor} of $10^9$.
Admittedly however, on the fly algorithms for controller synthesis may 
help mitigate the situation in some cases (\cite{CassezDFLL05})
 by not explicitly
constructing the full graph of the region abstraction.

Requiring player~1 to guarantee time divergence by only taking
transitions if player~2 cannot take transitions from the current location is
too conservative.
If we consider the game in Figure~\ref{figure:example-syntactic}, this
approach would prevent player~1 from taking \emph{any} of the actions, making
the system uncontrollable.
Finally, adding an extra infinitely precise clock to measure time, and making it
observable to the controller amounts to giving unfair and unrealistic 
power to the
controller in many situations.

In the present paper, we avoid the shortcomings of the previous approaches
by using two techniques.
First, we use \emph{receptive} \cite{AluHen97,SegalaGSL98}, 
player-1 strategies,  which, 
while being required to 
not prevent time from diverging, are not required to ensure time 
divergence.
Receptiveness is incorporated by using the more general, semantic and 
fully symmetric formalism of
\cite{AFHM+03} for dealing with the issue of time divergence.
This setting places no syntactic restriction on the game structure,
and gives both players equally powerful options for advancing time, but
for a player to win, it must not be \emph{responsible} for causing
time to converge.
Formally, our timed games proceed in an infinite sequence of rounds.
In each round, both players simultaneously propose moves, with each move 
consisting of an action and a time delay after which the player wants 
the proposed action to take place.
Of the two proposed moves, the move with the shorter time delay ``wins'' 
the round and determines the next state of the game.
Let a set $\Phi$ of runs be the desired objective for player~1.
Then player~1 has a \emph{winning} strategy for $\Phi$ if 
it has a strategy to ensure that, no matter what player~2 does, one 
of the following two conditions hold:
(1)~time diverges and the resulting run belongs to $\Phi$, or
(2)~time does not diverge but player-1's moves are chosen only finitely 
often (and thus it is not to be blamed for the convergence of time).
Second, in the current work, 
the controller only uses the system clocks of the model  
(unlike ~\cite{AFHM+03} which makes available to the controller 
an extra infinitely precise clock to measure time), ensuring that the controller
bases its actions only on the variables corresponding to the  physical processes
of the system (the system clocks).
Time divergence is \emph{inferred} from the history of certain predicates of 
the system clocks, rather
than from an extra infinitely precise clock that the controller has to keep
in memory.

\smallskip\noindent\textbf{Contributions.}
Our current work significantly improves  the results of~\cite{KCHenPra08}.
In~\cite{KCHenPra08} we showed that finite-memory receptive strategies suffice for 
safety objective in timed automaton games; the problem of establishing a memory 
bound was left open.
In this paper, we first show that a basic analysis using 
\emph{Zielonka trees} of the characterization of
receptive strategies of~\cite{KCHenPra08} leads to an \emph{exponential} number
of bits for the memory
bound (in the number of clocks) for the winning strategies.
We then present an improved new  characterization of receptive strategies
for safety objectives which allows us to obtain a \emph{linear}  number of bits
for the memory bound for winning strategies.
Precisely, we show that a memory of  size $\big(3\cdot|C|+1 + \lg(|C|+1)\big)$ 
bits suffices for  winning
 receptive strategies for safety objectives, where $C$ is the
set of clocks of the timed automaton game, considerably improving the 
exponential bound obtained from the previous result.
Finally, we settle the open question of whether \emph{region} strategies 
for controllers require memory for safety objectives.
We show with an example that region strategies in general do require memory for 
safety objectives.

\section{Timed Games}
\label{section:Setting}
\subsection{Timed Game Structures}
In this Subsection we present the definitions of timed game structures,
runs, objectives, strategies and the notions of sure and almost-sure 
winning in timed game structures.

\smallskip\noindent{\bf Timed game structures.} 
A \emph{timed game structure} is a tuple 
$\TG = \tuple{S,\acts_1,\acts_2,\Gamma_1,\Gamma_2,\delta}$
with the following components.
\begin{itemize}
\item 
$S$ is a set of states.
\item 
$\acts_1$ and $\acts_2$ are two disjoint sets of actions for players~1 
and~2, respectively.
We assume that $\bot_i\not\in \acts_i $, and write 
$\acts_i^{\bot}$ for $\acts_i\cup\set{\bot_i}$.
The set of \emph{moves} for player $i$ is 
$M_i =\reals_{\geq 0} \times \acts_i^{\bot}$.
Intuitively, a move $\tuple{\Delta,a_i}$ by player $i$ indicates a
waiting period of $\Delta$ time units followed by a discrete
transition labeled with action~$a_i$.
The move $\tuple{\Delta, \bot_i}$ is used to represent the move of player~$i$
where player-$i$ just lets time elapse for $\Delta$ time units without taking 
any of the  discrete actions from $A_i$.

\item
$\Gamma_i : S\mapsto 2^{M_i} \setminus \emptyset$ are two move assignments.
At every state~$s$, the set $\Gamma_i(s)$ contains the moves that are 
available to player $i$.
We require that $\tuple{0,\bot}\in\Gamma_i(s)$ for all states $s\in S$ 
and $i\in\set{1,2}$.
Intuitively, $\tuple{0,\bot_i}$ is a time-blocking stutter move.
\item
$\delta: S\times (M_1 \cup M_2) \mapsto S$ is the transition function.
We require that for all time delays 
$\Delta,\Delta'\in\reals_{\ge 0}$ with $\Delta'\leq \Delta$,
and all actions $a_i\in \acts_i^{\bot}$,
we have 
(1)~$\tuple{\Delta,a_i}\in\Gamma_i(s)$ iff both
$\tuple{\Delta',\bot_i}\in \Gamma_i(s)$ and 
$\tuple{\Delta -\Delta',a_i}\in\Gamma_i(\delta(s,\tuple{\Delta',\bot_i}))$;
and 
(2)~if $\delta(s,\tuple{\Delta',\bot_i})=s'$ and 
$\delta(s',\tuple{\Delta-\Delta',a_i})=s''$, then 
$\delta(s,\tuple{\Delta,a_i}) = s''$.
\end{itemize}
The game proceeds as follows.
If the current state of the game is~$s$, then both players simultaneously 
propose moves $\tuple{\Delta_1,a_1}\in\Gamma_1(s)$ and 
$\tuple{\Delta_2,a_2}\in\Gamma_2(s)$.
The move with the shorter duration ``wins'' in determining the next state 
of the game. 
If both moves have the same duration, then  player~2 determines whether
the next state will be determined by its move, or by the move of player~1.
We use this setting as  our goal is to compute the winning
set for player~1 against all possible strategies of player~2.
Formally, we define the \emph{joint destination function} 
$\delta_{\jd} : S\times M_1\times M_2 \mapsto 2^S$ by
\[
\delta_{\jd}(s,\tuple{\Delta_1,a_1},\tuple{\Delta_2,a_2}) = \left\{
\begin{array}{ll}
\set{\delta(s,\tuple{\Delta_1,a_1})} & \text{ if } \Delta_1 < \Delta_2; \\
\set{\delta(s,\tuple{\Delta_2,a_2})} & \text{ if } \Delta_2 < \Delta_1;\\
\set{\delta(s,\tuple{\Delta_2,a_2}),\delta(s,\tuple{\Delta_1,a_1})}
 & \text{ if } \Delta_2 = \Delta_1.

\end{array}
\right.
\]
The time elapsed when the moves $m_1=\tuple{\Delta_1,a_1}$ and 
$m_2=\tuple{\Delta_2,a_2}$ are proposed is given by 
$\delay(m_1,m_2) = \min(\Delta_1,\Delta_2)$. 
The boolean predicate $\Blfunc_i(s,m_1,m_2,s')$ indicates whether player~$i$ 
is ``responsible'' for the state change from $s$ to $s'$ when the moves $m_1$ 
and $m_2$ are proposed.
Denoting the opponent of player~$i$ by $\negspaceopp{i} = 3-i$, for $i \in
\set{1,2}$, we define 
 $$\Blfunc_i(s,\tuple{\Delta_1,a_1},\tuple{\Delta_2,a_2},s')\ =\ 
   \big(\Delta_i \leq \Delta_{\opp{i}}\ \wedge\ 
   \delta(s,\tuple{\Delta_i,a_i}) = s'\big).$$

\smallskip\noindent{\bf Runs.} A \emph{run} of the timed game
structure $\TG$ is an infinite sequence $r=\run$ such that $s_k\in S$
and $m_i^k \in \Gamma_i(s_k)$ and $s_{k+1} \in
\delta_{\jd}(s_k,m_1^k,m_2^k)$ for all $k\geq 0$ and $i\in\set{1,2}$.
For $k\ge 0$, let $\runtime(r,k)$ denote the ``time'' at position $k$ of the 
run, namely, 
$\runtime(r,k)=\sum_{j=0}^{k-1}\delay(m_1^j,m_2^j)$ (we let $\runtime(r,0)=0$).
By $r[k]$ we denote the $(k+1)$-th state $s_k$ of~$r$.
The run prefix $r[0..k]$ is the finite prefix of the run $r$ that ends in 
the state~$s_k$.
Let $\iruns$ be the set of all runs of $\TG$, and let $\VRuns$ be the set 
of run prefixes.

\smallskip\noindent{\bf Objectives.}
An \emph{objective} for the timed game structure $\TG$ is a set 
$\Phi\subseteq\iruns$ of runs.
We will be interested in the classical safety objectives.
Given a set of states $Y$, the \emph{safety} objective consists
of the set of runs that stay within $Y$, formally, 
$\safe(Y)=\set{ r\mid \text{ for all } i \text{ we have } r[i]\in Y}$.
To solve timed games for safety objectives, we shall need to solve for
for certain $\omega$-regular objectives (see~\cite{Thomas97} for the definition of
$\omega$-regular sets).

\smallskip\noindent{\bf Strategies.}
A \emph{strategy} for a player is a recipe that specifies
how to extend a run.
Formally, 
a \emph{probabilistic strategy} $\pi_i$ for player $i\in \set{1,2}$ is a function 
$\pi_i$ that assigns to every run prefix 
$r[0..k]$ a probability measure  $P_{\pi_i}^{r[0..k]}$ over 
$\Gamma_i(r[k])$, 
the set of moves available to  player~$i$ at the state $r[k]$ (the event
class can be suitably chosen).
%
\emph{Pure strategies} are strategies for which the state space of
the probability distribution of $P_{\pi_i}^{r[0..k]}$
is a singleton set for every run $r$ and all $k$.
We let $\Pi_i^{\pure}$ denote the set of pure strategies for player~$i$, with
$i\in\set{1,2}$.
We call probability distributions with singleton support sets as \emph{pure 
distributions}.

%
%
For $i\in\set{1,2}$, let $\Pi_i$ be the set of strategies for player~$i$.
If both both players propose the same time delay, then the tie is
broken by a \emph{scheduler}.
Let $\tiebreak$ be the set of functions from 
$\reals_{\geq 0}\times A_1^{\bot}\times A_2^{\bot}$ to 
$\set{1,2}$.
A \emph{scheduler strategy} $\pi_{\sched}$ is a mapping 
from $\VRuns$ to $\tiebreak$.
If $\pi_{\sched}(r[0..k]) = h$, then 
 the resulting state given player~1 and player~2
moves $\tuple{\Delta,a_1}$ and $\tuple{\Delta,a_2}$ respectively, is
determined by the move of player~$h(\Delta, a_1, a_2)$. 
We denote the set of all scheduler strategies by $\Pi_{\sched}$.
Given two strategies $\pi_1\in \Pi_1$ and $\pi_2\in \Pi_2$, the set of 
possible \emph{outcomes} of the game starting from a state $s\in S$ is 
denoted $\outcomes(s,\pi_1,\pi_2)$.
We let $\outcomes_k(s,\pi_1,\pi_2)$ denote the set of finite runs
$r[0..k-1]$
which are possible according to the
two strategies given the initial state $s$.
If we fix the scheduler strategy $\pi_{\sched}$ then the set of possible
outcomes is denoted by $\outcomes(s,\pi_1,\pi_2, \pi_{\sched})$.
Given strategies $\pi_1$ and $\pi_2$, for player~1 and player~2, respectively,
a scheduler strategy $\pi_{\sched}$ and a starting state $s$ we denote 
by $\Pr_s^{\pi_1,\pi_2,\pi_{\sched}}(\cdot)$ the 
probability space over $\iruns$ given the strategies and the initial state $s$.

\smallskip\noindent{\bf Receptive strategies.}
We will be interested in strategies that are meaningful
(in the sense that they do not block time).
To define them formally we first present  the following two
sets of runs.
\begin{itemize}
\item
A run $r$ is \emph{time-divergent} if 
$\lim_{k\rightarrow\infty}\runtime(r,k) = \infty$.
We denote by $\td$  the set of all time-divergent runs.
\item
The set $\blameless_i\subseteq\iruns$ consists of
the set of runs in which player $i$ is 
responsible only for finitely many transitions.
A run $\run$ belongs to the set $\blameless_i$, for $i=\set{1,2}$, 
if there exists a $k\ge 0$ such that for all $j\ge k$, we have 
$\neg\Blfunc_i(s_j,m_1^{j},m_2^{j}, s_{j+1})$.
\end{itemize}
A strategy $\pi_i$ is \emph{receptive} if  for
all strategies $\pi_{\opp{i}}$, all states $s\in S$, and all
runs $r\in\outcomes(s,\pi_1,\pi_2)$, either $r\in\td$ or
$r\in\blameless_i$.
Thus, no what matter what the opponent does, a receptive  
strategy of player~$i$ cannot be responsible for blocking time.
Strategies that are not receptive  are not physically meaningful.
A timed game structure $\TG$ is \emph{well-formed} if both players have 
receptive strategies.
We restrict our attention to well-formed timed game structures.
We denote $\Pi_i^R$ to be the set of receptive strategies for player~$i$.
Note that for $\pi_1\in\Pi_1^R, \pi_2\in\Pi_2^R$, we have
$\outcomes(s,\pi_1,\pi_2)\subseteq \td$.


\smallskip\noindent{\bf Sure and almost-sure winning modes.}
Let $\sure_1^{\TG}(\Phi)$ (resp. $\almostsure_1^{\TG}(\Phi)$) 
be the set of states $s$ in $\TG$ such that 
player~1 has a receptive strategy
 $\pi_1\in \Pi_1^R$ such that for all scheduler strategies
$\pi_{\sched}\in \Pi_{\sched}$ and for all
player-2 receptive strategies 
 $\pi_2\in \Pi_2^R$, we have $\outcomes(s,\pi_1,\pi_2)\subseteq \Phi$
 (resp. $\Pr_{s}^{\pi_1,\pi_2, \pi_{\sched}}(\Phi)=1$).
Such a winning strategy is said to be a sure (resp. almost sure)
winning receptive strategy.
In computing the winning sets, we shall quantify over \emph{all} strategies,
but modify the objective to take care of time divergence.
Given an objective $\Phi$, let $\timedivbl_1(\Phi) = (\td\cap\ \Phi)\cup 
( \blameless_1 \setminus \td)$, i.e., $\timedivbl_1(\Phi)$ denotes the
set of paths such that either time diverges and $\Phi$ holds, or else 
time converges and player~1 is not responsible for time to converge.
A player-1  strategy  is hence receptive iff it 
ensures that against all player-2 strategies,
the resulting runs belong to $\timedivbl_1(\iruns)$.
Let $\sureu_1^{\TG}(\Phi)$ (resp. $\almostsureu_1^{\TG}(\Phi)$) be the 
set of states in $\TG$ such that for all $s\in \sureu_1^{\TG}(\Phi)$ (resp.
$\almostsureu_1^{\TG}(\Phi)$), player~1 has a strategy $\pi_1\in
\Pi_1$ such that for all strategies 
for all scheduler strategies
$\pi_{\sched}\in \Pi_{\sched}$ and for all
player-2 strategies $\pi_2\in \Pi_2$, we have
$\outcomes(s,\pi_1,\pi_2)\subseteq\, \Phi$ (resp.
$\Pr_s^{\pi_1,\pi_2, \pi_{\sched}}(\Phi)=1$). 
Such a winning strategy is said to be a sure (resp. almost sure)
winning for the non-receptive game.
The following result establishes the connection between $\sure$ and
$\sureu$ sets.

\begin{theorem} [\cite{HenPra06}]
\label{theorem:ReceptiveTimedivbl} 
For all well-formed timed game structures
$\TG$, and for all $\omega$-regular objectives $\Phi$, we have
$\sureu_1^{\TG}(\timedivbl_1(\Phi))= \sure_1^{\TG}(\Phi)$.
\end{theorem}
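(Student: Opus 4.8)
The plan is to prove the two inclusions separately, preceded by a few elementary identities for the operator $\timedivbl_1$. First observe that $\timedivbl_1(\iruns) = (\td\cap\iruns)\cup(\blameless_1\setminus\td) = \td\cup\blameless_1$, and that, by definition, a player-1 strategy $\pi_1$ is receptive precisely when every outcome of $\pi_1$ against every player-2 strategy lies in $\td\cup\blameless_1 = \timedivbl_1(\iruns)$. Since $\timedivbl_1(\Phi)\subseteq\timedivbl_1(\iruns)$ for every objective $\Phi$, any strategy that wins $\timedivbl_1(\Phi)$ in the non-receptive game is automatically receptive. Finally, because the disjunct $\blameless_1\setminus\td$ is disjoint from $\td$, we have $\timedivbl_1(\Phi)\cap\td = \td\cap\Phi$.

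For $\sureu_1^{\TG}(\timedivbl_1(\Phi))\subseteq\sure_1^{\TG}(\Phi)$, take $s\in\sureu_1^{\TG}(\timedivbl_1(\Phi))$ with witnessing strategy $\pi_1\in\Pi_1$. By the observation above, $\pi_1\in\Pi_1^R$. Now fix any $\pi_2\in\Pi_2^R$, any $\pi_{\sched}\in\Pi_{\sched}$, and any $r\in\outcomes(s,\pi_1,\pi_2,\pi_{\sched})$. Since $\pi_1$ and $\pi_2$ are both receptive we have $r\in\td$ (as already noted in the text), and since $\pi_1$ wins the non-receptive game, $r\in\timedivbl_1(\Phi)$; hence $r\in\timedivbl_1(\Phi)\cap\td = \td\cap\Phi\subseteq\Phi$. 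Thus $\pi_1$ is a sure winning receptive strategy for $\Phi$ from $s$.

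For the reverse inclusion $\sure_1^{\TG}(\Phi)\subseteq\sureu_1^{\TG}(\timedivbl_1(\Phi))$, take $s\in\sure_1^{\TG}(\Phi)$ with witnessing receptive strategy $\pi_1\in\Pi_1^R$; I claim $\pi_1$ already wins $\timedivbl_1(\Phi)$ in the non-receptive game. Fix an arbitrary $\pi_2\in\Pi_2$, $\pi_{\sched}\in\Pi_{\sched}$ and $r\in\outcomes(s,\pi_1,\pi_2,\pi_{\sched})$. Receptiveness of $\pi_1$ gives $r\in\td$ or $r\in\blameless_1$; if $r\notin\td$ then $r\in\blameless_1\setminus\td\subseteq\timedivbl_1(\Phi)$ and we are done. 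The remaining case $r\in\td$ is the crux: here I would exhibit a \emph{receptive} player-2 strategy $\pi_2'$ such that $r\in\outcomes(s,\pi_1,\pi_2',\pi_{\sched})$, so that the fact that $\pi_1$ wins $\Phi$ against all receptive player-2 strategies forces $r\in\Phi$, hence $r\in\td\cap\Phi\subseteq\timedivbl_1(\Phi)$. The strategy $\pi_2'$ is obtained by grafting: on every run prefix that is a prefix of $r$, $\pi_2'$ proposes the move that $\pi_2$ proposed along $r$; on every run prefix that has left $r$, $\pi_2'$ switches to a fixed receptive player-2 strategy, which exists by well-formedness. Then $r\in\outcomes(s,\pi_1,\pi_2',\pi_{\sched})$, since $\pi_2'$ agrees with $\pi_2$ on all prefixes of $r$ and $\pi_1,\pi_{\sched}$ are unchanged. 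Moreover $\pi_2'$ is receptive: any of its outcomes either equals $r$ (hence lies in $\td$) or permanently leaves $r$ at some finite stage, from which point it follows the grafted receptive strategy out of the state reached there, so that for the tail — and hence for the whole run — either time diverges or player~2 is responsible for only finitely many transitions.

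I expect the grafting step to be the only point that needs care, namely the verification that $\pi_2'$ is receptive; this is straightforward here because the definition of a receptive strategy already quantifies over all start states, so the grafted strategy remains receptive from whatever state is reached after a deviation from $r$. Everything else reduces to the set identities for $\timedivbl_1$ recorded above together with the already-stated fact that a pair of receptive strategies produces only time-divergent outcomes.
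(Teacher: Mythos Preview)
The paper does not actually prove this theorem; it is quoted from~\cite{HenPra06} without argument. So there is no in-paper proof to compare against, and your proposal stands or falls on its own.

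Your argument is essentially correct, and the grafting construction for $\pi_2'$ in the $\supseteq$ direction is the right idea. One small point deserves care: in the $\subseteq$ direction you assert that a strategy $\pi_1$ winning $\timedivbl_1(\Phi)$ from $s$ is automatically receptive. But the paper's definition of receptiveness quantifies over \emph{all} starting states, not just $s$; winning $\timedivbl_1(\Phi)$ from $s$ only tells you that outcomes \emph{from $s$} lie in $\td\cup\blameless_1$. The fix is the same grafting trick you already use for player~2: redefine $\pi_1$ on run prefixes whose initial state is not $s$ to follow a fixed receptive player-1 strategy (which exists by well-formedness). The modified strategy is then globally receptive and still wins $\timedivbl_1(\Phi)$ from $s$, so it witnesses $s\in\sure_1^{\TG}(\Phi)$. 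Alternatively one can avoid this by noting $\blameless_1\cap\blameless_2=\emptyset$ (at every step at least one player is blamed), so that ``receptive from $s$'' together with $\pi_2\in\Pi_2^R$ already forces outcomes from $s$ into $\td$; but you still need a globally receptive witness to land in $\sure_1^{\TG}(\Phi)$ by definition.

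For the grafted $\pi_2'$, your concluding remark is exactly right: make sure the receptive strategy is \emph{restarted} from the state reached at the moment of deviation (rather than applied to the full history), so that the tail is literally an outcome of a receptive strategy from that state and hence lies in $\td\cup\blameless_2$; the prefix then cannot spoil membership in either set.
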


We observe here that $\timedivbl_1(\Phi)$ is \emph{not} equivalent to
$(\neg\blameless_1) \rightarrow \td\cap\,\Phi$. 
Player~1 loses even if it does not get moves infinitely often, provided time
diverges and the run does not belong to $\Phi$.
\subsection{Timed Automaton Games}
\label{subsection:TimedAutomatonGames}
In this Subsection 
we define a special class of timed game structures, namely,
timed automaton games, and the notion of region equivalence.

\smallskip\noindent{\bf Timed automaton games.}
Timed automata~\cite{AlurD94} suggest a finite syntax for specifying
infinite-state timed game structures.
A \emph{timed automaton game} is a tuple 
$\A=\tuple{L,C,\acts_1, \acts_2,E,\inv}$ 
with the following components:
\begin{itemize}	
\item 
$L$ is a finite set of locations.
\item 
$C$ is a finite set of clocks.
\item 
$\acts_1$ and $\acts_2$ are two disjoint sets of actions for players~1 
and~2, respectively.
\item 
$E \subseteq L\times (\acts_1\cup \acts_2)\times \clkcond(C)\times L
  \times 2^C$
is the edge relation, where the set $\clkcond(C)$ of 
\emph{clock constraints} is generated by the grammar 
  $$\theta ::= x\leq d \mid d\leq x\mid \neg\theta \mid 
    \theta_1\wedge\theta_2$$ 
for clock variables $x\in C$ and nonnegative integer constants~$d$.
For an edge $e=\tuple{l,a_i,\theta,l',\lambda}$, the clock constraint 
$\theta$ acts as a guard on the clock values which specifies when the 
edge $e$ can be taken, and by taking the edge~$e$, the clocks in the set 
$\lambda\subseteq C$ are reset to~0.
We require that for all edges 
$\tuple{l,a_i,\theta',l',\lambda'},\tuple{l,a_i,\theta'',l'',\lambda''}\in E$ 
with $l'\neq l''$, the conjunction $\theta'\wedge\theta''$ is unsatisfiable.
This requirement ensures that a state and a move together uniquely determine 
a successor state.
\item 
$\inv: L\mapsto\clkcond(C)$ is a function that assigns to 
every location an invariant for both players.  
All clocks increase uniformly at the same rate.
When at location~$l$, each player~$i$ must propose a move out of $l$ 
before the invariant $\inv(l)$ expires.
Thus, the game can stay at a location only as long as the invariant is 
satisfied by the clock values.
\end{itemize}
A \emph{clock valuation} is a function  $\kappa : C\mapsto\reals_{\geq 0}$
that maps every clock to a nonnegative real. 
The set of all clock valuations for $C$ is denoted by $K(C)$.
Given a clock valuation $\kappa\in K(C)$ and a time delay 
$\Delta\in\reals_{\geq 0}$, we write 
$\kappa +\Delta$ for the clock valuation in $K(C)$ defined by 
$(\kappa +\Delta)(x) =\kappa(x) +\Delta$ for all clocks $x\in C$.
For a subset $\lambda\subseteq C$ of the clocks, we write 
$\kappa[\lambda:=0]$ for the clock valuation in $K(C)$ defined by 
$(\kappa[\lambda:=0])(x) = 0$ if $x\in\lambda$, 
and $(\kappa[\lambda:=0])(x)=\kappa(x)$ if $x\not\in\lambda$.
A clock valuation $\kappa\in K(C)$ \emph{satisfies} the clock constraint 
$\theta\in\clkcond(C)$, written $\kappa\models \theta$, if the condition 
$\theta$ holds when all clocks in $C$ take on the values specified 
by~$\kappa$.
A \emph{state} $s=\tuple{l,\kappa}$ of the timed automaton game $\A$ is a 
location $l\in L$ together with a clock valuation $\kappa\in K(C)$ such 
that the invariant at the location is  satisfied, that is,
$\kappa\models\inv(l)$.
Let $S$ be the set of all states of~$\A$.
In a state, each player~$i$ proposes a time delay allowed by the 
invariant map~$\inv$, together either with the action~$\bot$, 
or with an action $a_i\in\acts_i$ such that an edge labeled $a_i$ 
is enabled after the proposed time delay.
We require that for $i\in\set{1,2}$ and for all states $s=\tuple{l,\kappa}$, 
if $\kappa\models\inv(l)$, either $\kappa+\Delta\models\inv(l)$ for all
$\Delta\in\reals_{\geq 0}$, or there exist a time delay 
$\Delta\in\reals_{\geq 0}$ and an edge 
$\tuple{l,a_i,\theta,l',\lambda}\in E$ such that 
(1)~$a_i\in\acts_i$ and 
(2)~$\kappa+\Delta\models\theta$ and 
for all $0\le\Delta'\le\Delta$, we have $\kappa+\Delta'\models\inv(l)$, and 
(3)~$(\kappa+\Delta)[\lambda:=0]\models\inv(l')$.
This requirement is necessary (but not sufficient) for well-formedness of 
the game.

The timed automaton game $\A$ defines the following timed game
structure $\symb{\A} =
\tuple{S,\acts_1,\acts_2,\Gamma_1,\Gamma_2,\delta}$:

\begin{itemize}
\item
$S = \set{\tuple{l,\kappa} \mid l\in L \text{ and } \kappa(l) \text{ satisfies
} \gamma(l)}$.
\item
For $i\in\set{1,2}$, the set 
$\Gamma_i(\tuple{l,\kappa})$ contains the following elements:
\begin{enumerate}
\item 
$\tuple{\Delta,\bot_i}$ if for all $0\le\Delta'\le\Delta$, we have $\kappa+\Delta'\models\inv(l)$.
\item
$\tuple{\Delta,a_i}$ if  for all $0\le\Delta'\le\Delta$, we have
$\kappa+\Delta'\models\inv(l)$,  $a_i\in\acts_i$, and 
there exists an edge $\tuple{l,a_i,\theta,l',\lambda}\in E$ 
such that $\kappa+\Delta\models\theta$.
\end{enumerate}
\item
The transition function $\delta$ is specified by:
\begin{enumerate}
\item           
$\delta(\tuple{l,\kappa},\tuple{\Delta,\bot_i}) = \tuple{l,\kappa +\Delta}$.
\item 
$\delta(\tuple{l,\kappa},\tuple{\Delta,a_i}) = 
\tuple{l',(\kappa +\Delta)[\lambda:=0]}$ 
for the unique edge $\tuple{l,a_i,\theta,l',\lambda} \in E$ with 
$\kappa+\Delta\models \theta$.
\end{enumerate}
\end{itemize}
The timed game structure $\symb{\A}$ is not necessarily well-formed, because 
it may contain cycles along which time cannot diverge.
Well-formedness of timed automaton games can be checked in
\EXPTIME~\cite{HenPra06}.
We restrict our focus to well-formed timed automaton games in this paper.
We shall also restrict our attention to randomization over time ---
 a random move of a player in a timed automaton game 
will consist of a distribution over time over
some interval $I$, denoted $\distro^I$, together with a discrete action $a_i$.

\smallskip\noindent{\bf Clock region equivalence.}
Timed automaton games can be solved using a region construction from 
the theory of timed automata~\cite{AlurD94}.
For a real $t\ge 0$, let  $\fractional(t)=t-\floor{t}$ denote the 
fractional part of~$t$.
Given a timed automaton game $\A$, for each clock $x\in C$, let $c_x$
denote the largest integer constant that appears in any clock
constraint involving $x$ in~$\A$ (let $c_x=1$ if there is no clock
constraint involving~$x$).
Two clock valuations $\kappa_1,\kappa_2$ are said to be
\emph{region equivalent}, denoted by $\kappa_1\cong \kappa_2$ 
 when all the following conditions hold.
\begin{enumerate}
\item
  For all  clocks $x$ with $\kappa_1(x) \leq c_x $ and 
  $\kappa_2(x) \leq c_x$, we have  
  $\lfloor \kappa_1(x) \rfloor = \lfloor \kappa_2(x) \rfloor$.
  
\item
  For all  clocks $x,y$ with $\kappa_i(x) \leq c_x $ and 
  $\kappa_i(y) \leq c_y$, we have  
  $\fractional(\kappa_1(x)) \leq  \fractional(\kappa_1(y))$
  iff
  $\fractional(\kappa_2(x)) \leq  \fractional(\kappa_2(y))$.

\item
  For all  clocks $x$ with $\kappa_1(x) \leq c_x $ and 
  $\kappa_2(x) \leq c_x$, we have  
  $\fractional(\kappa_1(x))=0$  iff $ \fractional(\kappa_2(x))=0$.

\item
  For any clock $x$, $\kappa_1(x) > c_x$ iff $\kappa_2(x) > c_x$.
  Two states $\tuple{\kappa_1,l_1}$ and $\tuple{\kappa_1,l_1}$ are
  region equivalent iff $l_1=l_2$ and $\kappa_1\cong \kappa_2$.
\end{enumerate}
A {\em region} $R$ of a timed automaton game $\A$ is an
equivalence class of states with respect to the region equivalence
relation.

\smallskip\noindent{\bf Representing regions.}
We find it useful to sometimes denote a region $R$ by  a tuple 
$\tuple{l,h,\parti(C)}$ where
\begin{itemize}
\item
  $l$ is a location of $\A$.

\item
  $h$ is a function which specifies the integer values of
  clocks $ h : C \rightarrow (\nat\cap [0,M])$ 
  ($M$ is the largest constant in $\A$). 
\item
  $\parti(C)$ is a disjoint partition of the clocks into the tuple
  $\tuple{C_{-1},C_0,\dots C_n}$ such that
 $\set{C_{-1},C_0,\dots C_n \mid \uplus C_i = C, C_i\neq\emptyset \text{ for } i>0 }$.
\end{itemize}
A state $s$ with clock valuation $\kappa$ is then in the region $R$ when all
the following conditions hold.
\begin{enumerate}
\item
  The location of $s$ corresponds to the location of $R$.
\item
  For all  clocks $x$ with $\kappa(x) \leq c_x $, 
  $\lfloor \kappa(x) \rfloor = h(x)$.
\item
  For $\kappa(x) >  c_x$, $h(x) =c_x$.
\item
  For all pair of clocks $(x,y)$, with $\kappa(x) \leq c_x$ and 
  $\kappa(y)\leq c_y$, we have 
  $\fractional(\kappa(x)) < \fractional(\kappa(y))$ iff 
  $ x\in C_i \text{ and } y\in C_j \text{ with } 0\leq i<j$ 
  (so, $x,y \in C_k$ with $k\geq 0$ implies
  $\fractional(\kappa(x)) = \fractional(\kappa(y))$).
\item
  For $\kappa(x) \leq c_x$, $\fractional(\kappa(x)) = 0$ iff $ x\in C_0$.
\item
  $x\in C_{-1}$ iff $\kappa(x) > c_x$.
\end{enumerate}
There are finitely many clock regions;
more precisely, the number of clock regions is bounded by 
$|L|\cdot\prod_{x\in C}(c_x+1)\cdot |C|!\cdot 2^{2|C|}$.

\smallskip\noindent{\bf Region  equivalent  runs.}
For a state $s\in S$, we write $\reg(s)\subseteq S$ for the clock region 
containing~$s$.
For a run $r$, we let the \emph{region flow sequence} $\reg(r)$ be the
sequence of regions
$R_0,R_1,\cdots$ which intuitively denotes the regions
encountered (including those during time passage specified by moves) in $r$.
Formally, $\reg(r)$ is the region sequence $R_0,R_1,\cdots$ is such that
there exist $i_0=0 < i_1 < i_2\dots$ with
(1)~$\reg(r[j]) = R_{i_j}$;
(2)~$R_{k_1}\neq R_{k_2}$ for $i_j\leq k_1<k_2<i_{j+1}$ for any $i_j$; and
(3)~if  $r=\run$, and $r[j+1]=\delta(r[j],m_p^j)$ (for $p\in\set{0,1}$), with
$m_p^j=\tuple{\Delta,a}$; then $R_{i_j}, R_{i_j + 1},R_{i_j +2},\dots R_{i_{j+1}-1}$
are the unique regions encountered when $\Delta$ time passes from $r[j]$.
The region flow sequence of a run is unique.
Two runs $r,r'$ are \emph{region equivalent} 
if (1)~their region flow sequences are the same, and 
(2)~$\reg(r[j])=\reg(r'[j])$ for all $j\geq 0$.
Region equivalence for finite runs can be defined similarly.
We similarly define location equivalence for runs (note that a location
flow sequence is just the sequence of locations of the states in a run).
An $\omega$-regular objective $\Phi$ is a location objective if for all 
location-equivalent runs 
$r,r'$, we have $r\in \Phi$ iff $r'\in \Phi$.
A parity index function $\Omega$ is a location  parity index function if
$\Omega(s_1)=\Omega(s_2)$ whenever $s_1$ and $s_2$ have the same location.
Henceforth, we shall restrict our attention to location objectives.

\smallskip\noindent{\bf Region equivalent strategies.}
Given a strategy $\pi$, a run prefix $r[0..k]$, a region $R$, and an
action $a_i\in A_i^{\bot}$, let $\mathcal{W}(\pi,r[0..k],R,a_i)$ denote
the set $\set{\tuple{\Delta, a_i} \mid 
  \tuple{\Delta, a_i} \in \support( \pi(r[0..k]))\text{ and }
  \reg(r[k]+\Delta) = R})$.
A strategy $\pi_1$ is a \emph{region strategy}, if for all run prefixes 
$r_1[0..k]$ and $r_2[0..k]$
such that $\reg(r_1[0..k])=\reg(r_2[0..k])$, 
and for all regions $R$ and player-1
actions $a_1\in A_1^{\bot}$, we have 
(1)~$\mathcal{W}(\pi_1,r_1[0..k],R,a_1) =\emptyset$ iff
$ \mathcal{W}(\pi_1,r_2[0..k],R,a_1)=\emptyset$;
and
(2)~$P_{\pi_1}^{r_1[0..k]}(\mathcal{W}(\pi_1,r_1[0..k],R,a_1)) =
P_{\pi_1}^{r_2[0..k]}(\mathcal{W}(\pi_1,r_2[0..k],R,a_1))$.
The definition for player~2 strategies is analogous.
Two region strategies $\pi_1$ and $\pi_1'$ are region-equivalent if for all 
run prefixes $r[0..k]$, and for all regions $R$ and player-1 actions
$a_1\in A_1^{\bot}$, we have
(1)~$\mathcal{W}(\pi_1,r[0..k],R,a_1) = \emptyset$ iff
$ \mathcal{W}(\pi_1',r[0..k],R,a_1) = \emptyset$;
and
(2)~$P_{\pi_1}^{r[0..k]}(\mathcal{W}(\pi_1,r[0..k],R,a_1)) =
P_{\pi_1'}^{r[0..k]}(\mathcal{W}(\pi_1',r[0..k],R,a_1))$.


\subsection{Winning Sets and Winning Strategies
for Timed Automaton Games}
\label{subsection:ResultsTimedAutomatonGames}

In this Subsection we present the computation of winning sets for 
timed automaton games
based on the framework of~\cite{AFHM+03}, and derive various basic properties of 
winning strategies.

\smallskip\noindent{\bf Encoding Time-Divergence by Enlarging the Game 
Structure.}
Given a timed automaton game $\A$, consider the enlarged game structure
$\w{\A}$ (based mostly on the construction in~\cite{AFHM+03}) 
with the state space $S^{\w{\A}} \subseteq S \times
\reals_{[0,1)}\times\set{\true,\false}^2$,
and an augmented transition relation $\delta^{\w{\A}}:
S^{\w{\A}}\times (M_1 \cup M_2) \mapsto S^{\w{\A}}$.  In an
augmented state $\tuple{s,\z,\tick,\bl_1} \in S^{\w{\A}}$, the
component $s\in S$ is a state of the original game structure
$\symb{\A}$, $\z$ is value of a fictitious clock $z$ which gets reset to 0
every time it crosses 1 (i.e., if $\kappa'$ is the clock valuation resulting
from letting time $\Delta$ elapse from an initial clock valuation $\kappa$, 
then, $\kappa'(z) = (\kappa(z)+\Delta)\mod 1$),
$\tick$ is true 
iff $z$ crossed  1 at last transition and $\bl_1$ is 
true if player~1 is to blame 
for the last transition (ie., $\Blfunc_1$ is true for the last transition).
Note that any strategy $\pi_i$ in $\symb{\A}$, can be considered a strategy in
$\w{\A}$.
The values of the clock $z$, $\tick$ and $\bl_1$ correspond to the values
each player keeps in memory in constructing his strategy.
Given any initial value of $\z=\z^*,\tick=\tick^*,\bl_1=\bl_1^*$;
any run $r$ in $\A$ has a corresponding unique run $\w{r}$ in 
$\w{\A}$ with $\w{r}[0]=\tuple{r[0],\z^*,\tick^*,\bl_1^*}$ such 
 that $r$ is a projection of $\w{r}$ onto $\A$. 
For an objective $\Phi$, we can now encode time-divergence as the 
objective:  
$\timedivbl_1(\Phi)=(\Box\Diamond \tick \rightarrow \Phi)\ \wedge\ 
(\neg\Box\Diamond\tick \rightarrow \Diamond\Box \neg\bl_1)$, where
$\Box$ and $\Diamond$ are the standard LTL modalities (``always'' and
``eventually'' respectively), the combinations $\Box\Diamond$ and
$\Diamond\Box$ denoting
``infinitely often'' and ``all but for a finite number of steps''  
respectively.
This is formalized in the following  proposition.

\begin{proposition}[$\timedivbl_1()$ in terms of $\tick, \bl_1$]
\label{proposition:ExpandedGame}
Let $\A$ be a timed automaton game  and
$\w{\A}$ be the corresponding enlarged game
structure.
Let $\Phi$ be an objective on $\A$.
Consider a run $r=s^0,\tuple{m_1^0,m_2^0}, s^1,\tuple{m_1^1,m_2^1}, \dots$ in 
$\A$.
Let $\w{r}$ denote the corresponding run in $\w{\A}$ such that
$\w{r}= \tuple{s^0,\z^0,\tick^0,\bl^0_1},\tuple{m_1^0,m_2^0}, 
\tuple{s^1,\z^1,\tick^1,\bl_1^1},\tuple{m_1^1,m_2^1}$ with
$\z^0=0, \tick^0=\false,\bl_1^0=\false$.
Then $r \in  \timedivbl_1(\Phi)$ iff $\w{r}\in
\left((\Box\Diamond \tick \rightarrow \Phi)\ \wedge\ 
(\neg\Box\Diamond\tick \rightarrow \Diamond\Box \neg\bl_1)\right)$ 
\end{proposition}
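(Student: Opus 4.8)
The proposition relates two descriptions of the time-divergence-aware objective $\timedivbl_1(\Phi)$: the semantic one, $(\td\cap\Phi)\cup(\blameless_1\setminus\td)$, defined directly on runs of $\A$, and the syntactic LTL one, $(\Box\Diamond\tick\rightarrow\Phi)\wedge(\neg\Box\Diamond\tick\rightarrow\Diamond\Box\neg\bl_1)$, evaluated on the lifted run $\w{r}$ in $\w{\A}$. Since both formulas are conjunctions that case-split on whether time diverges, the whole thing reduces to three bridging facts about the bookkeeping components $\z,\tick,\bl_1$. I should point out that proving it requires no cleverness, only an unwinding of the definitions of $\w{\A}$.

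**Plan.** The plan is to prove the two directions together by establishing three lemmas about the correspondence run $\w{r}$.

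First I would show that \emph{time diverges along $r$ iff $\tick$ is true infinitely often along $\w{r}$}, i.e. $r\in\td \iff \w{r}\models\Box\Diamond\tick$. For the forward direction: if $\sum\delay(m_1^j,m_2^j)=\infty$, then the accumulated delay eventually exceeds any bound, so the fictitious clock $z$ — which advances by exactly the elapsed delay modulo $1$ and is reset each time it reaches $1$ — must wrap around past $1$ infinitely often; each such wrap sets $\tick$ true at that transition. For the converse (contrapositive): if time converges, $\sum_{j\ge k}\delay(m_1^j,m_2^j)<1$ for some $k$ (indeed the tail sums go to $0$), so after step $k$ the value of $z$ increases by a total of less than $1$ and never again crosses $1$; hence $\tick$ is false for all transitions after $k$, so $\neg\Box\Diamond\tick$. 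One must be slightly careful that a single move $\tuple{\Delta,a_i}$ with large $\Delta$ may cause $z$ to wrap several times, but the definition records $\tick$ as "$z$ crossed $1$ at the last transition," which is exactly what is needed — a crossing occurs iff $\lfloor \z+\Delta \rfloor \ge 1$ when starting from $\z \in [0,1)$ with the delay bounded appropriately, and for a convergent tail no crossing occurs at all.

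Second, I would show that \emph{player~1 is blamed only finitely often along $r$ iff $\w{r}\models\Diamond\Box\neg\bl_1$}. This is immediate from the definition: $\bl_1$ at transition $j$ of $\w{r}$ is set to the truth value of $\Blfunc_1(s_j,m_1^j,m_2^j,s_{j+1})$, so "$\bl_1$ false for all but finitely many steps" is literally the definition of $r\in\blameless_1$. Third, $\Phi$ is a location (hence run) objective on $\A$ and $r$ is the projection of $\w{r}$, so $\w{r}\models\Phi$ (interpreting $\Phi$ on the projection) iff $r\in\Phi$; combined with the first lemma, $\w{r}\models(\Box\Diamond\tick\rightarrow\Phi)$ iff ($r\in\td$ implies $r\in\Phi$). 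Putting the three together: $\w{r}$ satisfies the conjunction iff ($r\in\td\Rightarrow r\in\Phi$) and ($r\notin\td\Rightarrow r\in\blameless_1$), which is exactly $r\in(\td\cap\Phi)\cup(\blameless_1\setminus\td)=\timedivbl_1(\Phi)$.

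**Main obstacle.** There is no deep obstacle; the only genuinely delicate point is the first lemma — matching the purely quantitative statement "$\runtime(r,k)\to\infty$" with the discrete event "$\tick$ true infinitely often." The subtlety is that $\tick$ flags a crossing \emph{of the last transition}, and a single transition with delay $\Delta\gg 1$ triggers $\tick$ just once even though $z$ conceptually wraps $\lfloor\Delta\rfloor$ (or so) times; one has to check that this still yields "$\tick$ infinitely often $\iff$ time diverges," which holds because if time diverges there are infinitely many transitions carrying positive delay whose partial sums are unbounded, forcing infinitely many wrap events, while if time converges the tail delay sum is below $1$ and no wrap occurs at all after some point. I would state this carefully but expect the verification to be routine bookkeeping over the definition of the $\mod 1$ update of $z$ given in the construction of $\w{\A}$.
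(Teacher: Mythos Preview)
Your proposal is correct and follows essentially the same decomposition as the paper's proof: both reduce the equivalence to (i) $r\in\td \iff \w{r}\models\Box\Diamond\tick$, (ii) $r\in\blameless_1 \iff \w{r}\models\Diamond\Box\neg\bl_1$, and (iii) the $\Phi$-correspondence via projection, then combine. You supply considerably more detail on point~(i) than the paper (which just says ``time diverges iff time crosses integer boundaries infinitely often, which happens iff $\Box\Diamond\tick$ holds''), and your care about large-$\Delta$ moves is warranted but, as you note, routine.
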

\begin{proof}
Time diverges in the run $r$ iff it diverges in the corresponding run $\w{r}$.
Also, the run $r$ belongs to $\blameless_1$ iff the run 
$\w{r}$ belongs to $\blameless_1$, which happens iff 
player~1 is blamed only
finitely often, ie., $\Diamond\Box \neg\bl_1$ holds.
Hence $r \in  \timedivbl_1(\Phi)$ iff $\w{r}\in\timedivbl_1(\Phi)$.
The result follows from noting that time diverges iff time crosses 
integer boundaries infinitely often, which happens iff $\Box\Diamond \tick$
holds.
%
\qed
\end{proof}

The following lemma states that because of the correspondence between
$\A$ and $\w{\A}$, we can obtain the winning sets of  $\A$ by obtaining
the winning sets in $\w{\A}$.

\begin{lemma}[Equivalence of winning sets of $\bm{\A}$ and $\bm{\w{\A}}$]
\label{lemma:ExpandedGame}
Let $\A$ be a timed automaton game  and
$\w{\A}$ be the corresponding enlarged game
structure.  
Let $\Phi$ be an objective on $\A$. 
Given any state $s$ of $\A$, we have
$s\in\sureu_1^{\A}(\timedivbl_1(\Phi))$ iff 
$\tuple{s, 0, \false,\false}\in 
\sureu_1^{\w{\A}}\left((\Box\Diamond \tick \rightarrow \Phi)\ \wedge\ 
(\neg\Box\Diamond\tick \rightarrow \Diamond\Box \neg\bl_1)\right)$.
\end{lemma}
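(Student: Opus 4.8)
The plan is to establish the equivalence by transporting winning strategies back and forth between $\A$ and $\w{\A}$ using the correspondence between runs that was already set up: every run $r$ in $\A$ together with a fixed initialization $\z^* ,\tick^*,\bl_1^*$ of the auxiliary components lifts to a unique run $\w{r}$ in $\w{\A}$, and conversely every run of $\w{\A}$ projects onto a run of $\A$; moreover this correspondence is a bijection on the set of runs starting at a given $\A$-state with initialization $\tuple{0,\false,\false}$, and it commutes with the moves (both players have the same move sets available at corresponding states, since the auxiliary components $\z,\tick,\bl_1$ are updated deterministically and do not restrict $\Gamma_1$ or $\Gamma_2$). The key observation making this work is that the auxiliary components are a deterministic function of the history in $\A$: given the run prefix in $\A$ one can compute $\z,\tick,\bl_1$, so a strategy in one game can be simulated in the other without loss of information.

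First I would fix a state $s$ of $\A$ and argue the forward direction. Suppose $s\in\sureu_1^{\A}(\timedivbl_1(\Phi))$, witnessed by a player-1 strategy $\pi_1\in\Pi_1$. By the remark following the construction, $\pi_1$ is already a strategy in $\w{\A}$ (it simply ignores the extra components). I claim $\pi_1$ is winning from $\tuple{s,0,\false,\false}$ in $\w{\A}$ for the objective $(\Box\Diamond\tick\rightarrow\Phi)\wedge(\neg\Box\Diamond\tick\rightarrow\Diamond\Box\neg\bl_1)$. Take any player-2 strategy $\pi_2$ and scheduler $\pi_{\sched}$ in $\w{\A}$; project $\pi_2,\pi_{\sched}$ to strategies in $\A$ (again by ignoring the extra components on the outputs — the projection is well-defined since moves are the same). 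Any outcome $\w{r}\in\outcomes(\tuple{s,0,\false,\false},\pi_1,\pi_2,\pi_{\sched})$ in $\w{\A}$ projects to an outcome $r$ of the projected strategies in $\A$, and since $r\in\timedivbl_1(\Phi)$ by assumption, Proposition~\ref{proposition:ExpandedGame} gives $\w{r}\in(\Box\Diamond\tick\rightarrow\Phi)\wedge(\neg\Box\Diamond\tick\rightarrow\Diamond\Box\neg\bl_1)$. Hence $\tuple{s,0,\false,\false}$ is in the $\w{\A}$ winning set.

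For the converse, suppose $\tuple{s,0,\false,\false}\in\sureu_1^{\w{\A}}(\cdots)$ via a strategy $\w{\pi}_1$ in $\w{\A}$. Here I would use that the auxiliary components are computable from the $\A$-history: define a player-1 strategy $\pi_1$ in $\A$ by, on a run prefix $r[0..k]$, reconstructing the unique augmented prefix $\w{r}[0..k]$ (starting from $\tuple{s,0,\false,\false}$) and outputting $\w{\pi}_1(\w{r}[0..k])$ with the auxiliary part stripped. Then outcomes of $\pi_1$ against arbitrary $\pi_2,\pi_{\sched}$ in $\A$ lift to outcomes of $\w{\pi}_1$ in $\w{\A}$, which satisfy the $\tick/\bl_1$ objective, hence by Proposition~\ref{proposition:ExpandedGame} the projected runs lie in $\timedivbl_1(\Phi)$; so $s\in\sureu_1^{\A}(\timedivbl_1(\Phi))$.

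The main obstacle — though it is more a matter of care than of difficulty — is checking that the move assignments $\Gamma_i$ and the transition structure of $\w{\A}$ genuinely mirror those of $\A$, so that strategies really can be transferred in both directions and outcome sets correspond exactly (in particular that no spurious blocking of time is introduced by the $z$-clock and that the scheduler's tie-breaking role is unaffected); this relies on the $z$-clock being fictitious, i.e. never appearing in any guard or invariant of $\A$, so that $\inv$ and $E$ are untouched. Once that bookkeeping is in place, the equivalence follows immediately from Proposition~\ref{proposition:ExpandedGame} applied in both directions.
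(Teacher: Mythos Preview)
Your proposal is correct and follows essentially the same approach as the paper: both arguments rest on the observation that the auxiliary components $\z,\tick,\bl_1$ merely observe and do not restrict transitions, yielding a bijection between runs (and hence a correspondence between strategies) that allows Proposition~\ref{proposition:ExpandedGame} to be invoked directly. Your write-up is more explicit about how strategies are transported in each direction, but the underlying idea and structure are the same.
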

\begin{proof}
Consider a state $s$ of $\A$, and a corresponding state 
$\tuple{s, 0, \false,\false}$ of $\w{\A}$.
The variables $\z,\tick$ and $\bl_1$ only ``observe'' properties in $\w{\A}$, they
do not restrict transitions. 
Thus, given a run $r$ of $\A$ from $s$, there is a unique run $\w{r}$ of $\w{\A}$ 
from  $\tuple{s, 0, \false,\false}$ and vice versa.
Similarly, any player-$i$ strategy $\pi_i$ in $\A$ corresponds to a strategy 
$\w{\pi}_i$ in $\w{\A}$; and any strategy $\w{\pi}_i$ in $\w{\A}$ corresponds
to a strategy $\pi_i$ in $\A$ such that both strategies propose the same moves for
corresponding runs.
The result then follows from Proposition~\ref{proposition:ExpandedGame}.
\qed
\end{proof}

Let $\w{\kappa}$ be a valuation for the clocks in $\w{C}=C\cup\set{z}$.
A state of $\w{\A}$ can then be considered as 
$\tuple{\tuple{l,\w{\kappa}},\tick,\bl_1}$.
We extend the clock equivalence relation to these expanded states: 
$\tuple{\tuple{l,\w{\kappa}}\tick,\bl_1}\cong 
\tuple{\tuple{l',\w{\kappa}'},\tick',\bl_1'}$ 
iff $l=l', \tick=\tick', \bl_1=\bl_1'$ and $\w{\kappa}\cong\w{\kappa}'$.
We let $\tuple{l,\tick,\bl_1}$ be the ``locations'' in $\w{\A}$.
For every $\omega$-regular location objective $\Phi$ of $\A$, we have
$\timedivbl(\Phi)$ to be an $\omega$-regular location objective of $\w{\A}$.

We start first recall 
the statement of a classical result of~\cite{AlurD94} that the region 
equivalence 
relation induces a time abstract bisimulation on the regions.

\begin{lemma}[\cite{AlurD94}]
\label{lemma:Bisimulation}
Let $Y,Y'$ be  regions in the timed game structure
$\A$.
Suppose player~$i$ has a move from $s_1\in Y$
to $s_1'\in Y'$, for $i\in\set{1,2}$.
Then, for any $s_2\in Y$,
player~$i$ has a move from $s_2$ to some $s_2'\in Y'$.
\end{lemma}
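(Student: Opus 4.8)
The plan is to deduce this from the standard ``fundamental property of clock regions'' together with two routine closure facts about the equivalence $\cong$. Since $s_1,s_2$ lie in the same region $Y$, they share a location $l$ and $\kappa_1\cong\kappa_2$, where $s_1=\tuple{l,\kappa_1}$ and $s_2=\tuple{l,\kappa_2}$. A move of player~$i$ from $s_1$ is some $m=\tuple{\Delta,a_i}\in\Gamma_i(s_1)$ with $\delta(s_1,m)=s_1'$, and by the definition of $\symb{\A}$ there are only two cases: the time-elapse move $a_i=\bot_i$, with $\delta(s_1,m)=\tuple{l,\kappa_1+\Delta}$; and a discrete move $a_i\in\acts_i$ witnessed by an edge $\tuple{l,a_i,\theta,l',\lambda}\in E$ with $\kappa_1+\Delta\models\theta$, with $\delta(s_1,m)=\tuple{l',(\kappa_1+\Delta)[\lambda:=0]}$. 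In both cases membership in $\Gamma_i(s_1)$ also requires $\kappa_1+t\models\inv(l)$ for all $t\in[0,\Delta]$.

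First I would record the following facts. (i) \emph{Clock constraints are unions of regions}: if $\kappa\cong\kappa'$ then $\kappa\models\theta$ iff $\kappa'\models\theta$ for every $\theta\in\clkcond(C)$; this is an easy induction over the grammar of $\theta$, using that any constant $d$ occurring with a clock $x$ satisfies $d\le c_x$, so the truth of $x\le d$ and $d\le x$ is fixed by $\lfloor\kappa(x)\rfloor$, by whether $\fractional(\kappa(x))=0$, and by whether $\kappa(x)>c_x$ --- exactly the data that $\cong$ preserves. (ii) \emph{Resets preserve $\cong$}: if $\kappa\cong\kappa'$ then $\kappa[\lambda:=0]\cong\kappa'[\lambda:=0]$, immediate from the definition, since clocks in $\lambda$ acquire integer part $0$, fractional part $0$, and value below their bound in both valuations, while the relations among the other clocks are unchanged. (iii) \emph{Time-abstract bisimulation on valuations}: if $\kappa\cong\kappa'$ then for every $\Delta\ge0$ there is $\Delta'\ge0$ with $\kappa+\Delta\cong\kappa'+\Delta'$ such that, moreover, the (finite) sequence of distinct regions traversed by $\kappa+t$ for $t\in[0,\Delta]$ equals, region by region, the sequence traversed by $\kappa'+t$ for $t\in[0,\Delta']$.

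Granting these, the lemma follows. Apply (iii) to $\kappa_1\cong\kappa_2$ and the delay $\Delta$ of $m$ to obtain $\Delta'$ with $\kappa_2+\Delta'\cong\kappa_1+\Delta$ and matching traversed-region sequences, and put $m'=\tuple{\Delta',a_i}$. Since $m\in\Gamma_i(s_1)$ we have $\kappa_1+t\models\inv(l)$ for all $t\in[0,\Delta]$; every region met by $\kappa_2+\cdot$ on $[0,\Delta']$ is also met by $\kappa_1+\cdot$ on $[0,\Delta]$, and $\inv(l)$ is a union of regions by (i), so $\kappa_2+t\models\inv(l)$ for all $t\in[0,\Delta']$. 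If $a_i\in\acts_i$, then $\kappa_1+\Delta\models\theta$ gives $\kappa_2+\Delta'\models\theta$ by (i), so the same edge is enabled, whence $m'\in\Gamma_i(s_2)$ and $\delta(s_2,m')=\tuple{l',(\kappa_2+\Delta')[\lambda:=0]}$, which by (ii) is region equivalent to $\tuple{l',(\kappa_1+\Delta)[\lambda:=0]}=s_1'$ (in particular it satisfies $\inv(l')$ and so is a genuine state). If $a_i=\bot_i$, then already $m'\in\Gamma_i(s_2)$ and $\delta(s_2,m')=\tuple{l,\kappa_2+\Delta'}\cong\tuple{l,\kappa_1+\Delta}=s_1'$. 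Either way $\delta(s_2,m')\in\reg(s_1')=Y'$, giving the desired move from $s_2$.

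The parts that need care are only mildly technical: (i) and (ii) are bookkeeping, and the real content --- the only place where all four clauses defining $\cong$ are used --- is the fundamental property (iii). The main obstacle there is the combinatorial argument that as $t$ increases the region of $\kappa+t$ changes only at ``events'' (some $\kappa(x)+t$ reaching an integer $\le c_x$, two fractional parts becoming equal, or a clock passing its bound $c_x$), that region-equivalent $\kappa,\kappa'$ pass through exactly the same ordered list of events, and that one can therefore pick $\Delta'$ landing in the region of $\kappa+\Delta$ (matching it on the nose when $\kappa+\Delta$ lies on a region boundary, and anywhere in the corresponding open region otherwise). This is precisely the region argument of~\cite{AlurD94}, which we invoke.
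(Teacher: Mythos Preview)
Your proposal is correct; the paper does not actually prove this lemma but merely cites it as a classical result from~\cite{AlurD94}, so there is nothing to compare against beyond the reference itself. Your argument is precisely the standard region-bisimulation proof from that source (closure of $\cong$ under constraints, resets, and the time-successor ``fundamental property''), so you are faithfully supplying what the paper chose to omit.
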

Let $Y,Y_1',Y_2'$ be  regions.
We prove in Lemma~\ref{lemma:RegionsBeatRegions} that 
one of the following two conditions hold:
(a) for all states in $Y$ there is a move for player~1 with destination in 
$Y_1'$, such 
that against all player~2 moves with destination in $Y_2'$, the next
state is guarenteed to be in $Y_1'$; or
(b) for all states in $Y$ for all moves for player~1 with destination in $Y_1'$
there is a move of player~2 to ensure that the next state is in $Y_2'$; or
(c) if $Y_1'=Y_2'$ (except for the $\bl_1$ component), then player~2 can pick 
the same time delay as player~1 and hence the winning move is decided by the
scheduler.
The proof of the lemma is in the appendix.

\begin{lemma}[Regions suffice for determining winning move]
  \label{lemma:RegionsBeatRegions}
Let $\A$ be a timed automaton game, and let $Y,Y_1',Y_2'$ be
regions in the corresponding enlarged timed game structure
$\w{\A}$.
Suppose player-$i$ has a move $\tuple{\Delta_i,\bot_i}$ 
from some $\w{s}\in Y$
to $\w{s}_{i}\in Y_i'$, for $i\in\set{1,2}$.
Then,  for all states $\widehat{s}\in Y$ and for all  player-1 moves
  $m_1^{\widehat{s}} = \tuple{\Delta_1,a_1}$ with
  $\widehat{s} + \Delta_1 \in Y_1'$, 
one of the following cases must hold.
\begin{enumerate}
\item 
  $Y_1'\neq Y_2'$ and
  for all moves $m_2^{\widehat{s}}=\tuple{\Delta_2,a_2}$  of player-2 with
  $\widehat{s} + \Delta_2 \in Y_2'$, we have $\Delta_1 < \Delta_2$ 
  (and hence $\Blfunc_1(\widehat{s},m_1^{\widehat{s}},m_2^{\widehat{s}},\widehat{\delta}
(\widehat{s},m_1^{\widehat{s}}))=\true$ and 
$\Blfunc_2(\widehat{s},m_1^{\widehat{s}},m_2^{\widehat{s}},
\widehat{\delta}(\widehat{s},m_2^{\widehat{s}}))=\false$).
\item 
  $Y_1'\neq Y_2'$ and
  for all 
  player-2  moves $m_2^{\widehat{s}}= \tuple{\Delta_2,a_2}$ with
  $\widehat{s} + \Delta_2\in Y_2'$, we have $\Delta_2 < \Delta_1$
  (and hence $\Blfunc_2(\widehat{s},m_1^{\widehat{s}},m_2^{\widehat{s}},
  \widehat{\delta}(\widehat{s},m_2^{\widehat{s}}))=\true$ and
$\Blfunc_1(\widehat{s},m_1^{\widehat{s}},m_2^{\widehat{s}},\widehat{\delta}
(\widehat{s},m_1^{\widehat{s}}))=\false$).
\item
  $Y_1'=Y_2'$ and
  there exists a  player~2 move $m_2^{\widehat{s}}= \tuple{\Delta_2,a_2}$
  with
  $\widehat{s} + \Delta_2\in Y_2'$ 
  such that $\Delta_1 = \Delta_2$ 
  (and hence 
$\Blfunc_1(\widehat{s},m_1^{\widehat{s}},m_2^{\widehat{s}},\widehat{\delta}
(\widehat{s},m_1^{\widehat{s}}))=\true$ and 
$\Blfunc_2(\widehat{s},m_1^{\widehat{s}},m_2^{\widehat{s}},
\widehat{\delta}(\widehat{s},m_2^{\widehat{s}}))=\true$).

\end{enumerate}
\end{lemma}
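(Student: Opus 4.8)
The plan is to reduce the statement to the one-dimensional behavior of the fictitious clock $z$ together with the integer/fractional structure of the original clocks, and then argue by a careful case split on how the target regions $Y_1'$ and $Y_2'$ sit relative to the source region $Y$. First I would fix an arbitrary $\widehat{s}\in Y$ and a player-1 move $m_1^{\widehat{s}}=\tuple{\Delta_1,a_1}$ with $\widehat{s}+\Delta_1\in Y_1'$. The key observation is that, given the source region $Y$ and a target region $Y'$, the set of time delays $\Delta$ for which $\widehat{s}+\Delta$ lies in $Y'$ is, by the standard region analysis of~\cite{AlurD94}, either a single point or an open interval whose endpoints are determined purely by the region data $\tuple{l,h,\parti(\w{C})}$ of $Y$ and $Y'$ — and crucially this set is the \emph{same} for every $\widehat{s}\in Y$ in the following sense: the ordering of ``time to reach $Y_1'$'' versus ``time to reach $Y_2'$'' from $\widehat{s}$ is invariant across $Y$. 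This is precisely what Lemma~\ref{lemma:Bisimulation} gives us at the level of existence of moves; I would strengthen it to a quantitative statement about delays.

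The main steps, in order, are: (1) Using the hypothesis that player-$i$ has a $\bot_i$ move from some $\w s\in Y$ into $Y_i'$, invoke Lemma~\ref{lemma:Bisimulation} to conclude that from \emph{every} state of $Y$ there is a time delay reaching $Y_i'$, for both $i$; this guarantees the relevant delay sets are nonempty and hence the case analysis is exhaustive. (2) Show that ``being in region $Y_j'$'' after a delay $\Delta$ from $\widehat s$ depends only on $\reg(\widehat s)=Y$ and on which ``slice'' of the time-successor sequence of $Y$ the region $Y_j'$ occupies; in particular the relative temporal order of $Y_1'$ and $Y_2'$ in the time-successor sequence of $Y$ is fixed. (3) If $Y_1'$ strictly precedes $Y_2'$ in that sequence (they are distinct and the ``$Y_1'$-slice'' comes first), then any delay landing in $Y_1'$ is strictly smaller than any delay landing in $Y_2'$, giving case~1; symmetrically, if $Y_2'$ strictly precedes $Y_1'$ we get case~2. (4) If $Y_1'$ and $Y_2'$ coincide up to the $\bl_1$-component (which, being an ``observation'' component, does not affect which delays are available — this is exactly the remark made before Lemma~\ref{lemma:ExpandedGame}), then they occupy the same slice, so player~2 can choose $\Delta_2=\Delta_1$, and since $\Delta_1\le\Delta_2$ and $\Delta_2\le\Delta_1$ both hold, both blame predicates fire, giving case~3. (5) Finally, read off the $\Blfunc_1,\Blfunc_2$ values in each case directly from the definition $\Blfunc_i(s,\tuple{\Delta_1,a_1},\tuple{\Delta_2,a_2},s')=(\Delta_i\le\Delta_{\opp i}\wedge\delta(s,\tuple{\Delta_i,a_i})=s')$.

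I expect the main obstacle to be step~(2)/(3): making precise, and proving, that distinct regions $Y_1'\ne Y_2'$ that are both reachable by time passage from $Y$ are \emph{totally ordered} by the time at which they are entered, uniformly over the choice of $\widehat s\in Y$ — i.e. that one cannot have $\widehat s+\Delta_1\in Y_1'$ and $\widehat s+\Delta_2\in Y_2'$ with $\Delta_1<\Delta_2$ for one $\widehat s$ but $\Delta_2<\Delta_1$ for another. This requires arguing about the fractional-part ordering encoded in $\parti(\w C)$ and how it evolves as time elapses (clocks cross integer boundaries, the $C_0$ block empties out and refills, the partition cyclically rotates), and checking that two \emph{different} partitions/integer-vectors correspond to time-successor slices that are linearly ordered. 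The subtlety that $Y_1'$ and $Y_2'$ might differ only in the boolean $\bl_1$ — in which case they are ``the same region'' for the purpose of delay availability but formally distinct states — is handled by the convention, already in place, that $\bl_1$ (like $\tick$) is a passive observation and the clock-equivalence/time-passage structure ignores it; once this is noted, the $Y_1'=Y_2'$ case of the lemma is exactly the $\parti,h$-equal-but-$\bl_1$-possibly-different case, which is step~(4). The remaining bookkeeping — that an open-interval delay set always contains points arbitrarily close to its endpoints, so ``for all $m_2$'' quantifications go through — is routine and I would not belabor it.
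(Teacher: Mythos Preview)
The paper defers its proof of this lemma to an appendix that is not included in the source you were given, so a line-by-line comparison is not possible; your outline does match the paper's informal trichotomy~(a)/(b)/(c) described just before the lemma, and is the natural region-based argument.

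That said, the step you flag as the main obstacle is more problematic in $\w{\A}$ than you indicate. The fictitious clock $z$ is defined to wrap continuously modulo~$1$ (the paper sets $\kappa'(z)=(\kappa(z)+\Delta)\bmod 1$ for a pure time-elapse of~$\Delta$). Consequently, once every original clock $x\in C$ has exceeded its bound $c_x$, the only remaining clock that contributes to the region is $z$, and the regions ``all $x>c_x$, $z=0$, $\tick=\true$'' and ``all $x>c_x$, $z\in(0,1)$, $\tick=\true$'' are each visited by an infinite arithmetic progression of delays from any fixed $\widehat{s}$, with the two delay sets interleaving. In that regime the time-successor regions of $Y$ are \emph{not} totally ordered by delay: for a player-1 delay $\Delta_1$ landing in the open region one can find player-2 delays into the point region both strictly below and strictly above $\Delta_1$, so none of the three cases of the lemma fires. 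Your plan assumes the total order can always be established from the partition data $\parti(\w C)$, but the wraparound of $z$ defeats exactly that. Before attempting step~(2)/(3) you should pin down whether invariants exclude this regime, whether the lemma is tacitly restricted (e.g.\ to delays within a single $z$-period, or with $Y_1'=Y_2'$ read modulo both $\bl_1$ and the $z$-periodicity), or whether the downstream uses---notably that $\CPre_1$ preserves regions---only require a weaker uniformity that survives the recurrence.
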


We now show that 
(1)~pure strategies of player~1 suffice for winning from 
$\sure_1$ states; and
(2)~pure strategies of player~2 suffice for spoiling from
states that are not $\sure_1$.

\begin{lemma}[Existence of pure strategies for sure winning sets]
\label{lemma:PureStrategies}
Let $\TG$ be a timed game structure, and 
let $\Phi$ be an  objective of $\TG$.
\begin{enumerate}
\item
Pure strategies of player~1 suffice for winning 
from $\sure_1^{\TG}(\Phi)$.
\item
Pure strategies of player~2 suffice for preventing sure winning of player~1
from states outside of  $\sure_1^{\TG}(\Phi)$.
\end{enumerate}
\end{lemma}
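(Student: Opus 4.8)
The plan is to prove both parts with a single \emph{derandomization} device. Given any strategy $\pi$ (of either player), call a pure strategy $\pi^p$ a \emph{support refinement} of $\pi$ if, after every run prefix $r[0..k]$, the unique move played by $\pi^p$ lies in $\support(\pi(r[0..k]))$; such a $\pi^p$ always exists, since the support is a nonempty subset of $\Gamma_i(r[k])$ after every prefix, and we are moreover free to prescribe $\pi^p$'s value along one particular run while choosing it arbitrarily, subject only to the support constraint, after every other prefix. The one fact I would establish first is a \textbf{monotonicity of outcomes}: if $\pi_1^p$ is a support refinement of $\pi_1$, then $\outcomes(s,\pi_1^p,\pi_2,\pi_{\sched})\subseteq\outcomes(s,\pi_1,\pi_2,\pi_{\sched})$ for every $\pi_2$, every scheduler $\pi_{\sched}$, and every $s$ --- because any run consistent with $\pi_1^p$ takes, at each step, a player-1 move in $\support(\pi_1(\cdot))$ and is therefore consistent with $\pi_1$ --- and symmetrically for support refinements of player-2 strategies. (As elsewhere in the paper, I read an outcome of a strategy pair as a run whose moves at each step lie in the supports of the two distributions.)

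For part~1 I would take $s\in\sure_1^{\TG}(\Phi)$ witnessed by a receptive $\pi_1\in\Pi_1^R$ and let $\pi_1^p$ be any support refinement of $\pi_1$. Monotonicity gives $\outcomes(s',\pi_1^p,\pi_2)\subseteq\outcomes(s',\pi_1,\pi_2)$ for \emph{all} states $s'$ and \emph{all} $\pi_2$; since receptiveness of $\pi_1$ says exactly that all such outcomes lie in $\td\cup\blameless_1$, the inclusion transfers this property to $\pi_1^p$, so $\pi_1^p$ is receptive. Applying the inclusion again, this time only to $\pi_2\in\Pi_2^R$ and using $\outcomes(s,\pi_1,\pi_2)\subseteq\Phi$, yields $\outcomes(s,\pi_1^p,\pi_2)\subseteq\Phi$. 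Hence $\pi_1^p$ is a pure receptive sure-winning strategy from $s$.

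For part~2 I would take $s\notin\sure_1^{\TG}(\Phi)$ and an arbitrary receptive $\pi_1\in\Pi_1^R$. By definition of $\sure_1^{\TG}(\Phi)$ there are a scheduler $\pi_{\sched}$ and a receptive $\pi_2\in\Pi_2^R$ with $\outcomes(s,\pi_1,\pi_2,\pi_{\sched})\not\subseteq\Phi$; I would fix a run $r$ in this set with $r\notin\Phi$ and let $\pi_2^p$ be the support refinement of $\pi_2$ that, after each prefix $r[0..k]$ of $r$, plays the move actually used by player~2 at step $k$ of $r$ (which lies in $\support(\pi_2(r[0..k]))$ since $r$ is an outcome of $\pi_2$), and plays an arbitrary support move after every other prefix. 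As in part~1, monotonicity for player~2 makes $\pi_2^p$ receptive. Since $r$ is, by construction, consistent with $\pi_1$, with $\pi_2^p$, and with $\pi_{\sched}$, we get $r\in\outcomes(s,\pi_1,\pi_2^p,\pi_{\sched})$ and hence $\outcomes(s,\pi_1,\pi_2^p,\pi_{\sched})\not\subseteq\Phi$; as $\pi_1$ was arbitrary, player~1 cannot sure-win from $s$ even against pure receptive player-2 strategies. (Scheduler strategies are already deterministic, so no derandomization is needed there.)

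The main thing to get right --- and essentially the only non-routine point --- is the preservation of receptiveness: an arbitrary pure strategy that follows a fixed bad run and behaves arbitrarily off it need not be receptive, and it is precisely the discipline of never stepping outside the support of an already-receptive strategy, together with monotonicity of $\outcomes$, that forces receptiveness to carry over. A minor subtlety is the meaning of ``outcome'' for atomless randomized strategies; I would adopt the support-based reading used throughout the paper, under which the monotonicity inclusion holds verbatim. The remaining steps are bookkeeping with the definitions of $\sure_1$ and of receptive strategies.
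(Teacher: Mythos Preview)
Your proposal is correct and follows essentially the same approach as the paper: both parts derandomize by taking a pure support refinement of the given strategy, and in part~2 the refinement is chosen to follow the fixed witnessing run $r^*$ on its prefixes and an arbitrary support move elsewhere. You are in fact more careful than the paper in explicitly arguing, via outcome monotonicity, that a support refinement of a receptive strategy is again receptive; the paper simply asserts receptiveness of the constructed pure strategies without justification.
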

\begin{proof}
\begin{enumerate}
\item
  Let $\pi_1$ be a sure-winning player-1 receptive strategy.
  Consider any player-1 pure receptive strategy $\pi_1'$ such that for any run $r$ of
  $\TG$, we have $\pi_1'(r[0..k]) \in \support(\pi_1(r[0..k]))$.
  Since $\pi_1$ is sure-winning, $\pi_1'$ must be sure winning too.
\item
  Let $s\notin \sure_1^{\TG}(\Phi)$ and let 
  $\pi_1$ be any player-1 receptive strategy. 
  Let $\pi_2$ be a player-2 spoiling receptive strategy against $\pi_1$ for the state 
  $s$.
  We have $\outcomes(s,\pi_1,\pi_2) \not\subseteq \Phi$.
  This means there exists a run $r^* =
  s_0,\tuple{m_1^0,m_2^0}, s_1,\tuple{m_1^1,m_2^1}, \dots$ with
  $m_i^k \in \support(\pi_i(r^*[0..k]))$ for $i\in\set{1,2}$ such that
  $ r^* \notin \Phi$.
  Consider the pure player-2 receptive strategy $\pi_2'$ such that
  \[
  \pi_2'(r[0..k]) = \left\{
  \begin{array}{ll}
    m_2^k & \text{ if } r[0..k]=r^*[0..k]\\
    \tuple{\Delta_2, a_2} & \text{ otherwise, with } \tuple{\Delta_2, a_2} 
    \text{ being in the support of } \pi_2(r[0..k])
  \end{array}
  \right.
  \]
  The receptive strategy $\pi_2'$ spoils $\pi_1$ from winning surely from $s$ as
  $r^*$  belongs to $\outcomes(s,\pi_1,\pi_2')$, and is not in $\Phi$.
\qed
 
\end{enumerate}

\end{proof}

Lemma~\ref{lemma:PureStrategies} gives us the following corollary which states
that $\sureu_1$ sets are equal to the winning sets if only pure strategies
are allowed for both players.

\begin{corollary}[Equivalence of $\pureu_1$ and $\sureu_1$ sets]
\label{corollary:PureGame}
Let $\A$ be a timed automaton game  and
$\w{\A}$ be the corresponding enlarged game
structure. 
Let $\w{\Phi}$ be an $\omega$-regular location objective of $\w{\A}$, and
let $\pureu_1^{\w{\A}}(\w{\Phi})$ denote the winning set 
for player~1 when
both players are restricted to using only pure strategies.
Then, $\pureu_1^{\w{\A}}(\w{\Phi}) = \sureu_1^{\w{\A}}(\w{\Phi})$.
\end{corollary}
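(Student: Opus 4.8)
The plan is to prove the two inclusions $\pureu_1^{\w{\A}}(\w{\Phi}) \subseteq \sureu_1^{\w{\A}}(\w{\Phi})$ and $\sureu_1^{\w{\A}}(\w{\Phi}) \subseteq \pureu_1^{\w{\A}}(\w{\Phi})$ separately, where the first is trivial and the second is the content of the corollary. For the trivial direction: if $s \in \sureu_1^{\w{\A}}(\w{\Phi})$, then player~1 has a (possibly randomized) strategy $\pi_1$ winning surely against all player-2 strategies and schedulers; in particular it wins against all \emph{pure} player-2 strategies. By Lemma~\ref{lemma:PureStrategies}(1) (the argument there being stated for receptive strategies but applying verbatim in the non-receptive setting of $\sureu_1$, since sure winning is a universal condition over all outcomes), there is a pure player-1 strategy $\pi_1'$ with $\pi_1'(r[0..k]) \in \support(\pi_1(r[0..k]))$ for every run prefix; every outcome of $\pi_1'$ against a pure $\pi_2$ is an outcome of $\pi_1$ against some (pure) player-2 strategy, hence lies in $\w{\Phi}$. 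So $s \in \pureu_1^{\w{\A}}(\w{\Phi})$.

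For the harder direction, suppose $s \notin \sureu_1^{\w{\A}}(\w{\Phi})$; I must exhibit a pure player-2 strategy (together with a scheduler) that defeats \emph{every} pure player-1 strategy from $s$, showing $s \notin \pureu_1^{\w{\A}}(\w{\Phi})$. The key point is that $\sureu_1$ is the winning set of a two-player game with an $\omega$-regular (location) objective $\w{\Phi}$ on the region graph of $\w{\A}$: by Lemma~\ref{lemma:RegionsBeatRegions} and Lemma~\ref{lemma:Bisimulation}, the ``who wins a round'' relation between regions is determined at the region level, so the timed game reduces to a finite-state turn-based (or concurrent-but-region-determined) game on regions. Such games are \emph{determined} with \emph{pure memoryless} (on regions) optimal strategies for both players (classical $\omega$-regular / parity determinacy, e.g.\ \cite{Thomas97}). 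Thus from $s \notin \sureu_1^{\w{\A}}(\w{\Phi})$ there is a pure player-2 region strategy $\pi_2$ such that against \emph{all} player-1 strategies (pure or not) some outcome escapes $\w{\Phi}$; restricting the quantification to pure player-1 strategies only makes the spoiler's job easier. One then lifts the region-level spoiling strategy to a concrete pure strategy on $\w{\A}$ in the standard way, using Lemma~\ref{lemma:RegionsBeatRegions} to pick, for each run prefix, a move realizing the region-level move of $\pi_2$ (and a scheduler resolving the $Y_1'=Y_2'$ ties in player~2's favor, per case~3 of that lemma). The resulting pure $\pi_2$ spoils every pure $\pi_1$ from $s$.

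The main obstacle is the concurrency: at each state both players move simultaneously, so it is not immediately a turn-based game, and one has to argue carefully that the region-level abstraction furnished by Lemma~\ref{lemma:RegionsBeatRegions} genuinely makes the game \emph{concurrent-reducible-to-turn-based} — i.e.\ that the only genuinely concurrent choices are the tie-breaking ones, which are handed to the scheduler (and hence, for the purpose of $\sureu_1$, to player~2's side). Once this is established, invoking determinacy of $\omega$-regular games with pure memoryless strategies is routine, and the lifting back to $\w{\A}$ is mechanical given the earlier lemmas. An alternative, slicker route — which I would actually prefer for the write-up — is to bypass determinacy entirely: simply observe that Lemma~\ref{lemma:PureStrategies} already gives \emph{both} halves. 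Part~(1) yields $\sureu_1^{\w{\A}}(\w{\Phi}) \subseteq \pureu_1^{\w{\A}}(\w{\Phi})$ directly (a randomized sure-winning player-1 strategy contains a pure sure-winning sub-strategy, which is in particular sure-winning in the pure-vs-pure game), and part~(2) yields the reverse: if $s \notin \sureu_1^{\w{\A}}(\w{\Phi})$ then for every player-1 strategy $\pi_1$ — in particular every pure one — there is a pure player-2 strategy $\pi_2'$ with $\outcomes(s,\pi_1,\pi_2') \not\subseteq \w{\Phi}$, so $s \notin \pureu_1^{\w{\A}}(\w{\Phi})$. The subtlety to check here is only that $\pureu_1$ requires a \emph{single} pure $\pi_1$ beating all pure $\pi_2$; negating, $s \notin \pureu_1$ means every pure $\pi_1$ is beaten by \emph{some} pure $\pi_2$, which is exactly what Lemma~\ref{lemma:PureStrategies}(2) provides. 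This makes the corollary a two-line consequence of the preceding lemma, and I would present it that way.

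\begin{proof}
We show both inclusions.

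$(\sureu_1^{\w{\A}}(\w{\Phi}) \subseteq \pureu_1^{\w{\A}}(\w{\Phi}))$:
Let $s \in \sureu_1^{\w{\A}}(\w{\Phi})$ and let $\pi_1$ be a player-1 strategy that wins surely from $s$ against all player-2 strategies and all schedulers. Pick any pure player-1 strategy $\pi_1'$ with $\pi_1'(r[0..k]) \in \support(\pi_1(r[0..k]))$ for all run prefixes $r[0..k]$ (as in the proof of Lemma~\ref{lemma:PureStrategies}(1)). For any pure player-2 strategy $\pi_2$ and scheduler $\pi_{\sched}$, every run in $\outcomes(s,\pi_1',\pi_2,\pi_{\sched})$ is also in $\outcomes(s,\pi_1,\pi_2,\pi_{\sched}) \subseteq \w{\Phi}$. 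Hence $\pi_1'$ witnesses $s \in \pureu_1^{\w{\A}}(\w{\Phi})$.

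$(\pureu_1^{\w{\A}}(\w{\Phi}) \subseteq \sureu_1^{\w{\A}}(\w{\Phi}))$:
We prove the contrapositive. Suppose $s \notin \sureu_1^{\w{\A}}(\w{\Phi})$, and let $\pi_1$ be \emph{any} pure player-1 strategy. Since $s \notin \sureu_1^{\w{\A}}(\w{\Phi})$, in particular $\pi_1$ does not win surely from $s$, so there exist a scheduler $\pi_{\sched}$ and a player-2 strategy $\pi_2$ with $\outcomes(s,\pi_1,\pi_2,\pi_{\sched}) \not\subseteq \w{\Phi}$; fix a run $r^* \in \outcomes(s,\pi_1,\pi_2,\pi_{\sched})$ with $r^* \notin \w{\Phi}$, say $r^* = s_0,\tuple{m_1^0,m_2^0},s_1,\tuple{m_1^1,m_2^1},\dots$ with $s_0 = s$ and $m_i^k \in \support(\pi_i(r^*[0..k]))$. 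Define a pure player-2 strategy $\pi_2'$ exactly as in the proof of Lemma~\ref{lemma:PureStrategies}(2): along $r^*$ it plays the moves $m_2^k$, and off $r^*$ it plays an arbitrary move in the support of $\pi_2$. Then $r^* \in \outcomes(s,\pi_1,\pi_2',\pi_{\sched})$ and $r^* \notin \w{\Phi}$, so $\pi_1$ is not sure-winning from $s$ in the pure-versus-pure game either. As $\pi_1$ was an arbitrary pure player-1 strategy, $s \notin \pureu_1^{\w{\A}}(\w{\Phi})$.
\qed
\end{proof}
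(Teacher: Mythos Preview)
Your formal proof is correct and takes essentially the same approach as the paper, which simply records the corollary as an immediate consequence of Lemma~\ref{lemma:PureStrategies} (parts~(1) and~(2), whose arguments carry over verbatim from the receptive setting $\sure_1$ to the unrestricted setting $\sureu_1$). One small cosmetic remark: in your opening paragraph you call the inclusion $\pureu_1 \subseteq \sureu_1$ ``trivial'' but then immediately argue the \emph{other} inclusion $\sureu_1 \subseteq \pureu_1$; the labels are swapped there, though your actual proof block has them right.
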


\smallskip\noindent\textbf{A $\mathbf{\mu}$-calculus formulation for
  describing the sure winning sets.}
Given an $\omega$-regular objective $\w{\Phi}$ 
of the expanded game structure $\w{\A}$,
 a $\mu$-calculus formula $\varphi$ to describe the winning set 
$\pureu_1^{\w{\A}}(\w{\Phi})$ (which is equal to $\sureu_1^{\w{\A}}(\w{\Phi})$
by Corollary~\ref{corollary:PureGame}) is given in~\cite{AFHM+03}.
The $\mu$-calculus formula uses the 
\emph{controllable predecessor} operator for player~1,
$\CPre_1 : 2^{\w{S}}\mapsto 2^{\w{S}}$ (where $\w{S}=S^{\w{\A}}$), 
defined formally
by $ \w{s}\in \CPre_1(Z)$ iff
$\exists m_1\in\Gamma^{\w{\A}}_1(\w{s})\;
\forall m_2\in\Gamma^{\w{\A}}_2(\w{s})\,.\, \delta^{\w{\A}}_{\jd}
(\w{s},m_1,m_2) \subseteq Z$.
Informally, $\CPre_1(Z)$ consists of the set of states from which player~1
can ensure that the next state will be in $Z$,  no matter what player~2 does.
The operator  $\CPre_1$ preserves regions of $\w{\A}$ 
(this  follows from the results of Lemma~\ref{lemma:RegionsBeatRegions}).
It was also shown in~\cite{AFHM+03} that only unions of regions
arise in the $\mu$-calculus iteration for $\omega$-regular location
objectives.

We now present a lemma that pure finite-memory strategies suffice for 
winning $\omega$-regular objectives, and all strategies
region-equivalent to a region winning strategy are also 
winning.

\begin{lemma}[Properties of pure winning strategies]
\label{lemma:RegionStrategies} 
Let $\A$ be a timed automaton game  and
$\w{\A}$ be the corresponding enlarged game
structure. 
Let $\w{\Phi}$ be an $\omega$-regular location objective of $\w{\A}$.
Then the following assertions hold.
\begin{itemize}
\item
  If $\pi$ is a player-1 pure  strategy that wins against all player-2
  pure strategies from state $\w{s}$, then $\pi_1$ wins against all
  player-2 strategies from state $\w{s}$.
\item 
  There is a  pure finite-memory region strategy 
 $\pi_1$ that is sure winning for $\w{\Phi}$
  from the states in $\sureu_1^{\w{\A}}(\w{\Phi})$.
\item 
  If $\pi_1$ is a pure region strategy that is sure winning for 
  $\w{\Phi}$ from $\sureu_1^{\w{\A}}(\w{\Phi})$ and $\pi_1'$ is 
  a pure strategy that is region-equivalent to $\pi_1$,
  then $\pi_1'$ is a sure  winning strategy for 
  $\w{\Phi}$ from 
  $\sureu_1^{\w{\A}}(\w{\Phi})$.
\end{itemize}

\end{lemma}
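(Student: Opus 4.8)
The plan is to prove the three assertions in order, leaning heavily on the $\mu$-calculus characterization of $\sureu_1^{\w{\A}}(\w{\Phi})$ from~\cite{AFHM+03}, on Corollary~\ref{corollary:PureGame}, and on Lemma~\ref{lemma:RegionsBeatRegions}. For the first assertion, I would argue by contradiction. Suppose $\pi_1$ wins against all pure player-2 strategies from $\w{s}$ but loses against some (possibly randomized) strategy $\pi_2$; then $\outcomes(\w{s},\pi_1,\pi_2,\pi_{\sched})$ contains, with positive probability or outright, a run $r^\ast \notin \w{\Phi}$. Since $\pi_1$ is pure, its moves along $r^\ast$ are determined; extract from $\pi_2$ a pure strategy $\pi_2'$ that on the prefix $r^\ast[0..k]$ plays the move $m_2^k$ actually taken in $r^\ast$ (which lies in the support of $\pi_2(r^\ast[0..k])$) and plays arbitrarily in the support elsewhere — exactly the construction used in the proof of Lemma~\ref{lemma:PureStrategies}(2). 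Then $r^\ast \in \outcomes(\w{s},\pi_1,\pi_2')$, contradicting that $\pi_1$ beats all pure player-2 strategies. (One must handle the scheduler the same way, fixing it to the tie-breaks realized along $r^\ast$.)

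For the second assertion, I would invoke the $\mu$-calculus formula $\varphi$ of~\cite{AFHM+03} that computes $\pureu_1^{\w{\A}}(\w{\Phi}) = \sureu_1^{\w{\A}}(\w{\Phi})$ (Corollary~\ref{corollary:PureGame}) using only the $\CPre_1$ operator, together with the stated fact that $\CPre_1$ preserves unions of regions and that only unions of regions arise in the iteration. The standard way to extract a finite-memory winning strategy from a $\mu$-calculus / nested fixpoint computation for an $\omega$-regular (e.g.\ parity) objective is to record, as finite memory, the index of the outermost fixpoint variable currently being ``served'' (equivalently, a position in the Zielonka/signature ordering of the fixpoint), and at each region to play the $\CPre_1$-witnessing move guaranteeing the successor region lies in the appropriate iterate. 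By Lemma~\ref{lemma:RegionsBeatRegions}, the choice of a \emph{move} realizing a transition into a target region can be made uniformly over all states of a region (and the blame/tie outcome is region-determined), so the resulting strategy is a region strategy; its memory is finite because there are finitely many regions and finitely many fixpoint variables. This yields a pure finite-memory region strategy $\pi_1$ sure winning from every state of $\sureu_1^{\w{\A}}(\w{\Phi})$.

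For the third assertion, let $\pi_1$ be a pure region strategy that is sure winning from $\sureu_1^{\w{\A}}(\w{\Phi})$ and let $\pi_1'$ be pure and region-equivalent to $\pi_1$. Take any pure player-2 strategy $\pi_2'$ and any run $r'$ in $\outcomes(\w{s},\pi_1',\pi_2',\pi_{\sched})$. I would build, step by step, a region-equivalent run $r$ consistent with $\pi_1$: at stage $k$, $\pi_1'$ plays some $\tuple{\Delta_1',a_1}$ landing in region $R$, and since $\pi_1$ is region-equivalent, $\mathcal{W}(\pi_1,r[0..k],R,a_1)\neq\emptyset$, so $\pi_1$ has a move $\tuple{\Delta_1,a_1}$ landing in the same region $R$; mimic player~2's move into the region-equivalent target (using Lemma~\ref{lemma:Bisimulation}), and use Lemma~\ref{lemma:RegionsBeatRegions} to conclude that the blame bits $\bl_1$ and the $\tick$ bit along $r$ agree with those along $r'$, so $\reg(r)=\reg(r')$ as runs of $\w{\A}$ and in particular the locations (and $\tick,\bl_1$) coincide at every step. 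Since $\w{\Phi}$ is a location objective and $r \in \outcomes(\w{s},\pi_1,\cdot)$ is winning, $r\in\w{\Phi}$, hence $r'\in\w{\Phi}$; then assertion~1 lifts this from pure to all player-2 strategies. I expect the main obstacle to be this last step: carefully verifying that region-equivalence of player-1's \emph{chosen move} (not merely its target region) together with Lemma~\ref{lemma:RegionsBeatRegions} forces the $\tick$ and $\bl_1$ components — and hence membership in the $\timedivbl_1$-style objective $\w{\Phi}$ — to be preserved along the mimicked run, since $\w{\Phi}$ depends on these auxiliary bits and a naive bisimulation argument on $\A$ alone would not see them.
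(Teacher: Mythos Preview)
Your proposal is correct and follows essentially the same route as the paper. The paper phrases assertion~3 by contradiction (suppose $\pi_1'$ loses to some pure $\pi_2$, extract a bad run $\w{r}^{\,*}$, and explicitly build a pure $\pi_2^{\dagger}$ that, against $\pi_1$, produces a region-equivalent bad run by matching the $\bowtie$ relation between delays via Lemma~\ref{lemma:RegionsBeatRegions}), whereas you phrase it directly; the underlying mimicking construction is identical. Your worry about $\tick$ and $\bl_1$ is less of an obstacle than you anticipate: in $\w{\A}$ these bits are part of the \emph{location} and $z$ is an ordinary clock, so region-equivalence of $\w{\A}$-runs preserves them automatically once Lemma~\ref{lemma:RegionsBeatRegions} is used to match the $\bowtie$ relation at each step.
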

\begin{proof}
\begin{enumerate}
\item
  Since $\pi_1$ wins against all player~2 pure strategies,
  it must also
  win against all player~2 strategies (possibly randomized)  from $\w{s}$ 
  (a randomized player-2 strategy may be viewed as a random choice  over pure
  player-2 strategies).

\item
  It follows from the $\mu$-calculus formulation of~\cite{AFHM+03} that
  there exists a pure finite-memory region strategy $\pi_1$ that wins against
  any pure player~2 strategy from the states in $\pureu_1^{\w{\A}}(\w{\Phi})$.
  From the previous result, $\pi_1$ wins against all  player~2 strategies 
  (possibly randomized)  from
  $\pureu_1^{\w{\A}}(\w{\Phi})$.
  The claim is proved noting that 
  $\pureu_1^{\w{\A}}(\w{\Phi}) = \sureu_1^{\w{\A}}(\w{\Phi})$ from
  Corollary~\ref{corollary:PureGame}.

\item
  Let $\pi_1$ be a pure region strategy that is sure winning for 
  $\w{\Phi}$ from
  a state $\w{s}$.
  Let $\pi_1^*$ be a player-1  pure 
  strategy that is region equivalent to $\pi_1$.
  The strategy  $\pi_1^*$ is a region strategy as $\pi_1$ is a region
  strategy.
  We show that $\pi_1^*$ wins against all player-2 pure strategies.
  The result then follows from the first part of the lemma.

  Consider any player-2 pure strategy $\pi_2$.
  Suppose $\pi_2$ spoils the player-1 strategy $\pi_1^*$ 
  from winning for $\w{\Phi}$ .
  Then, there from the state $\w{s}$ there exists a run 
  $\w{r}^{\,*} = 
  \w{s}_0,\tuple{m_1^0,m_2^0}, \w{s}_1,\tuple{m_1^1,m_2^1}, \dots$ 
  with
  $m_1^k = \pi_1^*(\w{r}^{\,*}[0..k])$ and $m_2^k = \pi_2(\w{r}^{\,*}[0..k])$ 
  such that
  $\w{r}^{\,*} \notin \w{\Phi}$.
  We show that there exists a player-2 pure strategy $\pi_2^{\dagger}$ and a
  run $\w{r}^{\,\dagger} \in \outcomes(\w{s},\pi_1,\pi_2^{\dagger})$ with 
  $\reg(\w{r}^{\,*}) =
  \reg(\w{r}^{\,\dagger})$ (contradicting the assumption that $\pi_1$ was a
  player-1 winning strategy).
  Intuitively, the strategy $\pi_2^{\dagger}$ prescribes moves to the same
  regions as $\pi_2$ if the region sequence observed is the same as
  that of $\reg(\w{r}^{\,*})$.
  Formally, the strategy $\pi_2^{\dagger}$ is defined as follows.
  Given a run $\w{r}$,
  \[
  \pi_2^{\dagger}(\w{r}[0..k]) = \left\{
  \begin{array}{ll}
    \tuple{\Delta_2, a_2} & \text{ if } 
    \reg(\w{r}[0..k]) = \reg(\w{r}^{\,*}[0..k]), \text{ and }
    \pi_1^*(\w{r}^{\,*}[0..k]) = \tuple{\Delta_1^*,a_1}, \text{ and }\\
    &  \pi_2(\w{r}^{\,*}[0..k]) = \tuple{\Delta_2^*,a_2}, \text{ and }
    \Delta_1^* \bowtie  \Delta_2^* \text{ for } \bowtie\,\in\set{<,>,=}, 
    \text{ and }\\
    & \pi_1(\w{r}[0..k]) = \tuple{\Delta_1,a_1} 
    \text{ with } \reg(\w{r}[k]+\Delta_1) =
    \reg(\w{r}^{\,*}[k] +\Delta_1^*) \\
    &(\text{observe that }
    \pi_1 \text{ is a region strategy and } \pi_1^* \text{ is region
      equivalent to } \pi_1),\\
    &\text{ and }
     \Delta_2 \text{ is such that }
    \reg(\w{r}[k]+\Delta_2) =
    \reg(\w{r}^{\,*}[k] +\Delta_2^*) \text{ and }
    \Delta_1 \bowtie \Delta_2 \\
    & (\text{such a } \Delta_2 \text{ must exist by
      Lemma~\ref{lemma:RegionsBeatRegions}.})\\
    \tuple{0,\bot_2} & \text{ otherwise.}
  \end{array}
  \right.
  \]
  It can be checked that there exists a run $\w{r}^{\,\dagger}\in 
  \outcomes(\w{s},\pi_1, \pi_2^{\dagger})$ such that 
  $\reg(\w{r}^{\,\dagger}) = \reg(\w{r}^{\,*})$.
  This contradicts the fact that $\pi_1$ was a winning strategy.
  Thus, there cannot exist a player-2 pure strategy $\pi_2$ which
  prevents the player-1 strategy $\pi_1^*$ from winning.
  Hence, from the first part of the Lemma, $\pi_1^*$ is a player-1 
  winning strategy.
  \qed
\end{enumerate}

\end{proof}

Note that there is an infinitely precise global clock $z$ in the enlarged
game structure $\w{\A}$.
If $\A$ does not have such a global clock, then strategies in $\w{\A}$ 
correspond to strategies in $\A$ where player~1 (and player~2) maintain
the value of the infinitely precise global clock in memory (requiring
infinite memory).


\section{Pure Finite-memory Receptive Strategies 
for Safety Objectives}
\label{section:Safety}
In this section we show the existence of pure finite-memory sure winning 
strategies for safety objectives in timed automaton games, and their memory
requirements.
The encoding of time-divergence in 
Subsection~{subsection:ResultsTimedAutomatonGames} required an infinitely 
precise which had to be kept in memory of player~1, requiring infinite
memory.
In this section, we derive an alternative characterization of receptive
strategies which does not requires this extra clock.
The characterization of receptive strategies is then used to derive 
receptive strategies for safety objectives.
We also show that our derived winning strategies for safety objectives
require only $(|C|+1)$ memory (where $C$ is the set of clocks of the
timed automaton game).

\subsection{Analyzing Spoiling Strategies of Player~2}
\label{subsection:SafetyFiniteStateGame}

In this subsection we analyze the spoiling strategies of player~2.
This analysis will be used in characterizing the receptive strategies of
player~1.

\smallskip\noindent\textbf{Adding predicates to the game structure}.
We add some  predicates to  timed automaton games; the predicates will be 
used later to analyze receptive safety  strategies.
 Given a timed automaton game $\A$ and a state $s$ of $\A$,
 we define two functions
 $V_{>0} : C \mapsto \set{\true,\false}$
 and $V_{\geq 1} : C \mapsto \set{\true,\false}$.
 We obtain $2\cdot |C|$ predicates based on the two functions.
For a clock $x$, the values of the predicates $V_{>0}(x)$ and $V_{\geq 1} (x)$
 indicate if the 
value of clock $x$ was greater than 0, or 
greater than or equal to 1 respectively,
at the transition point, just  before the reset map. 
For example, for a state $s^p= \tuple{l^p,\kappa^p}$ and 
$\delta(s^p, \tuple{\Delta,a_1}) = s$, 
the predicate  $V_{>0}(x)$ is $\true$ at state $s$  iff 
$\kappa'(x) > 0$ for $\kappa' = \kappa^p+\Delta$.
Consider the enlarged game structure
$\widetilde{\A}$ with the state space 
$\widetilde{S} = S \times\set{\true,\false} \times \set{\true,\false}^C \times 
\set{\true,\false}^C$
and an augmented transition relation $\widetilde{\delta}$.
A  state of $\widetilde{\A}$ is a tuple 
$\tuple{s,\bl_1, V_{>0}, V_{\geq 1}}$, where $s$ is a state of $\A$, 
the component $\bl_1$ is  $\true$ iff player~1 is to be blamed for 
the last transition, and
$V_{>0}, V_{\geq 1}$ are as defined earlier.
The clock equivalence relation can be lifted to states of $\widetilde{\A}: 
\tuple{s,\bl_1, V_{>0}, V_{\geq 1}}  \cong_{\widetilde{A}} 
\tuple{s',\bl_1',  V_{>0}',V_{\geq 1}' }$ iff 
$s\cong_{\A} s'$, $\bl_1=\bl_1'$, $V_{>0}= V_{ >0}'$ and 
$V_{\geq 1}= V_{\geq 1}'$.
We next present a finite state concurrent game 
$\widetilde{\A}^{\f}$ based on the regions of 
$\widetilde{\A}$ which will be used to analyze spoiling strategies of
player~2.

\medskip\noindent\textbf{Finite state concurrent game 
$\mathbf{\widetilde{\A}^{\f}}$ based on the regions of 
$\mathbf{\widetilde{\A}}$}.
We first show that there exists an finite state concurrent game 
$\widetilde{\A}^{\f}$
 which can be used to obtain winning sets and
winning strategies of $\widetilde{\A}$.
The two ideas behind $\widetilde{\A}^{\f}$ are that 
(1)~only region sequences are
important for games with $\omega$-regular location objectives, and
(2)~only the destination regions of the players are important (due to
Lemma~\ref{lemma:RegionsBeatRegions}).
Formally, the game $\widetilde{\A}^{\f}$ is defined as the tuple
$\tuple{S^{\f}, M_1^{\f}, M_2^{\f},\Gamma_1^{\f}, \Gamma_2^{\f}, \delta^{\f}}$ 
where
\begin{itemize}
\item 
  $S^{\f}$ is the set of states of 
  $\widetilde{\A}^{\mathcal{F}}$, and is equal to the set of  regions of 
  $\widetilde{\A}$.
\item $M_i^{\f}$ for $i\in\set{1,2}$ is the set of moves of player-$i$.
  \begin{itemize}
  \item
     $M_1^{\f} = \set{\tuple{\widetilde{R},a_1} \mid
      \widetilde{R} \text{ is a region of } \widetilde{\A}, \text{ and }
      a_1\in A_1^{\bot}}$.
  \item 
    $M_2^{\f} = \set{\tuple{\widetilde{R},a_2,i} \mid
      \widetilde{R} \text{ is a region of } \widetilde{\A}, i\in\set{1,2},
      \text{ and }
      a_2\in A_2^{\bot}}$.
  \end{itemize}
  Intuitively, the moves of player-$i$ denote which region it wants to let
  time pass to, and then take the discrete action $a_i^{\bot}$.
  In addition, for player~2, the ``$i$'' denotes which player's move will
  be chosen should the two players propose moves to the same region.
  Recall from Lemma~\ref{lemma:RegionsBeatRegions} that in such a case, it is
  up to the scheduler to decide which player's move to ``win'' in a run.
  Here, the scheduler is collaborating with player~2.
\item 
  $\Gamma_i^{\f}$ for $i\in\set{1,2}$ is the move assignment function.
  Given a state $\widetilde{R}\in S^{\f}$, we have 
  $\Gamma_i^{\f}(\widetilde{R})$ to be the set of moves available to player~$i$
  at state $\widetilde{R}$.
  \begin{itemize}
  \item 
    $\Gamma_1^{\f}(\widetilde{R}) = 
    \set{\tuple{\widetilde{R}',a_1} \mid 
      \exists\, \widetilde{s}\in \widetilde{R}
      \text{ such that player~1} \text{ has a move } \tuple{\Delta,a_1}
      \text{ in } \widetilde{\A} \text{ from } \widetilde{s} \text{ with }
      \reg(\widetilde{s} +\Delta) = \widetilde{R}'}$.
  \item  
    $\Gamma_2^{\f}(\widetilde{R}) = 
    \set{\tuple{\widetilde{R}',a_2,i} \mid 
      \exists\, \widetilde{s}\in \widetilde{R}
      \text{ such that player~2} \text{ has a move } \tuple{\Delta,a_2}
      \text{ in } \widetilde{\A} \text{ from } \widetilde{s} \text{ with }
      \reg(\widetilde{s} +\Delta) = \widetilde{R}' \text{ and } i\in\set{1,2}}$.
  \end{itemize}
\item 
  The transition function $\delta^{\f}$ is specified as
   $\delta^{\f}(\widetilde{R}, \tuple{\widetilde{R}_1,a_1}, 
   \tuple{\widetilde{R}_2,a_2,i}) = $
  \[
  \begin{cases}
    \widetilde{R}' & \text{if }\widetilde{R}_1 \neq \widetilde{R}_2, 
    \ \widetilde{R}_2 \text{ is a time successor of }\widetilde{R}_1,
    \text{ and } \exists\, \widetilde{s}_1\in  \widetilde{R}_1 
    \text{ such that }
    \delta^{\widetilde{\A}}(\widetilde{s}_1, \tuple{0,a_1})\in \widetilde{R}'\\
    \widetilde{R}' & \text{if }\widetilde{R}_1 \neq \widetilde{R}_2, 
    \ \widetilde{R}_1 \text{ is a time successor of }\widetilde{R}_2,
    \text{ and } \exists\, \widetilde{s}_2\in  \widetilde{R}_2 
    \text{ such that }
    \delta^{\widetilde{\A}}(\widetilde{s}_2, \tuple{0,a_2})\in \widetilde{R}'\\
    \widetilde{R}' & \text{if }\widetilde{R}_1 = \widetilde{R}_2, 
    \ i=1 \text{ and } \exists\, \widetilde{s}_1\in  \widetilde{R}_1 
    \text{ such that }
    \delta^{\widetilde{\A}}(\widetilde{s}_1, \tuple{0,a_1})\in \widetilde{R}'\\
    \widetilde{R}' & \text{if }\widetilde{R}_1 = \widetilde{R}_2, 
    \ i=2 \text{ and } \exists\, \widetilde{s}_2\in  \widetilde{R}_2 
    \text{ such that }
    \delta^{\widetilde{\A}}(\widetilde{s}_2, \tuple{0,a_2})\in \widetilde{R}'\\

  \end{cases}
  \]
    
\end{itemize}

Note that given player-1 and player-2 pure strategies $\pi_1^{\widetilde{\A}^{\f}}$
and  $\pi_2^{\widetilde{\A}^{\f}}$, and any state $\widetilde{R}$, we have only
one run in 
$\outcomes(\widetilde{R}, \pi_1^{\widetilde{\A}^{\f}}, \pi_2^{\widetilde{\A}^{\f}})$.

\smallskip\noindent\textbf{Mapping runs and states in $\widetilde{\A}$
to those in $\widetilde{\A}^{\f}$ using $\regmap()$ and  $\regstates()$.}
Given a run $\widetilde{r}= \widetilde{s}_0,\tuple{m_1^0,m_2^0}, 
\widetilde{s}_1,\tuple{m_1^1,m_2^1}, \dots$ of $\widetilde{\A}$,
we let $\regmap(\widetilde{r})$ be the corresponding run in 
$\widetilde{\A}^{\f}$ such that the states in $\widetilde{r}$ are mapped to
their regions, and the moves of  $\widetilde{\A}$ are mapped to 
corresponding moves in
$\widetilde{\A}^{\f}$.
Formally,  $\regmap(\widetilde{r})$ is the run
$\reg(\widetilde{s}_0),\tuple{m_1^{0,\f},m_2^{0,\f}}, 
\reg(\widetilde{s}_1),\tuple{m_1^{1,\f},m_2^{1,\f}}, \dots$ in 
$\widetilde{\A}^{\f}$ such that 
for $m_1^j=\tuple{\Delta_1^j,a_1^j}$ and $m_2^j=\tuple{\Delta_2^j,a_1^j}$ 
we have 
(1)~$m_1^{j,\f} = \tuple{\reg\left(\widetilde{s}_j+\Delta_1^j\right),a_1^j}$,
 and 
(2)~$m_2^{j,\f} = \tuple{\reg\left(\widetilde{s}_j+\Delta_2^j\right),a_2^j,i} $
with $i=1$ if $\Delta_1^j <\Delta_2^j$, or $\Delta_1^j =\Delta_2^j$ and
$\widetilde{s}_{j+1}=\delta(\widetilde{s}_j,m_1^j)$ (i.e., the scheduler picks
player~1 in round $j$); otherwise $i=2$.
%
Given a set of regions $X$ of $\widetilde{\A}$ (i.e., $X$ is a set of states of
$\widetilde{\A}^{\f}$),
let $\regstates(X) = \set{\widetilde{s} \mid \widetilde{s}\in \bigcup X}$.

We have the following lemma which states the equivalence of the games
$\widetilde{\A}^{\f}$ and $\widetilde{\A}$ with respect to the
$\CPre_1$ operator of the $\mu$-calculus formulation mentioned in
Section~\ref{section:Setting}.

\begin{lemma}
\label{lemma:FiniteConcurrentCPre}
Let $\A$ be a timed automaton game,  $\widetilde{\A}$ the expanded game 
structure as mentioned above, and $\widetilde{\A}^{\f}$ the corresponding
finite state concurrent game structure.
If $X$ is a set  of
regions of $\widetilde{\A}$, then 
$\CPre_1^{\widetilde{\A}}(\bigcup X) = 
\regstates\left(\CPre_1^{\widetilde{\A}^{\f}}(X)\right)$
\end{lemma}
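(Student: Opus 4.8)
The plan is to prove the two set inclusions $\CPre_1^{\widetilde{\A}}(\bigcup X) \subseteq \regstates(\CPre_1^{\widetilde{\A}^{\f}}(X))$ and $\regstates(\CPre_1^{\widetilde{\A}^{\f}}(X)) \subseteq \CPre_1^{\widetilde{\A}}(\bigcup X)$ separately. Throughout I would exploit the fact that both $X$ and (by the region‑preservation of $\CPre_1$) $\CPre_1^{\widetilde{\A}}(\bigcup X)$ are unions of regions, so membership of a state in either set depends only on its region, and it suffices to reason about an arbitrary representative $\widetilde{s}$ of an arbitrary region $\widetilde{R}$. The basic dictionary is: a move $\tuple{\Delta,a_i}$ of player~$i$ in $\widetilde{\A}$ from $\widetilde{s}$ corresponds to the $\widetilde{\A}^{\f}$‑move $\tuple{\reg(\widetilde{s}+\Delta),a_i}$ (with the extra scheduler bit $i\in\set{1,2}$ for player~2), and conversely, by Lemma~\ref{lemma:Bisimulation}, if $\tuple{\widetilde{R}',a_i}$ is an available move of player~$i$ at $\widetilde{R}$ in $\widetilde{\A}^{\f}$ then every state of $\widetilde{R}$ has a concrete move realizing it.

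For the forward inclusion, suppose $\widetilde{s}\in\CPre_1^{\widetilde{\A}}(\bigcup X)$, so there is a move $m_1 = \tuple{\Delta_1,a_1}\in\Gamma_1^{\widetilde{\A}}(\widetilde{s})$ such that for all $m_2\in\Gamma_2^{\widetilde{\A}}(\widetilde{s})$, $\delta_{\jd}^{\widetilde{\A}}(\widetilde{s},m_1,m_2)\subseteq\bigcup X$. I would take the corresponding $\widetilde{\A}^{\f}$‑move $m_1^{\f} = \tuple{\reg(\widetilde{s}+\Delta_1),a_1}$ and claim it witnesses $\widetilde{R}\in\CPre_1^{\widetilde{\A}^{\f}}(X)$. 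Given any player‑2 $\widetilde{\A}^{\f}$‑move $\tuple{\widetilde{R}_2,a_2,i}$ available at $\widetilde{R}$, pick (using Lemma~\ref{lemma:Bisimulation}) a concrete player‑2 move $m_2 = \tuple{\Delta_2,a_2}$ from $\widetilde{s}$ with $\reg(\widetilde{s}+\Delta_2)=\widetilde{R}_2$; here Lemma~\ref{lemma:RegionsBeatRegions} is the key tool, since it tells us that the ordering of $\Delta_1$ and $\Delta_2$ (strictly less, strictly greater, or — when $\widetilde{R}_1$ and $\widetilde{R}_2$ coincide up to the blame component — equal, resolved by the scheduler) is determined by the regions alone, and the resulting successor region is exactly the one the $\delta^{\f}$ case analysis produces. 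So the $\widetilde{\A}^{\f}$‑successor $\delta^{\f}(\widetilde{R},m_1^{\f},\tuple{\widetilde{R}_2,a_2,i})$ equals $\reg(s'')$ for some $s''\in\delta_{\jd}^{\widetilde{\A}}(\widetilde{s},m_1,m_2)\subseteq\bigcup X$, hence lies in $X$.

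For the reverse inclusion, suppose $\widetilde{R}\in\CPre_1^{\widetilde{\A}^{\f}}(X)$ via a move $\tuple{\widetilde{R}_1,a_1}$; take any representative $\widetilde{s}\in\widetilde{R}$ and, by Lemma~\ref{lemma:Bisimulation}, a concrete move $m_1 = \tuple{\Delta_1,a_1}$ with $\reg(\widetilde{s}+\Delta_1)=\widetilde{R}_1$. I would argue $m_1$ witnesses $\widetilde{s}\in\CPre_1^{\widetilde{\A}}(\bigcup X)$: for an arbitrary concrete player‑2 move $m_2 = \tuple{\Delta_2,a_2}$ from $\widetilde{s}$, let $\widetilde{R}_2 = \reg(\widetilde{s}+\Delta_2)$ and form the $\widetilde{\A}^{\f}$‑move $\tuple{\widetilde{R}_2,a_2,i}$ where $i$ encodes the scheduler choice matching the pair $(\widetilde{s},m_1,m_2,s')$ — again legitimate by Lemma~\ref{lemma:RegionsBeatRegions}, which guarantees the relative timing and the scheduler's options depend only on the regions. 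By hypothesis $\delta^{\f}(\widetilde{R},\tuple{\widetilde{R}_1,a_1},\tuple{\widetilde{R}_2,a_2,i})\in X$, and by the correspondence this region contains every state of $\delta_{\jd}^{\widetilde{\A}}(\widetilde{s},m_1,m_2)$ arising from that scheduler choice; ranging over both scheduler options (or over both $i$) covers all of $\delta_{\jd}^{\widetilde{\A}}(\widetilde{s},m_1,m_2)$, so it is contained in $\bigcup X$.

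The main obstacle I expect is the careful bookkeeping around the scheduler bit $i$ and the equal‑delay case of Lemma~\ref{lemma:RegionsBeatRegions}: one must make sure that when $\widetilde{R}_1$ and $\widetilde{R}_2$ agree except possibly on the $\bl_1$ component, both of player~2's scheduler choices in $\widetilde{\A}^{\f}$ are available and that together they account for the (at most two‑element) joint‑destination set in $\widetilde{\A}$, and symmetrically that a $\subseteq$ over $\delta_{\jd}$ in $\widetilde{\A}$ translates into a statement about a single successor in $\widetilde{\A}^{\f}$ only after the scheduler bit is pinned down. Matching the $\bl_1$ (and $V_{>0}, V_{\geq 1}$) components across the correspondence is routine but must be checked, since $\delta^{\f}$ is defined via $\delta^{\widetilde{\A}}(\widetilde{s}_j,\tuple{0,a_j})$ and one needs that this value, as a region, is independent of the chosen representative $\widetilde{s}_j$ — which is precisely Lemma~\ref{lemma:Bisimulation} applied in $\widetilde{\A}$ together with the observation that the predicate updates are region‑invariant. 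Everything else is a direct unwinding of the definitions.
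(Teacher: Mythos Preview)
Your proposal is correct and is essentially the detailed unfolding of the paper's one-line proof, which simply says the result follows from Lemma~\ref{lemma:RegionsBeatRegions}. You have identified the same key ingredient (together with the bisimulation Lemma~\ref{lemma:Bisimulation}) and correctly anticipated that the only delicate point is the equal-delay/scheduler-bit case; there is nothing missing.
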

\begin{proof}
The proof follows from Lemma~\ref{lemma:RegionsBeatRegions}.
\qed
\end{proof}

\begin{lemma}[Relating sure winning sets in 
$\bm{\widetilde{\A}^{\f}}$
 and   $\bm{\widetilde{\A}}$]
\label{lemma:FiniteConcurrentEquivalence}
Let $\A$ be a timed automaton game,  $\widetilde{\A}$ the expanded game 
structure as described  above, and $\widetilde{\A}^{\f}$ the corresponding
finite state concurrent game structure.
Let $\widetilde{\Phi}$ be an $\omega$-regular location objective of
$\widetilde{\A}$ (and naturally also of $\widetilde{\A}^{\f}$).
We have $\sureu_1^{\widetilde{\A}}(\widetilde{\Phi}) =
\regstates\left(\sureu_1^{\widetilde{\A}^{\f}}(\widetilde{\Phi})\right)$.

\end{lemma}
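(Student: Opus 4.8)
The plan is to lift the $\mu$-calculus fixpoint characterization of $\sureu_1$ from $\widetilde{\A}$ to $\widetilde{\A}^{\f}$ step by step, using Lemma~\ref{lemma:FiniteConcurrentCPre} as the workhorse. Since $\widetilde{\Phi}$ is an $\omega$-regular location objective, the sure winning set $\sureu_1^{\widetilde{\A}}(\widetilde{\Phi})$ is described by a fixed $\mu$-calculus expression $\varphi$ built from the $\CPre_1^{\widetilde{\A}}$ operator, set union, and set intersection (by the results of~\cite{AFHM+03} cited above, only unions of regions arise in the iteration, so every intermediate set is a union of regions of $\widetilde{\A}$). The same syntactic $\mu$-calculus expression $\varphi$, with $\CPre_1^{\widetilde{\A}}$ replaced by $\CPre_1^{\widetilde{\A}^{\f}}$, describes $\sureu_1^{\widetilde{\A}^{\f}}(\widetilde{\Phi})$ over the finite game $\widetilde{\A}^{\f}$ (whose states are exactly the regions of $\widetilde{\A}$). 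So it suffices to show that the two fixpoint iterations stay in lockstep under the bijective correspondence $X \mapsto \regstates(X)$ between sets of regions of $\widetilde{\A}^{\f}$ and unions of regions of $\widetilde{\A}$.

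First I would record the easy structural facts: $\regstates(\cdot)$ is a bijection from subsets of $S^{\f}$ (sets of regions) to unions-of-regions subsets of $\widetilde{S}$, and it commutes with arbitrary unions and intersections, i.e. $\regstates(\bigcup_j X_j) = \bigcup_j \regstates(X_j)$ and similarly for intersections, and it is monotone. Then the key commutation step is exactly Lemma~\ref{lemma:FiniteConcurrentCPre}: for any set of regions $X$, $\CPre_1^{\widetilde{\A}}(\regstates(X)) = \regstates(\CPre_1^{\widetilde{\A}^{\f}}(X))$. Combining these, for any $\mu$-calculus context $\psi(\cdot)$ built from $\CPre_1$, $\cup$, $\cap$, and the atomic predicate $\widetilde{\Phi}$ (which is itself a union of regions, hence equals $\regstates$ of its region-set version), we get $\psi^{\widetilde{\A}}(\regstates(X)) = \regstates(\psi^{\widetilde{\A}^{\f}}(X))$ by structural induction on $\psi$.

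Next I would handle the fixpoints themselves. For a least-fixpoint subformula $\mu Z.\psi(Z)$, the Knaster–Tarski iteration starts from $\emptyset = \regstates(\emptyset)$ and iterates $\psi$; by the commutation just established and transfinite induction, the $\alpha$-th approximant in $\widetilde{\A}$ equals $\regstates$ of the $\alpha$-th approximant in $\widetilde{\A}^{\f}$, and since both lattices are finite (only finitely many regions) the iteration stabilizes at the same finite stage, giving $(\mu Z.\psi)^{\widetilde{\A}} = \regstates\big((\mu Z.\psi)^{\widetilde{\A}^{\f}}\big)$. The argument for greatest fixpoints is symmetric, starting from the full state set $\widetilde{S} = \regstates(S^{\f})$. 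Applying this to the full expression $\varphi$ for $\widetilde{\Phi}$ yields $\sureu_1^{\widetilde{\A}}(\widetilde{\Phi}) = \regstates\big(\sureu_1^{\widetilde{\A}^{\f}}(\widetilde{\Phi})\big)$, as claimed.

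The main obstacle, and the place requiring a little care rather than pure formality, is justifying that the \emph{same} $\mu$-calculus formula applies verbatim to $\widetilde{\A}^{\f}$ as a legitimate finite concurrent game: one must check that $\sureu_1^{\widetilde{\A}^{\f}}(\widetilde{\Phi})$ is indeed computed by the standard $\CPre_1$-based $\mu$-calculus formula for $\omega$-regular objectives in concurrent games where ties are broken by a scheduler collaborating with player~2 — but this is exactly the setting of~\cite{AFHM+03} specialized to a finite state space, and Corollary~\ref{corollary:PureGame} together with the reductions in Section~\ref{subsection:ResultsTimedAutomatonGames} already give the needed soundness of the $\CPre_1$ fixpoint characterization. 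A secondary subtlety is that $\widetilde{\A}^{\f}$ as defined only specifies moves and transitions for moves to strict time-successor regions or equal regions; one should note that the $\CPre_1$ operator on $\widetilde{\A}^{\f}$ quantifies over exactly the moves in $\Gamma_i^{\f}$, which by construction mirror precisely the region-destinations realizable in $\widetilde{\A}$ (again via Lemma~\ref{lemma:RegionsBeatRegions}), so nothing is lost or gained in the translation. With those points addressed, the remainder is the routine fixpoint-induction bookkeeping sketched above.
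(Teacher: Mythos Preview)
Your proposal is correct and follows essentially the same approach as the paper: the paper's proof simply observes that only unions of regions arise in the $\mu$-calculus iteration for $\omega$-regular objectives, invokes Lemma~\ref{lemma:FiniteConcurrentCPre} to conclude that equivalent sets arise at each stage of the iteration in both structures, and appeals to Corollary~\ref{corollary:PureGame} for the $\pureu_1=\sureu_1$ identification. Your write-up spells out the structural induction on the $\mu$-calculus formula and the fixpoint bookkeeping in more detail than the paper does, but the underlying argument is the same.
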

\begin{proof}
Only unions of regions arise in the $\mu$-calculus iteration for
computing winning sets in $\widetilde{\A}$ for $\omega$-regular
objectives.
The proof follows from the fact
that equivalent sets of states 
arise in the $\mu$-calculus iteration for computing the 
winning sets in both game structures due to  
Lemma~\ref{lemma:FiniteConcurrentCPre}.
Corollary~\ref{corollary:PureGame} gives us the equivalence between $\pureu_1$
and $\sureu_1$ sets.
\qed
\end{proof}

\subsubsection{Obtaining a Class of Spoiling Player-2 Spoiling Strategies in 
 $\widetilde{\A}$ Using  the Game Structure $\widetilde{\A}^{\f}$.}
We use the finite state game  $\widetilde{\A}^{\f}$ to analyze
the spoiling strategies of player~2 for any given player-1 strategy $\pi_1$
in $\widetilde{\A}$.
To do this analysis, we 
(1)~map any player-1 strategy $\pi_1$  in 
$\widetilde{\A}$ to a corresponding player-1 strategy $\pi_1^{\f}$ 
in $\widetilde{\A}^{\f}$; and 
(2)~map any player-2 spoiling strategies in $\widetilde{\A}^{\f}$ against
$\pi_1^{\f}$ to a class of player-2 spoiling strategies in $\widetilde{\A}$, all of which will
be spoiling against  $\pi_1$.

We first present the  next Lemma which states that for every run of 
$\widetilde{\A}^{\f}$, there
exists a run of $\widetilde{\A}$ that has an equivalent  region sequence.
\begin{lemma}
\label{lemma:SameRunsSafety}
Let $\A$ be a timed automaton game,  $\widetilde{\A}$ the expanded game 
structure as described previously, and $\widetilde{\A}^{\f}$ the corresponding
finite state concurrent game structure.
For every finite run $\widetilde{r}^{\,\f}$ of $\widetilde{\A}^{\f}$, there
exists a finite run of $\widetilde{r}$ of $\widetilde{\A}$ such that
$\regmap(\widetilde{r}) = \widetilde{r}^{\,\f}$.
\end{lemma}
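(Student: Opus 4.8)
The plan is to prove the statement by induction on the length of the finite run $\widetilde{r}^{\,\f}$, constructing the concrete run $\widetilde{r}$ of $\widetilde{\A}$ one round at a time while maintaining the invariant that $\regmap$ of the prefix of $\widetilde{r}$ built so far equals the corresponding prefix of $\widetilde{r}^{\,\f}$. For the base case — $\widetilde{r}^{\,\f}$ of length $0$, i.e.\ a single state which is a region $\widetilde{R}_0$ of $\widetilde{\A}$ — I would simply pick any $\widetilde{s}_0\in\widetilde{R}_0$ (regions are nonempty) and set $\widetilde{r}=\widetilde{s}_0$, so that $\regmap(\widetilde{r})=\reg(\widetilde{s}_0)=\widetilde{R}_0$.

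For the inductive step, assume $\widetilde{r}=\widetilde{s}_0,\dots,\widetilde{s}_k$ has been built with $\regmap(\widetilde{r})=\widetilde{R}_0,\dots,\widetilde{R}_k$, and that $\widetilde{r}^{\,\f}$ is extended by a round $\widetilde{R}_k,\tuple{m_1^{k,\f},m_2^{k,\f}},\widetilde{R}_{k+1}$ with $m_1^{k,\f}=\tuple{\widetilde{R}_1',a_1}\in\Gamma_1^{\f}(\widetilde{R}_k)$ and $m_2^{k,\f}=\tuple{\widetilde{R}_2',a_2,i}\in\Gamma_2^{\f}(\widetilde{R}_k)$. By definition of $\Gamma_1^{\f}$ there is a witness state in $\widetilde{R}_k$ from which player~1 has a move $\tuple{\Delta,a_1}$ reaching $\widetilde{R}_1'$; applying Lemma~\ref{lemma:Bisimulation} to the time-elapse part $\tuple{\Delta,\bot_1}$ of that move (a legal move by the timed game structure axioms) I get a delay $\Delta_1$ with $\widetilde{s}_k+\Delta_1\in\widetilde{R}_1'$, and since clock constraints and invariants are constant on regions the $a_1$-edge enabled at the witness is enabled at $\widetilde{s}_k+\Delta_1$ as well; hence $m_1^k:=\tuple{\Delta_1,a_1}$ is a legal player-1 move from $\widetilde{s}_k$ reaching $\widetilde{R}_1'$. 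The same argument with $\Gamma_2^{\f}$ gives a legal player-2 move from $\widetilde{s}_k$ reaching $\widetilde{R}_2'$.

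Next I would reconcile the delays and the tie-break. The hypotheses of Lemma~\ref{lemma:RegionsBeatRegions} hold with $Y=\widetilde{R}_k$, $Y_1'=\widetilde{R}_1'$, $Y_2'=\widetilde{R}_2'$ (both $\bot$-moves to $Y_1'$ and $Y_2'$ are available from $\widetilde{s}_k$), so applied to the fixed move $m_1^k$ it tells us which of the three cases — player~1 strictly faster, player~2 strictly faster, or $\widetilde{R}_1'=\widetilde{R}_2'$ with a player-2 move of equal delay available — occurs; this is exactly the three-way split in the definition of $\delta^{\f}$ (time-successor order of $\widetilde{R}_1'$ and $\widetilde{R}_2'$, or equality). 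I would take $m_2^k$ to be the player-2 move to $\widetilde{R}_2'$ supplied by this case analysis, and in the equal-delay case let the scheduler resolve the round in favour of the player named by $i$; by construction the resulting round of $\widetilde{r}$ maps under $\regmap$ to $\tuple{m_1^{k,\f},m_2^{k,\f}}$, reproducing the index $i$ in particular. Finally set $\widetilde{s}_{k+1}$ to the concrete successor produced by the winning move: that move is a discrete transition out of a state region-equivalent in $\widetilde{\A}$ to the witness used to define $\widetilde{R}_{k+1}=\delta^{\f}(\widetilde{R}_k,m_1^{k,\f},m_2^{k,\f})$, and since the guard, the reset, the invariant, the blame bit $\bl_1$, and the predicates $V_{>0},V_{\geq 1}$ are all functions of the source region together with the transition taken (region equivalence already records whether each clock is $0$, in $(0,1)$, or $\geq 1$), we obtain $\widetilde{s}_{k+1}\in\widetilde{R}_{k+1}$, closing the induction.

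I expect the main obstacle to be the case analysis in the inductive step: matching the three cases of Lemma~\ref{lemma:RegionsBeatRegions} with the clauses of $\delta^{\f}$, and checking that the scheduler-chosen index $i$ and the auxiliary components $\bl_1$, $V_{>0}$, $V_{\geq 1}$ of the successor state come out consistently with the witness computation in $\delta^{\f}$. Everything else — transferring moves via bisimulation, and the observation that these extra components are region-determined — is routine bookkeeping once the standard region-invariance facts for timed automata are invoked.
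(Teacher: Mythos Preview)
Your proposal is correct and follows essentially the same approach as the paper: induction on the length of $\widetilde{r}^{\,\f}$, with the inductive step handled by transferring moves via the region bisimulation and then invoking Lemma~\ref{lemma:RegionsBeatRegions} to match the time-ordering/tie-break case in $\delta^{\f}$. Your write-up is in fact more explicit than the paper's (which compresses the whole inductive step into a single appeal to Lemma~\ref{lemma:RegionsBeatRegions} and the construction of $\widetilde{\A}^{\f}$), in particular in spelling out why the auxiliary components $\bl_1,V_{>0},V_{\geq 1}$ of the successor are region-determined.
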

\begin{proof}
Let $\widetilde{r}^{\,\f}$ be any given finite run of $\widetilde{\A}^{\f}$.
We show by induction on the number of steps in 
 $\widetilde{r}^{\,\f}$ that there exists
a finite run of $\widetilde{r}$ of $\widetilde{\A}$ such that
$\regmap(\widetilde{r}) = \widetilde{r}^{\,\f}$.
Let the inductive hypothesis be true for all runs with at most $j$ steps.
Let $\widetilde{r}^{\,\f}$ contain $j+1$ steps.
By inductive hypothesis, there exists a finite run $\widetilde{r}^{\, *}$
with $j$ steps such that $\regmap(\widetilde{r}^{\, *}) = 
\widetilde{r}^{\,\f}[0..j]$.


Let $ \widetilde{r}^{\,\f}[0..j+1]\ =\  \widetilde{r}^{\,\f}[0..j],
\tuple{\tuple{\widetilde{R}^j_1,a_1^j}, \tuple{\widetilde{R}^j_2,a_2^j,i}},
\widetilde{R}^j$.
Since $\reg\left(\widetilde{r}^{\, *}[j]\right)=  \widetilde{r}^{\,\f}[j]$, 
we have
by Lemma~\ref{lemma:RegionsBeatRegions}, and by the construction of
$\widetilde{\A}^{\f}$ that
(1)~there exists a player-$i$ move $\tuple{\Delta_i, a_i^j}$ from 
$\widetilde{r}^{\,*}[j]$ such that
$\reg(\widetilde{r}^{\,*}[j]+\Delta_i^j) = 
\widetilde{R}^j_i$ for $i\in\set{1,2}$, and
(2) for some 
$\widetilde{s}^{\,*}\in \delta_{\jd}(\widetilde{r}^{\,*}[j], 
\tuple{\Delta_1, a_1^j} 
\tuple{\Delta_2, a_2^j})$,
we have $\reg(\widetilde{s}^{\,*}) =  \widetilde{r}^{\,\f}[j+1]$.
Thus, the run $\widetilde{r}^{\,*}$ can be extended to $\widetilde{r}$ by
one more step such that $\widetilde{r}$ has the desired
properties.
\qed
\end{proof}



\smallskip\noindent\textbf{Mapping player-1 strategies in 
$\bm{\widetilde{\A}}$ to player-1 strategies in $\bm{\widetilde{\A}^{\f}}$.}
Let $\VRuns^{\widetilde{\A}^{\f}}$ be the set of finite 
runs of $\widetilde{\A}^{\f}$.
A set of finite runs $\runcover$ of $\widetilde{\A}$ is said to \emph{cover}
$\VRuns^{\widetilde{\A}^{\f}}$ if for every (finite) run $\widetilde{r}^{\,\f}\in
\VRuns^{\widetilde{\A}^{\f}}$, there exists a \emph{unique} finite  run
$\widetilde{r}\in\runcover $ such that 
$\regmap(\widetilde{r})= \widetilde{r}^{\,\f}$.
There exists at least one such run-cover $\runcover$
 by Lemma~\ref{lemma:SameRunsSafety}.
Abusing notation, we let $\runcover(\widetilde{r}^{\,\f})$ denote the
unique run $\widetilde{r}\in\runcover $ such that 
$\regmap(\widetilde{r})= \widetilde{r}^{\,\f}$.
Given a player-1 pure strategy $\pi_1$ in $\widetilde{\A}$, and a run-cover 
$\runcover$ of $\VRuns^{\widetilde{\A}^{\f}}$, we obtain the mapped player-1 pure
strategy in $\widetilde{\A}^{\f}$, denoted,  
 $\mbf^{\runcover}(\pi_1)$,  as follows.
\[
\left(\mbf^{\runcover}(\pi_1)\right)(\widetilde{r}^{\,\f}) =
\begin{cases}
  \tuple{\widetilde{R},a_1} &
  \text{such that } \pi_1\left(\runcover\left(\widetilde{r}^{\,\f}\right)\right)= 
  \tuple{\Delta_1,a_1}, \text{ and  }
  \reg\left(\,\runcover\left(\widetilde{r}^{\,\f}\right)[k]\, +\,\Delta_1\,
  \right) = 
  \widetilde{R}\\
  &\left(\text{where } \runcover\left(\widetilde{r}^{\,\f}\right)[k] 
    \text{ is the last state in }
    \runcover\left(\widetilde{r}^{\,\f}\right)\right)
\end{cases}
\]
Intuitively, the strategy $\mbf^{\runcover}(\pi_1)$, on the  finite run
$\widetilde{r}^{\,\f}$,
acts like $\pi_1$ on the finite run  $\runcover\left(\widetilde{r}\right)$ 
(i.e., the move is
to the same region, with the same discrete action). 

\smallskip\noindent\textbf{Mapping player-2  pure strategies in 
$\bm{\widetilde{\A}^{\f}}$ to player-2 pure strategies in 
$\bm{\widetilde{\A}}$.}
We now  map any given player-2 pure strategy $\pi_2^{\widetilde{\A}^{\f}}$ in 
$\widetilde{\A}^{\f}$ to  player-2  pure strategies in $\widetilde{\A}$.
This mapping will 
depend on a given player-1 pure strategy $\pi_1$ in  $\widetilde{\A}$ (the
strategy $\pi_1$ will be given as a parameter).
Given a player-2 pure strategy 
$\pi_2^{\widetilde{\A}^{\f}}$ in 
$\widetilde{\A}^{\f}$, and a player-1 pure strategy $\pi_1$ in  
$\widetilde{\A}$, we
define a \emph{set} of player-2  pure strategies in  $\widetilde{\A}$.
The set, denoted  as $\tset_{\pi_1}(\pi_2^{\widetilde{\A}^{\f}})$, is
defined as containing all player-2 pure strategies $\pi_2$ in $\widetilde{\A}$
satisfying the following condition: 
given any run prefix $\widetilde{r}[0..k]$ in $\widetilde{\A}$, 
with $\pi_1(\widetilde{r}[0..k]) = \tuple{\Delta_1,a_1}$, the
strategy $\pi_2$ satisfies Equation~\ref{equation:FiniteSpoilingStrategy}.
%
\begin{equation}
\label{equation:FiniteSpoilingStrategy}
\pi_2(\widetilde{r}[0..k]) =
\begin{cases}
  \tuple{\Delta_2,a_2} &\text{ such that }
  \Delta_2 < \Delta_1 \text{ and } \reg(\widetilde{r}[k]+\Delta_2)=
  \widetilde{R}_2; 
  \text{ if } \\
  &(1)~\pi_2^{\widetilde{\A}^{\f}}(\regmap(\widetilde{r}[0..k])) = 
  \tuple{\widetilde{R}_2,a_2,i},
  \text{ and }\\
  \vspace*{1mm} 
  &(2)~\reg(\widetilde{r}[k]+\Delta_1) \text{ is a time successor of }\widetilde{R}_2\\
\tuple{\Delta_2,a_2} &\text{ such that }
  \Delta_2 > \Delta_1 \text{ and } \reg(\widetilde{r}[k]+\Delta_2)=
  \widetilde{R}_2; 
  \text{ if } \\
  &(1)~\pi_2^{\widetilde{\A}^{\f}}(\regmap(\widetilde{r}[0..k])) = 
  \tuple{\widetilde{R}_2,a_2,i},
  \text{ and }\\
\vspace*{1mm} 
  &(2)~\widetilde{R}_2 \text{ is a time successor of }
  \reg(\widetilde{r}[k]+\Delta_1)\\
\tuple{\Delta_2,a_2} &\text{ such that }
  \Delta_2 \geq \Delta_1 \text{ and } \reg(\widetilde{r}[k]+\Delta_2)=
  \widetilde{R}_2; 
  \text{ if } \\
  &(1)~\pi_2^{\widetilde{\A}^{\f}}(\regmap(\widetilde{r}[0..k])) = 
  \tuple{\widetilde{R}_2,a_2,1},
  \text{ and }\\
  \vspace*{1mm} 
  &(2)~\reg(\widetilde{r}[k]+\Delta_1) =\widetilde{R}_2\\
\tuple{\Delta_2,a_2} &\text{ such that }
  \Delta_2 \leq \Delta_1 \text{ and } \reg(\widetilde{r}[k]+\Delta_2)=
  \widetilde{R}_2; 
  \text{ if } \\
  &(1)~\pi_2^{\widetilde{\A}^{\f}}(\regmap(\widetilde{r}[0..k])) = 
  \tuple{\widetilde{R}_2,a_2,2},
  \text{ and }\\
  &(2)~\reg(\widetilde{r}[k]+\Delta_1) =\widetilde{R}_2
\end{cases}
\end{equation}
  
Intuitively, a strategy $\pi_2$ in $\tset_{\pi_1}(\pi_2^{\widetilde{\A}^{\f}})$
picks a move of 
time duration bigger than that of 
$\pi_1$ if the strategy $\pi_2^{\widetilde{\A}^{\f}}$  in $\widetilde{\A}^{\f}$
 allows a corresponding player-1 move 
$\tuple{\reg(\widetilde{r}[k] +\Delta_1), a_1}$.
Otherwise, the  strategies $\pi_2$ pick a move of shorter duration.

\smallskip\noindent\textbf{Player-2 spoiling strategies set 
${\spoil^{\runcover}(\pi_1,\pi_2^{\widetilde{\A}^{\f}, \runcover, \pi_1} )}$
in 
$\bm{\widetilde{\A}}$.}
Given a player-1 pure strategy $\pi_1$ in $\widetilde{\A}$ such that
$\pi_1$ is not a winning player-1 strategy from a state $\widetilde{s}$
(for some
$\omega$-regular location objective $\widetilde{\Phi}$ of $\widetilde{\A}$),
we now obtain a 
specific \emph{set} of
player-2 spoiling pure strategies in $\widetilde{\A}$ against $\pi_1$ from $\widetilde{s}$ .
The set is  
denoted as $\spoil^{\runcover}(\pi_1,\pi_2^{\widetilde{\A}^{\f}, \runcover,\pi_1})$, where
$\runcover$ is a runcover of $\VRuns^{\widetilde{\A}^{\f}}$, and
$\pi_2^{\widetilde{\A}^{\f}, \runcover, \pi_1}$ is a  
given player-2 spoiling pure strategy
against $\mbf^{\runcover}\left(\pi_1\right)$ in $\widetilde{\A}^{\f}$ for the 
same objective $\widetilde{\Phi}$, for the starting state
$\reg(\widetilde{s})$.
We observe that some  player-2 spoiling pure strategy 
$\pi_2^{\widetilde{\A}^{\f}, \runcover, \pi_1}$
must exist by Lemma~\ref{lemma:FiniteConcurrentEquivalence} and 
Corollary~\ref{corollary:PureGame}.
The set  $\spoil^{\runcover}(\pi_1,\pi_2^{\widetilde{\A}^{\f}, \runcover, \pi_1} )$ 
of player-2 spoiling pure strategies for $\pi_1$ is  defined to be equal to
$\tset_{\pi_1}(\pi_2^{\widetilde{\A}^{\f}, \runcover, \pi_1})$.

The next Lemma relates spoiling player-2 strategies in 
$\widetilde{\A}^{\f}$
 and   $\widetilde{\A}$ (the proof is by an involved induction argument).
The intuition behind the Lemma is that given a state 
$\widetilde{s}\notin 
\win_1^{\widetilde{\A}}(\widetilde{\Phi})$,  we have that 
(a)~$\reg(\widetilde{s}) \notin 
\win_1^{\widetilde{\A}^{\f}}(\widetilde{\Phi})$; and 
(b)~player-2 can obtain
spoiling strategies for any player-1 strategy $\pi_1$ in
 $\widetilde{\A}$ by prescribing moves to the same regions
as the player-2 spoiling strategy in $\widetilde{\A}^{\f}$, which spoils
$\mbf^{\runcover}\left(\pi_1\right)$ (for some suitably chosen $\runcover$).
This result will be used in the next subsection to show that receptive player-1
strategies \emph{must} satisfy certain requirements.

\begin{lemma}[Relating spoiling player-2 pure strategies in 
$\bm{\widetilde{\A}^{\f}}$
 and   $\bm{\widetilde{\A}}$]
\label{lemma:FiniteSafetyGame}
Let $\A$ be a timed automaton game, $\widetilde{\A}$ the 
expanded game structure, and $\widetilde{\A}^{\f}$ the corresponding
finite state concurrent game structure.
Given an $\omega$-regular location objective $\widetilde{\Phi}$ 
of player~1 (in $\widetilde{\A}$ and $\widetilde{\A}^{\f}$),  
the following assertions  hold.
\begin{enumerate}
\item 
  $\widetilde{s}\in 
  \pureu_1^{\widetilde{\A}}(\widetilde{\Phi})$ iff
  $\reg(\widetilde{s}) \in 
  \pureu_1^{\widetilde{\A}^{\f}}(\widetilde{\Phi})$.
\item 
  Let $\widetilde{s}\notin 
  \pureu_1^{\widetilde{\A}}(\widetilde{\Phi})$.
  Given any player-1 strategy $\pi_1$ in $\widetilde{\A}$ there exists a 
  runcover
  $\runcover$ of $\VRuns^{\widetilde{\A}^{\f}}$ such that for 
  any player-2 pure 
  spoiling strategy 
  $\pi_2^{\widetilde{\A}^{\f},  \runcover, \pi_1}$  
  against 
  $\mbf^{\runcover}\left(\pi_1\right)$ in $\widetilde{\A}^{\f}$ from the state
  $\reg(\widetilde{s})$ for the
  objective $\widetilde{\Phi}$ (such spoiling strategies exist by the previous
  part of the lemma);
  we have that
  every player-2 strategy in 
  $\spoil^{\runcover}(\pi_1, \pi_2^{\widetilde{\A}^{\f}, \runcover, \pi_1})$ is
  a spoiling strategy against $\pi_1$ in the structure  $\widetilde{\A}$
  for the objective $\widetilde{\Phi}$ from the state $\widetilde{s}$.
\end{enumerate}

\end{lemma}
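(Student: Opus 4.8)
\medskip\noindent\textbf{Proof plan.}

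\emph{Part 1} I would obtain by chaining together results already in hand. By Corollary~\ref{corollary:PureGame}, applied to $\widetilde{\A}$ and to $\widetilde{\A}^{\f}$, the sets $\pureu_1$ and $\sureu_1$ coincide in both structures, so it suffices to show $\widetilde{s}\in\sureu_1^{\widetilde{\A}}(\widetilde{\Phi})$ iff $\reg(\widetilde{s})\in\sureu_1^{\widetilde{\A}^{\f}}(\widetilde{\Phi})$. Since $\sureu_1^{\widetilde{\A}^{\f}}(\widetilde{\Phi})$ is a set of regions and $\regstates$ of a set of regions is just their union, $\reg(\widetilde{s})\in\sureu_1^{\widetilde{\A}^{\f}}(\widetilde{\Phi})$ is the same as $\widetilde{s}\in\regstates\!\bigl(\sureu_1^{\widetilde{\A}^{\f}}(\widetilde{\Phi})\bigr)$, which equals $\sureu_1^{\widetilde{\A}}(\widetilde{\Phi})$ by Lemma~\ref{lemma:FiniteConcurrentEquivalence}.

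\emph{Part 2, set-up.} Fix $\widetilde{s}\notin\pureu_1^{\widetilde{\A}}(\widetilde{\Phi})$ and a pure player-1 strategy $\pi_1$ of $\widetilde{\A}$. First I would build the runcover $\runcover$ so as to be \emph{$\pi_1$-faithful}: process the finite runs of $\widetilde{\A}^{\f}$ in order of length, and to extend the already chosen lift $\runcover(\widetilde{r}^{\,\f}[0..k])$ along the next move pair $\bigl(\tuple{\widetilde{R}_1,a_1},\tuple{\widetilde{R}_2,a_2,i}\bigr)$ of $\widetilde{r}^{\,\f}$, give player~1 the very move $\pi_1\bigl(\runcover(\widetilde{r}^{\,\f}[0..k])\bigr)$ whenever that move reaches region $\widetilde{R}_1$ with action $a_1$ (and any valid witness otherwise, which exists by the argument in the proof of Lemma~\ref{lemma:SameRunsSafety}), and give player~2 --- using Lemma~\ref{lemma:RegionsBeatRegions} --- a delay reaching $\widetilde{R}_2$ whose ordering against player~1's delay matches the time-successor relation between $\widetilde{R}_1$ and $\widetilde{R}_2$, resolved by the component $i$ when $\widetilde{R}_1=\widetilde{R}_2$, so that the joint destination is exactly the region $\delta^{\f}$ assigns; prefixes are chosen consistently, so $\runcover$ is prefix-closed. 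By construction, along any finite $\widetilde{\A}^{\f}$-run consistent with $\mbf^{\runcover}(\pi_1)$ the chosen lift is a run consistent with $\pi_1$. By Part~1, $\reg(\widetilde{s})\notin\pureu_1^{\widetilde{\A}^{\f}}(\widetilde{\Phi})$, so $\mbf^{\runcover}(\pi_1)$ is not winning from $\reg(\widetilde{s})$ and a \emph{pure} player-2 spoiling strategy $\pi_2^{\widetilde{\A}^{\f},\runcover,\pi_1}$ against it exists. Let $\widetilde{r}^{\,\f}$ be the unique outcome of $\mbf^{\runcover}(\pi_1)$ against that spoiler from $\reg(\widetilde{s})$; then $\widetilde{r}^{\,\f}\notin\widetilde{\Phi}$.

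\emph{Part 2, core.} The heart of the argument is to show, for an arbitrary $\pi_2\in\tset_{\pi_1}(\pi_2^{\widetilde{\A}^{\f},\runcover,\pi_1})$, that some $\widetilde{r}\in\outcomes(\widetilde{s},\pi_1,\pi_2)$ satisfies $\regmap(\widetilde{r})=\widetilde{r}^{\,\f}$; since $\widetilde{\Phi}$ is a location objective and $\regmap$ preserves the location flow, this gives $\widetilde{r}\notin\widetilde{\Phi}$, i.e.\ $\pi_2$ spoils $\pi_1$ from $\widetilde{s}$. I would build $\widetilde{r}$ round by round, maintaining the invariant $\regmap(\widetilde{r}[0..k])=\widetilde{r}^{\,\f}[0..k]$: in round $k$, player~1 plays $\pi_1(\widetilde{r}[0..k])$ and player~2 plays the move dictated by Equation~\ref{equation:FiniteSpoilingStrategy} relative to $\pi_2^{\widetilde{\A}^{\f},\runcover,\pi_1}(\widetilde{r}^{\,\f}[0..k])$, with ties (equal delays) resolved to agree with the component $i$ --- a free choice, since $\delta_{\jd}$ returns both successors. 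Equation~\ref{equation:FiniteSpoilingStrategy} is shaped exactly so that the delay ordering it yields (a strict time-successor forces a strict inequality; equal target regions force a tie governed by $i$) is the one for which, by the time-abstract bisimulation of Lemma~\ref{lemma:Bisimulation} (which extends to $\widetilde{\A}$, as already used for Lemma~\ref{lemma:FiniteConcurrentCPre}), the joint destination lands in the same region that $\delta^{\f}$ assigns in $\widetilde{r}^{\,\f}$; along the way one checks routinely that the predicates $\bl_1$, $V_{>0}$ and $V_{\geq 1}$ are updated region-equivalently on corresponding transitions.

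\emph{Main obstacle.} The delicate step is keeping the region flow of the $\pi_1$-versus-$\pi_2$ outcome locked to $\widetilde{r}^{\,\f}$ in the inductive step, for \emph{every} $\pi_2\in\tset_{\pi_1}(\cdot)$. The subtlety is that $\mbf^{\runcover}(\pi_1)$'s round-$k$ move is $\pi_1$ evaluated at the single lift $\runcover(\widetilde{r}^{\,\f}[0..k])$ fixed once $\runcover$ was built, whereas $\widetilde{r}[0..k]$ is an a~priori different, merely region-equivalent, prefix: distinct members of $\tset_{\pi_1}(\cdot)$ realise the same region sequence through different real delays, and a general $\pi_1$ (not assumed to be a region strategy) may react to those differences. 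Handling this forces the construction of $\runcover$, the mapping $\mbf^{\runcover}(\pi_1)$, and the round-by-round lifts to be carried out \emph{simultaneously}, so that the real-delay ambiguity left open by Equation~\ref{equation:FiniteSpoilingStrategy} and the tie-breaks is always resolvable consistently with the chosen lift (intuitively, so that $\pi_1$ cannot separate $\widetilde{r}[0..k]$ from $\runcover(\widetilde{r}^{\,\f}[0..k])$). Carrying this synchronisation through every round, uniformly over all $\pi_2\in\tset_{\pi_1}(\cdot)$, is the bulk of the ``involved induction'' and the part I would work out in full detail.
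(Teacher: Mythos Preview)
Your proposal is correct and follows essentially the same route as the paper. For Part~1 you invoke Lemma~\ref{lemma:FiniteConcurrentEquivalence} together with Corollary~\ref{corollary:PureGame}, which is exactly the content of the paper's argument via the $\mu$-calculus iteration and Lemma~\ref{lemma:FiniteConcurrentCPre}; for Part~2 you correctly isolate the key difficulty (that $\pi_1$ need not be a region strategy, so the lift $\runcover(\widetilde{r}^{\,\f}[0..k])$ fixed in advance may diverge from the actual prefix $\widetilde{r}[0..k]$ produced by a particular $\pi_2\in\tset_{\pi_1}(\cdot)$) and arrive at the same remedy the paper uses --- a simultaneous, co-inductive construction of $\runcover$, the outcome run, and the associated spoiling data, which the paper summarises as ``an involved induction.''
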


\begin{proof}
\begin{enumerate}
\item 
  Only unions of regions arise in the $\mu$-calculus iteration to obtain
  winning sets of player~1 for the objective $\widetilde{\Phi}$ in
  the game structure $\widetilde{\A}$.
  Using Lemma~\ref{lemma:FiniteConcurrentCPre} in the $\mu$-calculus
  iteration for  obataining the player-1 winning set for $\widetilde{\Phi}$, 
  we deduce that
  $\widetilde{s} \in \pureu_1^{\widetilde{\A}}(\widetilde{\Phi})$ iff
  $\reg(\widetilde{s}) \in \pureu_1^{\widetilde{\A}^{\f}}(\widetilde{\Phi})$.
\item
  By the first part of the lemma, we have that 
  $\reg(\widetilde{s}) \notin \pureu_1^{\widetilde{\A}^{\f}}(\widetilde{\Phi})$.
  Thus, given any runcover $\runcover$, there exists a pure player-2 spoiling strategy
  $\pi_2^{\widetilde{\A}^{\f},  \runcover, \pi_1}$  
  against 
  $\mbf^{\runcover}\left(\pi_1\right)$ in $\widetilde{\A}^{\f}$ from the state
  $\reg(\widetilde{s})$ for the
  objective $\widetilde{\Phi}$.
%

  We show that
  there exists  
  a runcover $\runcover$ of $\VRuns^{\widetilde{\A}^{\f}}$ such that given
  any pure player-2 strategy $\pi_2^{\widetilde{\A}^{\f}, \runcover, \pi_1}$ which
  spoils $\mbf^{\runcover}\left(\pi_1\right)$ in $\widetilde{\A}^{\f}$ from
   winning the objective $\widetilde{\Phi}$
   starting from   $\reg(\widetilde{s})$, and given any
   player-2 strategy $\pi_2$ from 
  $\spoil^{\runcover}(\pi_1,\pi_2^{\widetilde{\A}^{\f}, \runcover, \pi_1} )$ in
  $\widetilde{\A}$, there exists a 
  run $\widetilde{r}^{\,*}\in \outcomes(\widetilde{s},\pi_1,\pi_2)$ in
  $\widetilde{\A}$,
  such that the region sequence of $\widetilde{r}^{\,*} $ is the 
  same as the sequence
  of regions in the (only) run from
  $\outcomes\big(\reg(\widetilde{s}), \mbf^{\runcover}\left(\pi_1\right),
  \pi_2^{\widetilde{\A}^{\f}, \runcover, \pi_1}\big)$.
  This proves the Lemma due to the following:
  since $ \pi_2^{\widetilde{\A}^{\f}, \runcover, \pi_1}$ is a player-2 spoiling strategy against
  $\mbf^{\runcover}\left(\pi_1\right)$,
  we must have that $\reg(\widetilde{r}^{\,*})$ satisfies $\neg\widetilde{\Phi}$,
  and hence $\widetilde{r}^{\,*}$ satisfies $\neg\widetilde{\Phi}$ implying 
  $\pi_2$ to be a spoiling strategy of player~2 in $\widetilde{\A}$ against 
  $\pi_1$. 
%
%
  The proof of the statement is by an involved induction.
  \qed

\end{enumerate}

\end{proof}

\subsection{Characterizing Receptive Strategies Without Using Extra 
Clocks}
We now present characterizations of receptive strategies in timed
automaton games, and show that receptiveness can be expressed as an \LTL
condition on the states of $\widetilde{\A}$, from which it follows that
receptive strategies require finite memory in timed automaton games.
First, we consider the case where all clocks are bounded in the game (i.e.,
location invariants of the form $\bigwedge_{x\in C} x \leq d_x$ can be put 
on all locations).

\begin{lemma}[Receptive strategies when all clocks bounded in $\bm{\A}$]
\label{lemma:ReceptiveBounded}
Let $\A$ be a timed automaton game in which all clocks are bounded (i.e., 
for all clocks $x$ we have $x \leq d_x$, for  constants $d_x$ in all reachable
states).
Let $\widetilde{\A}$ be the enlarged game structure obtained from $\A$.
Then player~1 has a receptive  strategy from a state $s$ of $\A$ iff 
$\tuple{s,\cdot}\in \sureu_1^{\widetilde{\A}}(\Phi)$, where
\[ \Phi = 
\Box\Diamond(bl_1=\true) \rightarrow
\left (
\left(\bigwedge_{x\in C}\Box\Diamond (x=0)\right) \wedge 
\left ( 
\begin{array}{c}
\Box\Diamond\, 
  (\bl_1=\true)\, \wedge\,\bigwedge_{x\in C} (V_{>0}(x)=\true)  \\
\vee \\
\Box\Diamond 
  (\bl_1=\false)\, \wedge\,  \bigvee_{x\in C} (V_{\geq 1}(x)=\true)  
\end{array}
\right )
\right ).
\]
\end{lemma}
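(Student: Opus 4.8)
The plan is to prove both directions by relating receptiveness of a player-1 strategy in $\A$ (equivalently in $\widetilde{\A}$) to the satisfaction of $\Phi$, using the analysis of spoiling strategies of player~2 built up in Subsection~\ref{subsection:SafetyFiniteStateGame}. The key semantic fact I would exploit is that time diverges along a run of $\A$ iff every clock is reset infinitely often OR some clock eventually stays large; but since all clocks are bounded (by hypothesis, every clock $x$ satisfies $x \le d_x$), the ``stays large'' case is ruled out, and a clock can only grow unboundedly via infinitely many resets. More precisely, I would first establish the elementary observation: a run $r$ of $\A$ with all clocks bounded is time-divergent iff for every clock $x$, $x$ is reset infinitely often; and moreover, between two consecutive resets of a clock $x$, the amount of time elapsed is at least $1$ iff $V_{\geq 1}(x)$ holds (or $V_{>0}(x)$ records a positive elapse) at the reset transition. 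This is the reason the $V_{>0}, V_{\geq 1}$ predicates were added to $\widetilde{\A}$: they let the controller infer, from region-level information alone, whether ``enough'' time has elapsed.

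For the forward direction (receptive $\Rightarrow$ $\tuple{s,\cdot}\in\sureu_1^{\widetilde{\A}}(\Phi)$), I would argue contrapositively: if $\tuple{s,\cdot}\notin\sureu_1^{\widetilde{\A}}(\Phi)$, then by Lemma~\ref{lemma:FiniteSafetyGame}, for the $\omega$-regular location objective $\Phi$ and any player-1 strategy $\pi_1$, player~2 has (a class of) spoiling strategies in $\widetilde{\A}$ obtained from spoiling strategies in the finite game $\widetilde{\A}^{\f}$. A run spoiling $\Phi$ either (i) has $\Box\Diamond(bl_1=\true)$ but violates the bracketed conjunction — meaning either some clock $x$ is reset only finitely often, or along the tail every player-1-blamed transition has some clock still $0$ and every player-1-unblamed transition has all clocks $<1$ — in which case, using the boundedness observation and the $V$-predicate semantics, time converges along that run while player~1 is blamed infinitely often, so the run lies outside $\td \cup \blameless_1$, contradicting receptiveness; or (ii) it has $\neg\Box\Diamond(bl_1=\true)$, i.e., player~1 is blamed only finitely often, but then the antecedent of $\Phi$ is false so the run actually satisfies $\Phi$ — so case (ii) cannot arise for a spoiling run. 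The subtle point I would need to nail down is the quantitative claim in case (i): that a run with the stated tail behavior genuinely has bounded total elapsed time. If some clock $x$ is reset only finitely often, boundedness forces $x$ to stay below $d_x$ forever after the last reset, so only finitely much time can pass — convergence. If instead every clock is reset infinitely often but infinitely often a blamed step has a clock at $0$ and cofinitely all unblamed steps keep clocks $<1$: here the argument is that the time between the relevant resets of the offending clock is $0$ (or can be made arbitrarily small by player~2's spoiling strategy, which is exactly what the $\tset_{\pi_1}$ construction in Equation~\ref{equation:FiniteSpoilingStrategy} gives — it picks moves of short duration whenever it can), forcing a zeno run for which player~1 is to blame infinitely often.

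For the reverse direction ($\tuple{s,\cdot}\in\sureu_1^{\widetilde{\A}}(\Phi)$ $\Rightarrow$ receptive), I would take the pure finite-memory region winning strategy $\pi_1$ for $\Phi$ guaranteed by Lemma~\ref{lemma:RegionStrategies}, viewed as a strategy of $\A$ (it only reads the region of the state of $\A$ together with the finite memory encoding $\bl_1, V_{>0}, V_{\geq 1}$, which $\widetilde{\A}$ tracks). I must show it is receptive, i.e., for every player-2 strategy and every outcome run $r$, either $r\in\td$ or $r\in\blameless_1$. Take such a run $r$ and its lift $\widetilde{r}$ in $\widetilde{\A}$; since $\pi_1$ ensures $\Phi$, $\widetilde{r}\models\Phi$. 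If player~1 is blamed only finitely often, $r\in\blameless_1$ and we are done. Otherwise $\Box\Diamond(bl_1=\true)$ holds along $\widetilde{r}$, so the bracketed conjunction holds: every clock is reset infinitely often, and infinitely often we see either a player-1-blamed transition with all clocks $\ge$ their positive threshold (via $V_{>0}$) or a player-1-unblamed transition with some clock $\ge 1$. Using the boundedness observation plus the $V$-predicate semantics, I would conclude that infinitely often at least a fixed positive amount of time (bounded below by the fractional-part resolution of regions, ultimately by a quantity $\ge$ some constant because $V_{\geq 1}$ certifies a full unit, or because between two resets registering $V_{>0}$ repeatedly one eventually accumulates $\ge 1$) passes, so $\runtime(r,k)\to\infty$, i.e., $r\in\td$. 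The main obstacle throughout is the careful bookkeeping of the $V_{>0}$ versus $V_{\geq 1}$ cases: showing that the disjunction in $\Phi$ (blamed-with-$V_{>0}$-on-all-clocks OR unblamed-with-$V_{\geq 1}$-on-some-clock), combined with ``all clocks reset infinitely often'', really does force the accumulated time to diverge — this requires a pigeonhole argument over the finitely many clocks, tracking for each clock the time between successive resets and arguing that it cannot go to $0$ along all clocks simultaneously without violating one of the two disjuncts cofinitely often. I would isolate that accounting as a separate claim (a ``time-divergence certificate'' lemma) and prove it first, then feed it into both directions.
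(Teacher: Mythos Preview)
Your forward direction (receptive $\Rightarrow$ winning for $\Phi$, argued contrapositively) is essentially the paper's argument: invoke Lemma~\ref{lemma:FiniteSafetyGame} to get a class of spoiling strategies, then pick one where player~2 plays delays shrinking like $1/2^k$ whenever it moves into a region with all clocks $<1$. That is fine.

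The reverse direction has a genuine gap. You take a region winning strategy $\pi_1$ for $\Phi$ and try to show that $\pi_1$ \emph{itself} is receptive, by arguing that any run satisfying the bracketed conjunction in $\Phi$ must be time-divergent. That implication is false. The predicate $V_{>0}(x)=\true$ only says the clock was positive at the transition point; it gives no lower bound. A run can satisfy ``all clocks reset infinitely often'' and ``infinitely often $\bl_1=\true$ with $\bigwedge_x V_{>0}(x)=\true$'' while the delays shrink geometrically and time converges (e.g.\ one clock, reset every step, value $2^{-k}$ at step $k$). Your proposed pigeonhole/certificate lemma, and the appeal to a ``fractional-part resolution of regions'', cannot work: an open region like $0<x<1$ has no positive minimum, so being in it does not bound elapsed time away from zero.

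What the paper does instead is not to prove $\pi_1$ receptive, but to \emph{modify} it to a region-equivalent strategy $\pi_1'$: whenever $\pi_1$ proposes a move $\tuple{\Delta,a_1}$ landing in a region where all clocks are $>0$, $\pi_1'$ picks $\Delta'$ in the \emph{same} destination region but near its far boundary, so that some clock $y$ exceeds $1/2$ there. By Lemma~\ref{lemma:RegionStrategies}, $\pi_1'$ is still winning for $\Phi$; and now the first disjunct (player~1 blamed with all $V_{>0}=\true$) actually certifies that some clock reached $>1/2$, which together with ``all clocks $=0$ infinitely often'' forces time divergence. You need this ``jump near the boundary'' step; the rest of your outline goes through once it is in place.
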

\begin{proof}
We prove inclusion in both directions.
\begin{enumerate}
\item ($\Leftarrow$). 
For a state $\widetilde{s}\in \sureu_1^{\widetilde{\A}}(\Phi)$, we
show that player~1 has a receptive strategy from $\widetilde{s}$.
Let $\pi_1$ be a pure sure winning region strategy: 
since $\Phi$ is an
$\omega$-regular region objective such a strategy exists by 
Lemma~\ref{lemma:RegionStrategies}.
Consider a strategy $\pi_1'$ for player~1 that is region-equivalent to $\pi_1$ 
such that whenever  the strategy 
$\pi_1$ proposes a move $\tuple{\Delta, a_1}$ for any  run prefix 
$\widetilde{r}[0..k]$ with
$\widetilde{r}[k]+\Delta$  satisfying  $\bigwedge_{x\in C}(x >0) $, then 
$\pi_1'$ proposes the move $\tuple{\Delta', a_1}$ for $\widetilde{r}[0..k]$
such that
$\reg(\widetilde{r}[k]+\Delta)=\reg(\widetilde{r}[k]+\Delta')$ and
$\widetilde{r}[k]+\Delta'$ satisfies
$(\vee_{y\in C}\ y>1/2)\,\wedge\, \bigwedge_{x\in C}(x >0)$.
Such a move always exists; in particular, for any state $\widetilde{s}$,
 if there exists $\Delta$ such that
 $\widetilde{s} + \Delta \in R \subseteq \bigwedge_{x\in C}(x >0) $, 
then there exists $\Delta'$ such that $\widetilde{s} + \Delta' \in R\, \cap
\left ((\vee_{y\in C}\ y>1/2)\, \wedge\,\bigwedge_{x\in C}(x >0) \right)$.
Intuitively, player~1 jumps near the boundary of $R$.
By Lemma~\ref{lemma:RegionStrategies}, $\pi_1'$ is also sure-winning for $\Phi$.
The strategy $\pi_1'$ ensures that in all resulting runs, if player~1 
is not blameless, then all clocks are 0 infinitely often (since for all
clocks $\Box \Diamond (x=0)$), and that some clock has value more than $1/2$
 infinitely
often (either due to player~1 ensuring some clock being greater than 1/2 
infinitely often; or player~2 playing moves which result in some clock being
greater than 1 infinitely often).. 
This implies time divergence.
Hence player~1 has a receptive winning strategy from $\widetilde{s}$.

\item ($\Rightarrow$).
  For  a state $\widetilde{s}\notin \sureu_1^{\widetilde{\A}}(\Phi)$, 
  we show that player~1 does not have any receptive strategy starting from state
  $\widetilde{s}$.
  We have $\neg\Phi \equiv
  \left(\Box\Diamond (bl_1=\true)\right) \wedge 
  \left(\neg\Psi_1 \vee (\neg\Psi_2\wedge \neg\Psi_2)\right)$, where
  \begin{eqnarray*}
    \neg\Psi_1^{\dagger} & = & \bigvee_{x\in C}\Diamond\Box (x>0)\\
    \neg\Psi_2^{\dagger} & = &  \Diamond\Box\,\left( (\bl_1=\true) \rightarrow 
      \left(\bigvee_{x\in C} \left(V_{>0}(x)=\false\right)\right)\right) \\
    \neg \Psi_3^{\dagger} & = & \Diamond\Box\,\left( (\bl_1=\false) \rightarrow 
      \left(\bigwedge_{x\in C} (V_{\geq 1}(x)=\false) \right)\right)
  \end{eqnarray*}


Recall the finite state game $\widetilde{\A}^{\f}$ based on the regions of
$\widetilde{\A}$.
Suppose $\widetilde{s}\notin\sureu_1^{\widetilde{\A}}(\Phi)$.
Then $\widetilde{s}\notin\pureu_1^{\widetilde{\A}}(\Phi)$ by 
Corollary~\ref{corollary:PureGame}.
Consider any pure player-1 strategy $\pi_1$ in $\widetilde{\A}$.
By Lemma~\ref{lemma:FiniteSafetyGame}, 
$\reg(\widetilde{s})\notin\pureu_1^{\widetilde{\A}^{\f}}(\Phi)$, and
there exists a runcover
$\runcover$ for $\VRuns^{\widetilde{\A}^{\f}}$
such that for   
any player-2 pure 
spoiling strategy $\pi_2^{\widetilde{\A}^{\f}}$ against 
$\mbf^{\runcover}\left(\pi_1\right)$ in $\widetilde{\A}^{\f}$ from
$\reg(\widetilde{s})$,
we have that
every player-2 strategy in 
$\spoil^{\runcover}(\pi_1, \pi_2^{\widetilde{\A}^{\f}})$ is
a spoiling strategy against $\pi_1$ in the structure  $\widetilde{\A}$.

Let $\runcover$ be such a runcover, and let 
$\pi_2^{\widetilde{\A}^{\f}}$ be any such player-2 strategy against 
$\mbf^{\runcover}\left(\pi_1\right)$ in $\widetilde{\A}^{\f}$ from
$\reg(\widetilde{s})$.
We show that with an appropriately chosen $\pi_2$ in 
$\spoil^{\runcover}(\pi_1, \pi_2^{\widetilde{\A}^{\f}})$,
player~2 can ensure that
in one of the resulting runs, player~1 is not blameless, and time converges,
 and hence player~1 does not have a receptive pure strategy in 
$\widetilde{\A}$.
The result follows from observing that if player~1 does not have a pure 
receptive strategy, then it does not have a (possibly randomized) receptive
strategy (as a randomized strategy may be viewed as a random choice over pure
strategies).

Consider runs $\widetilde{r} \in \outcomes(\widetilde{s},\pi_1,\pi_2)$ for
$\pi_2\in \spoil^{\runcover}(\pi_1, \pi_2^{\widetilde{\A}^{\f}})$.
One of the runs must satisfy $\neg\Phi$, which can happen in
 one of the following ways.
 \begin{enumerate}
 \item $(\Box\Diamond(\bl_1=\true))\, \wedge\, \neg\Psi_1$.
   The condition $\neg\Psi_1^{\dagger}$ means that there is some clock $x$
   which eventually stays strictly greater than 0.
   Since all clocks are bounded, this condition means that the run is time
   convergent, and player~1 is not blameless.
 \item 
   $(\Box\Diamond(\bl_1=\true))\, \wedge\,  \neg\Psi_2 \wedge
  \neg\Psi_3 $.
  The clause $\neg\Psi_2$ means that eventually
  if an action of player~1 is chosen, then for  some  clock $x$, 
  the value of $x$ stays at 0 throughout the move
  (which means that the  move of player-1 is of duration $0$).
  This clause $\neg\Psi_3$ means that eventually
  if an action of player~2 is chosen, then 
  for every  clock $x$, the value of $x$ is strictly less than 
  $1$ during the move.

  Player~2 can have a strategy which takes moves smaller than $1/2^j$ during
  the $j$-th visit to a region $\widetilde{R}$ in which  every clock $x$  has
  value less than 1.
  We formalize the above statement.
  The strategy $\pi_2^{\widetilde{\A}^{\f}}$ spoils 
  $\mbf^{\runcover}\left(\pi_1\right)$ from winning in $\widetilde{\A}^{\f}$
  for the objective $\Phi$.
  Given a run prefix $\widetilde{r}[0..k]$ of $\widetilde{\A}$, let 
  $\pi_1(\widetilde{r}[0..k]) = \tuple{\Delta_1,a_1}$.
  Consider  a player-2 strategy $\pi_2$ in 
  $\spoil^{\runcover}(\pi_1, \pi_2^{\widetilde{\A}^{\f}})$, and
  let $\pi_2^{\widetilde{\A}^{\f}}(\regmap(\widetilde{r}[0..k])) = 
  \tuple{\widetilde{R}_2,a_2,i}$.
  Let $\pi_2$ be a strategy in 
  $\spoil^{\runcover}(\pi_1, \pi_2^{\widetilde{\A}^{\f}})$
  such  that for
  $\pi_2((\widetilde{r}[0..k]) = \tuple{\Delta_2,a_2}$ we have
  $\Delta_2 \leq \Delta_1$ and $\Delta_2 < 1/2^k$ 
  whenever the following conditions hold.
  \begin{enumerate}
  \item 
    For every  clock $x$, the value of $x$ is strictly less than 
    $1$ in $\widetilde{R}_2$.
  \item 
    Either
    \begin{enumerate}
    \item 
      $\widetilde{R}_2$ is a region predecessor of
      $\reg(\widetilde{r}[k] +\Delta_1)$; or
    \item 
      $i=2$ and  $\reg(\widetilde{r}[k] +\Delta_1) = \widetilde{R}_2$.
    \end{enumerate}
  \end{enumerate}
  It can be observed from Equation~\ref{equation:FiniteSpoilingStrategy}
  that such
  a $\Delta_2$ and such a strategy $\pi_2$ in
  $\spoil^{\runcover}(\pi_1, \pi_2^{\widetilde{\A}^{\f}})$ 
  always exist.
  The above condition ensures that if a move of player~2 is chosen to
  a region $\widetilde{R}$ in which  every clock $x$ has
  value less than 1, then the 
  moves are smaller than $1/2^j$ during
  the $j$-th stage of the game.
  The strategy $\pi_2$ is a spoiling strategy against $\pi_1$ by 
  Lemma~\ref{lemma:FiniteSafetyGame} as $\pi_2$ is in
  $\spoil^{\runcover}(\pi_1, \pi_2^{\widetilde{\A}^{\f}})$.
  Moreover, this strategy ensures that at  least one of the resulting
  runs  $\widetilde{r}$ satisfies 
  $\neg\Phi$.
  \begin{enumerate}
  \item 
    If $\widetilde{r}$ satisfies 
    $(\Box\Diamond(\bl_1=\true))\, \wedge\, \neg\Psi_1$, then
    the run is time
    convergent, and player~1 is not blameless.
  \item 
    If   $\widetilde{r}$ satisfies 
  $(\Box\Diamond(\bl_1=\true))\, \wedge\,  \neg\Psi_2 \wedge
  \neg\Psi_3 $, then we have that:
  \begin{enumerate}
  \item 
    Eventually, every chosen move of player~2 results
    in a region $\widetilde{R}$ in which  every clock $x$ has
    value less than 1, with the duration of the player-2 move being
    smaller than $1/2^j$ during
    the $j$-th stage of the game; and
  \item 
    Eventually every chosen move of player~1 is of time duration 0.
  \end{enumerate}
  Thus, time is convergent in the run $\widetilde{r}$ and player~1 is not
  blameless.
\end{enumerate}
\end{enumerate}
Hence, 
player~1 does not have a pure receptive strategy from $\widetilde{s}$ 
(from which
it follows that it does not have any receptive  strategy from $\widetilde{s}$).
 \qed

\end{enumerate}
\end{proof}

We next present a couple of examples to demonstrate the role of the
various clauses in the the formula $\Phi$ of Lemma~\ref{lemma:ReceptiveBounded}.
\begin{example}
Consider the timed automaton game in Figure~\ref{figure:ExampleReceptiveOne}.
\begin{figure}[t]
      \strut\centerline{\setlength{\unitlength}{0.00043745in}
\begingroup\makeatletter\ifx\SetFigFontNFSS\undefined%
\gdef\SetFigFontNFSS#1#2#3#4#5{%
  \reset@font\fontsize{#1}{#2pt}%
  \fontfamily{#3}\fontseries{#4}\fontshape{#5}%
  \selectfont}%
\fi\endgroup%
{\renewcommand{\dashlinestretch}{30}
\begin{picture}(9834,5559)(0,-10)
\put(1268,1414){\ellipse{2520}{990}}
\put(8566,1321){\ellipse{2520}{990}}
\put(4900,3629){\ellipse{2520}{990}}
\path(5227,4074)(5230,4076)(5235,4080)
	(5245,4088)(5261,4101)(5282,4118)
	(5308,4139)(5339,4165)(5375,4195)
	(5413,4227)(5452,4262)(5492,4297)
	(5531,4332)(5568,4367)(5602,4401)
	(5633,4433)(5661,4463)(5684,4491)
	(5704,4517)(5719,4541)(5730,4563)
	(5737,4583)(5739,4602)(5738,4619)
	(5732,4634)(5723,4649)(5709,4662)
	(5692,4674)(5674,4684)(5653,4693)
	(5630,4702)(5604,4711)(5575,4718)
	(5544,4725)(5510,4732)(5475,4738)
	(5437,4743)(5398,4748)(5356,4752)
	(5313,4755)(5269,4758)(5225,4760)
	(5179,4761)(5133,4762)(5087,4762)
	(5041,4761)(4995,4759)(4951,4757)
	(4907,4754)(4865,4750)(4824,4746)
	(4785,4741)(4748,4736)(4713,4730)
	(4680,4723)(4650,4716)(4621,4708)
	(4596,4700)(4573,4691)(4552,4682)
	(4530,4669)(4513,4656)(4499,4641)
	(4490,4625)(4485,4607)(4485,4588)
	(4489,4567)(4497,4543)(4510,4518)
	(4526,4490)(4547,4460)(4571,4428)
	(4599,4394)(4630,4359)(4662,4323)
	(4696,4287)(4729,4253)(4761,4221)
	(4789,4193)(4814,4169)(4834,4150)(4867,4119)
\path(4694.727,4198.511)(4867.000,4119.000)(4776.888,4285.972)
\path(2437,1599)(2438,1600)(2441,1602)
	(2446,1606)(2454,1613)(2465,1622)
	(2480,1633)(2499,1648)(2522,1665)
	(2548,1685)(2578,1706)(2610,1730)
	(2645,1754)(2682,1779)(2721,1805)
	(2761,1830)(2802,1855)(2843,1880)
	(2885,1903)(2927,1926)(2969,1947)
	(3011,1967)(3054,1985)(3097,2003)
	(3140,2019)(3184,2035)(3230,2049)
	(3276,2062)(3324,2074)(3373,2085)
	(3424,2096)(3478,2106)(3534,2115)
	(3592,2124)(3631,2129)(3672,2135)
	(3713,2140)(3756,2145)(3800,2149)
	(3845,2154)(3891,2158)(3939,2162)
	(3988,2166)(4037,2170)(4088,2173)
	(4141,2177)(4194,2180)(4248,2183)
	(4303,2185)(4360,2188)(4417,2190)
	(4475,2192)(4534,2194)(4593,2196)
	(4654,2197)(4714,2198)(4775,2199)
	(4837,2199)(4898,2200)(4960,2200)
	(5022,2200)(5084,2199)(5145,2199)
	(5206,2198)(5267,2197)(5328,2196)
	(5388,2194)(5447,2192)(5505,2190)
	(5563,2188)(5620,2185)(5676,2183)
	(5731,2180)(5785,2177)(5838,2173)
	(5889,2170)(5940,2166)(5989,2162)
	(6038,2158)(6085,2154)(6131,2149)
	(6176,2145)(6220,2140)(6262,2135)
	(6304,2129)(6345,2124)(6401,2116)
	(6456,2107)(6509,2098)(6560,2088)
	(6609,2077)(6657,2066)(6704,2054)
	(6750,2041)(6795,2027)(6840,2012)
	(6884,1996)(6928,1979)(6972,1961)
	(7016,1942)(7060,1922)(7104,1901)
	(7148,1878)(7192,1855)(7236,1832)
	(7279,1808)(7321,1784)(7361,1760)
	(7400,1736)(7436,1714)(7470,1693)
	(7500,1674)(7527,1656)(7550,1641)
	(7569,1629)(7584,1618)(7595,1611)(7612,1599)
\path(7430.345,1653.785)(7612.000,1599.000)(7499.547,1751.821)
\path(7477,1104)(7476,1103)(7473,1101)
	(7468,1097)(7460,1090)(7449,1081)
	(7434,1070)(7415,1055)(7392,1038)
	(7366,1018)(7336,997)(7304,973)
	(7269,949)(7232,924)(7193,898)
	(7153,873)(7112,848)(7071,823)
	(7029,800)(6987,777)(6945,756)
	(6903,736)(6860,718)(6817,700)
	(6774,684)(6730,668)(6684,654)
	(6638,641)(6590,629)(6541,618)
	(6490,607)(6436,597)(6380,588)
	(6322,579)(6283,574)(6242,568)
	(6201,563)(6158,558)(6114,554)
	(6069,549)(6023,545)(5975,541)
	(5926,537)(5877,533)(5826,530)
	(5773,526)(5720,523)(5666,520)
	(5611,518)(5554,515)(5497,513)
	(5439,511)(5380,509)(5321,507)
	(5260,506)(5200,505)(5139,504)
	(5077,504)(5016,503)(4954,503)
	(4892,503)(4830,504)(4769,504)
	(4708,505)(4647,506)(4586,507)
	(4526,509)(4467,511)(4409,513)
	(4351,515)(4294,518)(4238,520)
	(4183,523)(4129,526)(4076,530)
	(4025,533)(3974,537)(3925,541)
	(3876,545)(3829,549)(3783,554)
	(3738,558)(3694,563)(3652,568)
	(3610,574)(3570,579)(3513,587)
	(3458,596)(3405,605)(3354,615)
	(3305,626)(3257,637)(3210,649)
	(3164,662)(3119,676)(3074,691)
	(3030,707)(2986,724)(2942,742)
	(2898,761)(2854,781)(2810,802)
	(2766,825)(2722,848)(2678,871)
	(2635,895)(2593,919)(2553,943)
	(2514,967)(2478,989)(2444,1010)
	(2414,1029)(2387,1047)(2364,1062)
	(2345,1074)(2330,1085)(2319,1092)(2302,1104)
\path(2483.655,1049.215)(2302.000,1104.000)(2414.453,951.179)
\path(1177,1914)(1177,1916)(1177,1922)
	(1177,1932)(1177,1948)(1178,1970)
	(1178,1998)(1179,2032)(1179,2072)
	(1180,2116)(1181,2164)(1182,2214)
	(1183,2265)(1185,2317)(1186,2368)
	(1188,2418)(1189,2465)(1191,2511)
	(1193,2554)(1196,2595)(1198,2634)
	(1200,2670)(1203,2703)(1206,2735)
	(1209,2765)(1213,2794)(1216,2821)
	(1221,2847)(1225,2872)(1230,2896)
	(1234,2920)(1240,2944)(1246,2967)
	(1252,2989)(1259,3012)(1267,3034)
	(1276,3056)(1286,3077)(1296,3099)
	(1308,3120)(1321,3141)(1335,3161)
	(1350,3181)(1367,3201)(1384,3220)
	(1403,3238)(1424,3256)(1445,3273)
	(1468,3290)(1492,3306)(1517,3321)
	(1543,3336)(1571,3349)(1600,3363)
	(1630,3375)(1661,3387)(1693,3398)
	(1727,3409)(1763,3419)(1800,3429)
	(1827,3436)(1856,3442)(1885,3449)
	(1916,3455)(1949,3461)(1983,3467)
	(2019,3473)(2056,3479)(2096,3485)
	(2138,3490)(2182,3496)(2229,3502)
	(2278,3508)(2329,3513)(2384,3519)
	(2441,3525)(2501,3531)(2563,3537)
	(2628,3543)(2695,3549)(2764,3555)
	(2834,3561)(2905,3567)(2977,3573)
	(3049,3578)(3119,3584)(3188,3589)
	(3255,3595)(3317,3599)(3376,3604)
	(3430,3608)(3478,3611)(3520,3614)
	(3556,3617)(3585,3619)(3608,3621)
	(3626,3622)(3652,3624)
\path(3477.132,3550.371)(3652.000,3624.000)(3467.928,3670.018)
\path(8512,1824)(8512,1826)(8512,1831)
	(8512,1840)(8512,1855)(8511,1875)
	(8511,1901)(8510,1932)(8509,1970)
	(8508,2012)(8507,2058)(8506,2107)
	(8505,2158)(8503,2211)(8501,2263)
	(8499,2315)(8497,2365)(8495,2413)
	(8492,2460)(8489,2504)(8487,2546)
	(8483,2586)(8480,2623)(8477,2658)
	(8473,2691)(8469,2722)(8464,2752)
	(8460,2780)(8455,2807)(8449,2832)
	(8443,2857)(8437,2881)(8430,2907)
	(8422,2931)(8414,2955)(8405,2979)
	(8395,3002)(8385,3026)(8373,3048)
	(8361,3071)(8347,3093)(8333,3115)
	(8317,3136)(8300,3157)(8283,3178)
	(8264,3198)(8243,3218)(8222,3236)
	(8200,3255)(8176,3272)(8152,3289)
	(8127,3305)(8100,3321)(8073,3335)
	(8045,3349)(8016,3363)(7986,3375)
	(7955,3387)(7924,3398)(7891,3409)
	(7857,3419)(7822,3429)(7795,3436)
	(7767,3443)(7737,3450)(7707,3456)
	(7675,3463)(7642,3469)(7607,3475)
	(7571,3481)(7532,3487)(7492,3493)
	(7449,3499)(7404,3505)(7356,3511)
	(7306,3517)(7253,3524)(7198,3530)
	(7140,3536)(7080,3542)(7018,3549)
	(6954,3555)(6889,3561)(6823,3567)
	(6756,3574)(6690,3580)(6625,3585)
	(6562,3591)(6502,3596)(6445,3601)
	(6393,3606)(6346,3610)(6304,3613)
	(6269,3616)(6239,3619)(6216,3620)
	(6198,3622)(6172,3624)
\path(6356.072,3670.018)(6172.000,3624.000)(6346.868,3550.371)
\put(1132,1599){\makebox(0,0)[lb]{\smash{{\SetFigFontNFSS{11}{13.2}{\familydefault}{\mddefault}{\updefault}$l_1$}}}}
\put(8422,1509){\makebox(0,0)[lb]{\smash{{\SetFigFontNFSS{11}{13.2}{\familydefault}{\mddefault}{\updefault}$l_2$}}}}
\put(3022,114){\makebox(0,0)[lb]{\smash{{\SetFigFontNFSS{11}{13.2}{\familydefault}{\mddefault}{\updefault}$1>x\wedge y>0 \rightarrow x:=0$}}}}
\put(3022,2364){\makebox(0,0)[lb]{\smash{{\SetFigFontNFSS{11}{13.2}{\familydefault}{\mddefault}{\updefault}$1>y\wedge x>0 \rightarrow y:=0$}}}}
\put(4507,654){\makebox(0,0)[lb]{\smash{{\SetFigFontNFSS{11}{13.2}{\familydefault}{\mddefault}{\updefault}$a_2^0$}}}}
\put(4507,1869){\makebox(0,0)[lb]{\smash{{\SetFigFontNFSS{11}{13.2}{\familydefault}{\mddefault}{\updefault}$a_1^0$}}}}
\put(4687,3804){\makebox(0,0)[lb]{\smash{{\SetFigFontNFSS{11}{13.2}{\familydefault}{\mddefault}{\updefault}$l_3$}}}}
\put(3877,3444){\makebox(0,0)[lb]{\smash{{\SetFigFontNFSS{11}{13.2}{\familydefault}{\mddefault}{\updefault}$x\leq 3\wedge y\leq 3$}}}}
\put(322,1194){\makebox(0,0)[lb]{\smash{{\SetFigFontNFSS{11}{13.2}{\familydefault}{\mddefault}{\updefault}$y\leq1 \wedge x\leq 1$}}}}
\put(7567,1104){\makebox(0,0)[lb]{\smash{{\SetFigFontNFSS{11}{13.2}{\familydefault}{\mddefault}{\updefault}$x\leq1\wedge y\leq 1$}}}}
\put(97,3714){\makebox(0,0)[lb]{\smash{{\SetFigFontNFSS{11}{13.2}{\familydefault}{\mddefault}{\updefault}$y=1\rightarrow x:=0, y:=0$}}}}
\put(7522,3579){\makebox(0,0)[lb]{\smash{{\SetFigFontNFSS{11}{13.2}{\familydefault}{\mddefault}{\updefault}$x=1\rightarrow x:=0, y:=0$}}}}
\put(3472,4884){\makebox(0,0)[lb]{\smash{{\SetFigFontNFSS{11}{13.2}{\familydefault}{\mddefault}{\updefault}$x=3 \rightarrow x:=0, y:=0$}}}}
\put(3472,5289){\makebox(0,0)[lb]{\smash{{\SetFigFontNFSS{11}{13.2}{\familydefault}{\mddefault}{\updefault}$a_2^2, a_1^2$}}}}
\put(1312,4074){\makebox(0,0)[lb]{\smash{{\SetFigFontNFSS{11}{13.2}{\familydefault}{\mddefault}{\updefault}$a_2^1$}}}}
\put(7522,3984){\makebox(0,0)[lb]{\smash{{\SetFigFontNFSS{11}{13.2}{\familydefault}{\mddefault}{\updefault}$a_1^1$}}}}
\end{picture}
}}
      \caption{A time automaton game $\A_1$ with player-1 receptive strategies.}
      \label{figure:ExampleReceptiveOne}
    \end{figure}
The edges $a_1^j$ are player-1 edges and $a_2^j$ player-2 edges.
The edges $a_2^2$ and $a_1^2$ have the same guards and reset maps.
It is clear that player~1 has a receptive strategy when at location $l_3$; it
repeatedly takes (or tries to take) the edge $a_1^2$.
Let us hence focus our attention on plays which consist of $(l_1, l_2)$ 
cycles (i.e., player~2 picks the edge $a_2^0$ from location $l_2$, and allows
player~1 to take the edge $a_1^0$ from location $l_1$).
Let the starting state satisfy $(x <1) \wedge (y <1)$.
In a run which consists of $(l_1, l_2)$ cycles, we have that
(1)~both clocks are reset infinitely often, and
(2)~both clocks are greater than 0 infinitely often when the
edge $a_1^0$ is taken (this is because the condition on the edge
$a_2^0$ ensures that clock $y$ is greater than 0 when at location
$l_1$, and the edge condition on $a_1^0$ further ensure $x > 0$ when
edge $a_1^0$ is taken). Thus, a run of $(l_1, l_2)$ cycles satisfies
the formula $\Phi$ of Lemma~\ref{lemma:ReceptiveBounded}. We next illustrate
why such a run would be time-divergent (with appropriate chosen player-1
moves for the edge $a_1^0$).

Observe that after one $(l_1, l_2)$ cycle, the states always
satisfy $1 > x>y>0$ when at $l_1$, and  $1 > y>x>0$ when at $l_2$.
Figure~\ref{figure:ExampleReceptiveOneRegions} illustrates two 
paths through these
two regions after at least one $(l_1, l_2)$ cycle.
\begin{figure}[t]
  \begin{minipage}[t]{0.5\linewidth}
    \strut\centerline{\setlength{\unitlength}{0.00043745in}
\begingroup\makeatletter\ifx\SetFigFontNFSS\undefined%
\gdef\SetFigFontNFSS#1#2#3#4#5{%
  \reset@font\fontsize{#1}{#2pt}%
  \fontfamily{#3}\fontseries{#4}\fontshape{#5}%
  \selectfont}%
\fi\endgroup%
{\renewcommand{\dashlinestretch}{30}
\begin{picture}(4818,4153)(0,-10)
\thicklines
\path(510,4105)(510,505)(4785,505)
\path(510,505)(3660,3655)
\path(510,3655)(4740,3655)
\path(3660,4105)(3660,505)
\thinlines
\dashline{45.000}(1140,505)(1590,865)
\path(1515.037,766.611)(1590.000,865.000)(1477.555,813.463)
\dashline{45.000}(1590,865)(510,865)
\path(630.000,895.000)(510.000,865.000)(630.000,835.000)
\dashline{45.000}(555,865)(2490,2710)
\path(2423.854,2605.479)(2490.000,2710.000)(2382.449,2648.903)
\dashline{45.000}(2490,2710)(2490,505)
\path(2460.000,625.000)(2490.000,505.000)(2520.000,625.000)
\dashline{45.000}(2490,550)(3615,1720)
\path(3553.452,1612.707)(3615.000,1720.000)(3510.202,1654.293)
\dashline{45.000}(3615,1720)(555,1720)
\path(675.000,1750.000)(555.000,1720.000)(675.000,1690.000)
\dashline{45.000}(555,1720)(2310,3340)
\path(2242.172,3236.562)(2310.000,3340.000)(2201.475,3280.650)
\put(465,100){\makebox(0,0)[lb]{\smash{{\SetFigFontNFSS{11}{13.2}{\familydefault}{\mddefault}{\updefault}$x$}}}}
\put(3570,100){\makebox(0,0)[lb]{\smash{{\SetFigFontNFSS{11}{13.2}{\familydefault}{\mddefault}{\updefault}$1$}}}}
\put(15,595){\makebox(0,0)[lb]{\smash{{\SetFigFontNFSS{11}{13.2}{\familydefault}{\mddefault}{\updefault}$y$}}}}
\put(15,3610){\makebox(0,0)[lb]{\smash{{\SetFigFontNFSS{11}{13.2}{\familydefault}{\mddefault}{\updefault}$1$}}}}
\end{picture}
}}
    \end{minipage}
  \begin{minipage}[t]{0.5\linewidth}
    \strut\centerline{\setlength{\unitlength}{0.00043745in}
\begingroup\makeatletter\ifx\SetFigFontNFSS\undefined%
\gdef\SetFigFontNFSS#1#2#3#4#5{%
  \reset@font\fontsize{#1}{#2pt}%
  \fontfamily{#3}\fontseries{#4}\fontshape{#5}%
  \selectfont}%
\fi\endgroup%
{\renewcommand{\dashlinestretch}{30}
\begin{picture}(4818,4153)(0,-10)
\thicklines
\path(510,4105)(510,505)(4785,505)
\path(510,505)(3660,3655)
\path(510,3655)(4740,3655)
\path(3660,4105)(3660,505)
\thinlines
\dashline{45.000}(1140,505)(3075,2440)
\path(3011.360,2333.934)(3075.000,2440.000)(2968.934,2376.360)
\dashline{45.000}(3075,2440)(510,2440)
\path(630.000,2470.000)(510.000,2440.000)(630.000,2410.000)
\dashline{45.000}(510,2395)(1770,3565)
\path(1702.478,3461.362)(1770.000,3565.000)(1661.651,3505.330)
\dashline{45.000}(1770,3520)(1770,505)
\path(1740.000,625.000)(1770.000,505.000)(1800.000,625.000)
\dashline{45.000}(1770,505)(3570,2305)
\path(3506.360,2198.934)(3570.000,2305.000)(3463.934,2241.360)
\dashline{45.000}(3570,2305)(510,2305)
\path(630.000,2335.000)(510.000,2305.000)(630.000,2275.000)
\dashline{45.000}(510,2305)(1950,3610)
\path(1881.227,3507.188)(1950.000,3610.000)(1840.936,3551.647)
\put(465,100){\makebox(0,0)[lb]{\smash{{\SetFigFontNFSS{11}{13.2}{\familydefault}{\mddefault}{\updefault}$x$}}}}
\put(3570,100){\makebox(0,0)[lb]{\smash{{\SetFigFontNFSS{11}{13.2}{\familydefault}{\mddefault}{\updefault}$1$}}}}
\put(15,595){\makebox(0,0)[lb]{\smash{{\SetFigFontNFSS{11}{13.2}{\familydefault}{\mddefault}{\updefault}$y$}}}}
\put(15,3610){\makebox(0,0)[lb]{\smash{{\SetFigFontNFSS{11}{13.2}{\familydefault}{\mddefault}{\updefault}$1$}}}}
\end{picture}
}}
  
  \end{minipage}
  \label{figure:ExampleReceptiveOneRegions} 
  \caption{Two trajectories of 
    the cycle $(a_1^0, a_2^0)$ traversing through two regions of $\A_1$.
  }
\end{figure}
Note that the transitions into the region $1 > x>y>0$ are controlled by
player~2, and those into $1 > y>x>0$ controlled by player~1.
In the second trajectory, player~1 is \emph{not} able to take transitions
which make the clock $x$ more than $1/2$; but it is able to ensure that
the clock $y$ is more than  $1/2$ infinitely often.
Since the clock $y$ is more than $1/2$ infinitely often and is also reset 
infinitely often, time diverges (we will present a more formal 
 proof of time divergence of the run shortly).
It is easy to construct another timed automaton $\A_*$ in which player~1
can only ensure that clock $x$ is more than $1/2$ infinitely often.
It can then be seen that the automatons $\A_1$ and $\A_*$ can be ``combined''
by a player-2 action so that player~1 can only ensure that \emph{some} clock
is more than $1/2$ infinitely often; it cannot ensure that any one 
\emph{particular} clock will satisfy this property.
To ensure time divergence, player~1 hence also needs to ensure that 
\emph{all} clocks are reset infinitely often (as it does not know
which clock will be more than $1/2$ infinitely often).

We now formally show time divergence of the runs shown in
Figure~\ref{figure:ExampleReceptiveOneRegions}.
Let the duration of the $j$-th player~2 move be $\Delta_2^j$.
The value of the clock $y$ is then $\Delta_2^j$ when  location $l_1$
is entered for the $j$-th time, after the $j$-th $a_2^0$ move.
Player~1 picks its $j$-th $a_1^0$ move to be of duration $1-\Delta_2^j +
\varepsilon$.
Thus, in one cycle time passes by $1-\varepsilon$ time units.
With $\varepsilon <1$, it can be seen that time diverges. 
\qed
\end{example}

\begin{example}
In this example we illustrate why we require in the formula $\Phi$ of
Lemma~\ref{lemma:ReceptiveBounded} that if 
$\Box\Diamond\, 
  (\bl_1=\true)\, \wedge\,\bigwedge_{x\in C} (V_{>0}(x)=\true)$ does not hold,
then
$\Box\Diamond 
  (\bl_1=\false)\, \wedge\,  \bigvee_{x\in C} (V_{\geq 1}(x)=\true) $ must hold.
Consider the timed automaton game $\A_2$ in 
Figure~\ref{figure:ExampleReceptiveTwo}
\begin{figure}[t]
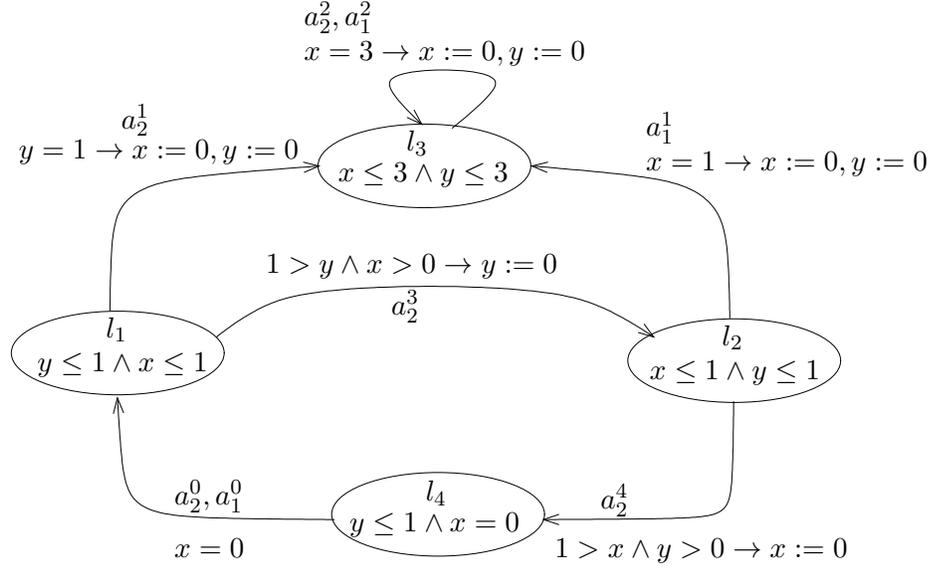

\strut\centerline{\input Figures/example-receptive-two.eepic}
\caption{A timed automaton game $\A_2$ without player-1 receptive strategies.}
\label{figure:ExampleReceptiveTwo}
\end{figure}
The edges $a_1^j$ are player-1 edges and $a_2^j$ player-2 edges.
The edges $a_2^2$ and $a_1^2$ have the same guards and reset maps.
It is clear that player~1 has a receptive strategy when at location $l_3$; it
repeatedly takes (or tries to take) the edge $a_1^2$.
Hence, player~2 keeps the game in the $(l_1, l_2, l_4)$.
For the $j$-th $a_2^3$ and the $j$-th  $a_2^4$ move, player~2  chooses a time
duration of $1/2^j$.
Player~1 is forced to take the move $a_1^0$ (of time duration 0) 
when at location $l_4$.
In this cycle with such a strategy by player~2, we have that
(1)~all clocks are reset infinitely often,
(2)~the moves of player~1 are picked infinitely often, and
(3)~all clock values are greater than 0 infinitely often
(i.e., $\Box\Diamond \bigwedge_{x\in C} (V_{>0}(x)=\true)$ holds).
But, time converges in such a run (and thus player~1 does not have  a
receptive strategy).
The states in $l_1,l_2,l_4$ (with $x<1\wedge y<1$) do \emph{not} satisfy 
$\Phi$ of
Lemma~\ref{lemma:ReceptiveBounded} because even though 
$\Box\Diamond \bigwedge_{x\in C} (V_{>0}(x)=\true)$ holds,
$\Box\Diamond\, 
  (\bl_1=\true)\, \wedge\,\bigwedge_{x\in C} (V_{>0}(x)=\true)$ does not hold.
As this example shows, if player~2 picks moves to
satisfy $\bigwedge_{x\in C} (V_{>0}(x)=\true)$, then it can choose arbitrarily 
small moves.
That is why require that if we are considering player~2 moves, then
$\bigvee_{x\in C} (V_{\geq 1}(x)=\true) $ must hold infinitely often.
\qed
\end{example}

\smallskip\noindent\textbf{Characterization of receptive strategies
for general timed automaton games(\cite{KCHenPra08})}.
Lemma~\ref{lemma:ReceptiveBounded} was generalized to all timed automaton games
in the following lemma presented in~\cite{KCHenPra08}.
The idea of the generalization is to identify the subset of clocks which
``escape'' to infinity; and then to take a disjunction over all such possible
subsets.
Note that 
once a clock $x$ becomes more than $c_x$, then its actual value can be
considered irrelevant in determining regions.
If only the clocks in $X\subseteq C$ have escaped beyond their maximum tracked
values, the rest of the clocks still need to be tracked.

\begin{lemma}[\cite{KCHenPra08}]
\label{lemma:ReceptiveGeneral}
Let $\A$ be a timed automaton game, and $\widetilde{\A}$ be the corresponding
enlarged game.
Then player~1 has a receptive strategy from a state $s$ iff 
$\tuple{s,\cdot}\in \sureu_1^{\widetilde{\A}}(\Phi^*)$, where
$\Phi^*= \Box\Diamond(bl_1=\true) \rightarrow 
\bigvee_{X\subseteq C}\ \phi_X$, and $\phi_X =$
\[
  \left (\bigwedge_{x\in  X}\Diamond\Box ( x > c_x) \right )
  \wedge\\
  \left (
    \left(\bigwedge_{x\in C\setminus X}\Box\Diamond (x=0)\right) \wedge 
    \left ( 
      \begin{array}{c}
        \Box\Diamond \left ((\bl_1=\true)\,\wedge\,
          \bigwedge_{x\in C\setminus X}(V_{>0}(x)=\true)  \right )\\
        \vee \\
        \Box\Diamond \left( (\bl_1=\true)\,\wedge\,
          \bigvee_{x\in C\setminus   X} (V_{\geq 1}(x)=\true)  \right )
      \end{array}
    \right )
  \right )
\]
\end{lemma}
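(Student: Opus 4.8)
The plan is to prove both inclusions, in each direction re‑using the machinery of the bounded case (Lemma~\ref{lemma:ReceptiveBounded}) but stratified by the set $X\subseteq C$ of clocks that \emph{escape} along a run, i.e.\ for which $\Diamond\Box(x>c_x)$ holds. The structural fact driving everything is that a clock escapes iff it is eventually never reset and grows past $c_x$ (a reset would bring it back below $c_x$); so along a run with escape set $X$ the clocks of $X$ are ultimately frozen above their constants and invisible to the region equivalence, while the clocks of $C\setminus X$ behave exactly like the (bounded) clocks of Lemma~\ref{lemma:ReceptiveBounded} applied to the remaining clock set. Accordingly $\phi_X$ is the guarded conjunction of the escape condition $\bigwedge_{x\in X}\Diamond\Box(x>c_x)$ with a copy of the formula $\Phi$ of Lemma~\ref{lemma:ReceptiveBounded} instantiated on $C\setminus X$, and $\Phi^{*}$ ranges over all escape sets. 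Since $\Phi^{*}$ is an $\omega$-regular region objective of $\widetilde{\A}$, Lemma~\ref{lemma:RegionStrategies}, Lemma~\ref{lemma:FiniteSafetyGame} and Corollary~\ref{corollary:PureGame} are available.

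For ($\Leftarrow$), assume $\tuple{s,\cdot}\in\sureu_1^{\widetilde{\A}}(\Phi^{*})$ and take, by Lemma~\ref{lemma:RegionStrategies}, a pure region strategy $\pi_1$ sure‑winning for $\Phi^{*}$. I would modify it to a region‑equivalent (hence still winning, by Lemma~\ref{lemma:RegionStrategies}) strategy $\pi_1'$ that (i) reproduces the boundary‑jump modification of Lemma~\ref{lemma:ReceptiveBounded}: whenever $\pi_1$ proposes a delay into a region in which every clock is positive, $\pi_1'$ proposes a region‑equivalent delay to the upper boundary of that region, forcing some clock whose value is still tracked (hence $\le c_x$) to exceed $1/2$; and (ii) whenever all clocks currently exceed their constants, proposes delays of length at least $1$ (this is region‑respecting, since in such a corner region every positive delay, and every delay followed by any enabled edge, lands in the same destination region, and the invariant there never expires). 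Now consider any outcome. If $\bl_1$ eventually fails, player~1 is blameless. Otherwise $\Box\Diamond(\bl_1=\true)$ holds, so some $\phi_X$ holds along the run. If $X\neq C$, every clock of $C\setminus X$ is reset infinitely often and, by (i) together with the second conjunct of $\phi_X$, infinitely often some clock of $C\setminus X$ reaches value $>1/2$ before a reset; by pigeonhole a fixed such clock does so infinitely often, yielding infinitely many disjoint time intervals of length $>1/2$, hence time divergence. If $X=C$, modification (ii) ensures every transition blamed on player~1 after all clocks escape consumes at least $1$ time unit, and since $\bl_1$ holds infinitely often, time diverges. Thus $\pi_1'$ is receptive.

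For ($\Rightarrow$), assume $\tuple{s,\cdot}\notin\sureu_1^{\widetilde{\A}}(\Phi^{*})$ and fix an arbitrary player‑1 strategy $\pi_1$, pure without loss of generality by Lemma~\ref{lemma:PureStrategies}. By Corollary~\ref{corollary:PureGame} and Lemma~\ref{lemma:FiniteSafetyGame}, there is a run‑cover $\runcover$ of $\VRuns^{\widetilde{\A}^{\f}}$ and a pure player‑2 strategy $\pi_2^{\widetilde{\A}^{\f}}$ spoiling $\mbf^{\runcover}(\pi_1)$ for $\Phi^{*}$ in $\widetilde{\A}^{\f}$, and every member of $\spoil^{\runcover}(\pi_1,\pi_2^{\widetilde{\A}^{\f}})$ spoils $\pi_1$ in $\widetilde{\A}$. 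Among these I would choose, exactly as in the proof of Lemma~\ref{lemma:ReceptiveBounded}, a strategy $\pi_2$ whose $k$-th move is shorter than $1/2^{k}$ whenever it is directed into a region in which every not‑yet‑escaped clock is below $1$. On the resulting outcome $\widetilde{r}$ we have $\Box\Diamond(\bl_1=\true)$ and $\neg\phi_X$ for every $X$; let $X^{*}$ be the escape set realised by $\widetilde{r}$. Since the first conjunct of $\phi_{X^{*}}$ holds by definition of $X^{*}$, its bounded‑case part must fail on $C\setminus X^{*}$: either some clock $y\in C\setminus X^{*}$ is not reset infinitely often, and being non‑escaping it is ultimately monotone and bounded by $c_y$, which bounds the remaining elapsed time; or all clocks of $C\setminus X^{*}$ are reset infinitely often and, as in Lemma~\ref{lemma:ReceptiveBounded}, player~1's moves are eventually of duration $0$ while player~2's moves keep these clocks below $1$ and shrink below $1/2^{k}$, so the total time is finite. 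Either way time converges while $\bl_1$ holds infinitely often, so $\pi_1$ is not receptive; as $\pi_1$ was arbitrary, player~1 has no receptive strategy from $s$.

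The step I expect to be the main obstacle is the $\Rightarrow$ direction's \emph{uniform} handling of the disjunction: player~2 does not get to pick $X$, so the single spoiling strategy obtained from $\widetilde{\A}^{\f}$ — which by Lemma~\ref{lemma:FiniteSafetyGame} already defeats $\bigvee_X\phi_X$ — must be refined by the geometric‑shrinking trick so that it simultaneously forces a time‑convergent, not‑blameless run whatever escape set $X^{*}$ the play happens to exhibit. Threading the ``escaping clocks are ultimately frozen and region‑invisible'' observation through the induction underlying Lemma~\ref{lemma:FiniteSafetyGame}, and getting the case analysis on $C\setminus X^{*}$ (not‑reset‑infinitely‑often versus reset‑infinitely‑often‑with‑vanishing‑delays) to cover every contingency, is the delicate bookkeeping; the degenerate corner $X=C$ in the $\Leftarrow$ direction, where divergence must come from the bounded‑below‑delay modification (ii) rather than from the bounded‑case clause, is the other point needing care.
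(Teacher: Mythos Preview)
The paper does not prove this lemma: it is quoted from~\cite{KCHenPra08}, and the only argument given in the present paper is the one-sentence remark preceding the statement, namely that one ``identif[ies] the subset of clocks which `escape' to infinity; and then take[s] a disjunction over all such possible subsets.'' So there is no in-paper proof to compare against beyond that hint. Your proposal is exactly an elaboration of that hint, and the overall architecture --- reduce to Lemma~\ref{lemma:ReceptiveBounded} on the non-escaping clocks, handle the all-escaped corner separately, and in the converse direction let the realised escape set $X^{*}$ of the spoiled run select which disjunct to analyse --- is the right one.

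Two points deserve attention. First, in the $(\Rightarrow)$ direction you phrase the shrinking condition on $\pi_2$ as ``every not-yet-escaped clock is below~$1$''; this reads as history-dependent, whereas the condition you actually need (and which the paper uses in the analogous step of Lemma~\ref{lemma:NewReceptiveGeneral}) is purely region-based: shrink whenever the target region is one in which every clock is either $<1$ or $>c_x$. That reformulation is what makes the choice of $\pi_2$ uniform over all possible $X^{*}$, which you correctly flag as the delicate point. Second, note that the second inner disjunct of $\phi_X$ as stated carries $\bl_1=\true$ (not $\bl_1=\false$ as in Lemma~\ref{lemma:ReceptiveBounded}); your case~(b) analysis silently treats it as the Lemma~\ref{lemma:ReceptiveBounded} clause, so when you unwind $\neg\phi_{X^{*}}$ you should check that the negated second disjunct still lets you conclude that player~2's winning moves land in the ``all non-escaped clocks $<1$'' regions --- otherwise your shrinking trick does not bite. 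With those two clarifications the argument goes through.
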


\smallskip\noindent\textbf{New characterization of receptive strategies
for general timed automaton games}.
We shall see later that player-1 strategies which win for the objective
$\Phi^*$ of Lemma~\ref{lemma:ReceptiveGeneral} have a bound of
$(|C+1|)^{2^{|C|}}$ for the number of memory states required.
We present a new characterization of receptive strategies for which we can
prove a memory bound of only $(|C|+1)$.
First, we need to add $|C|$ predicates to the game structure $\widetilde{\A}$.
For a state $s$ of $\A$, we define another  function
 $V_{>\max}^* : C \mapsto \set{\true,\false}$.
The value of the predicate 
$V_{> \max}^*(x)$ for a clock $x\in C$ is $\true$ at a state 
$s$ iff the value of clock $x$ is more than $c_x$, and was more than $c_x$ in 
the previous state.
That is, if a state $s^p= \tuple{l^p,\kappa^p}$ and 
$\delta(s^p, \tuple{\Delta,a_1}) = s$, then at the state
$s$, the predicate  $V_{>\max}^*(x)$ is $\true$ iff 
$\kappa'(x) > c_x$ for $\kappa' \in \set{\kappa^p+\Delta' 
\mid 0\leq \Delta'\leq \Delta}$.
Let $\ddot{\A}$ be the enlarged game structure similar to $\widetilde{\A}$
with the state space being enlarged to also 
 have $V_{>\max}^*$ values (in addition to $V_{>0}$ and 
$V_{\geq 1}$ values): $\ddot{S} = 
S \times\set{\true,\false} \times \set{\true,\false}^C \times 
\set{\true,\false}^C \times \set{\true,\false}^C$.
A  state of $\widetilde{\A}$ is a tuple 
$\tuple{s,\bl_1, V_{>0}, V_{\geq 1}, V_{ > \max}^*}$, where $s$ is a state 
of $\A$, 
the component $\bl_1$ is  $\true$ iff player~1 is to be blamed for 
the last transition, and
$V_{>0}, V_{\geq 1}, V_{>\max}^*$ are as defined earlier.
A finite state concurrent game $\ddot{\A}^{\f}$ analogous to
$\widetilde{\A}^{\f}$ can be constructed, and results
analogous to Lemmas~\ref{lemma:FiniteConcurrentCPre}, 
\ref{lemma:FiniteConcurrentEquivalence} and~\ref{lemma:FiniteSafetyGame}
hold for the structures $\ddot{\A}$ and $\ddot{\A}^{\f}$.

First we present the following technical Lemma which will be used later.

\begin{lemma}
\label{lemma:StreettElim}
Let $\A$ be a timed automaton game, and $\ddot{\A}$ be the corresponding
enlarged game.
A run $\ddot{r}$ in $\ddot{\A}$ satisfies 
\[\bigwedge_{x\in C}\left(\Box\Diamond (x=0)\,\vee\,
          \Diamond\Box(V^*_{> \max}(x) = \true)\right)\] iff
it satisfies
\[\bigwedge_{x\in C}\Box\Diamond\, \left((x=0)\,\vee\, 
  (V^*_{> \max}(x) = \true)\right).\]
\end{lemma}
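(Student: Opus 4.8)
The plan is to reduce to a single clock: both sides of the claimed equivalence are conjunctions ranging over $x\in C$, so it suffices to show, for each fixed clock $x$, that the run $\ddot{r}$ satisfies
\[
\Big(\Box\Diamond (x=0)\ \vee\ \Diamond\Box\,(V^*_{>\max}(x)=\true)\Big)
\quad\text{iff}\quad
\Box\Diamond\big((x=0)\,\vee\,(V^*_{>\max}(x)=\true)\big).
\]

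For the ``only if'' direction there is nothing to do beyond temporal-logic bookkeeping, and no property of the game is needed: $(x=0)$ implies $(x=0)\vee(V^*_{>\max}(x)=\true)$ at every position, so $\Box\Diamond (x=0)$ implies the right-hand side; and since $\ddot{r}$ is infinite, $\Diamond\Box\,(V^*_{>\max}(x)=\true)$ implies $\Box\Diamond\,(V^*_{>\max}(x)=\true)$, which again implies the right-hand side.

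For the ``if'' direction I would argue contrapositively: assume $\Box\Diamond\big((x=0)\vee(V^*_{>\max}(x)=\true)\big)$ holds but $\Box\Diamond(x=0)$ fails, i.e. $\Diamond\Box(x\neq 0)$ holds, and deduce $\Diamond\Box\,(V^*_{>\max}(x)=\true)$. Two elementary facts about $\ddot{\A}$ drive this. First, by the definition of $\delta$, any move that resets $x$ leads to a state with $\kappa(x)=0$; hence $\Diamond\Box(x\neq 0)$ implies that from some position $K$ on no move resets $x$, and therefore the value of $x$ is non-decreasing along the run from position $K$ on (and increases monotonically during each individual move, since time elapse only increases clocks). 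Second, by the definition of $V^*_{>\max}$, if $V^*_{>\max}(x)=\true$ at a run state then $x$ exceeded $c_x$ throughout the preceding move, in particular at the preceding run state; conversely, if $\kappa(x)>c_x$ holds at the source of a reset-free move then it holds throughout that move and so $V^*_{>\max}(x)=\true$ at the target. Now, since $x\neq 0$ from $K$ on, the disjunction on the right-hand side can hold at positions $\geq K$ only via $V^*_{>\max}(x)=\true$, so $V^*_{>\max}(x)=\true$ holds at infinitely many positions; pick one such position $j^*>K$. Then $\kappa_{j^*-1}(x)>c_x$, and since $j^*-1\geq K$, monotonicity gives $\kappa_j(x)>c_x$ for all $j\geq j^*-1$, with $x$ staying above $c_x$ throughout every move thereafter; by the second fact, $V^*_{>\max}(x)=\true$ at every position $\geq j^*$, i.e. $\Diamond\Box\,(V^*_{>\max}(x)=\true)$, as required.

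The argument is short and has no genuine obstacle; the only place that needs care is getting the semantics of $V^*_{>\max}$ exactly right --- it is a predicate about the whole last time-delay-plus-edge move, not merely about the current and previous run states --- together with the observation that the state predicate ``$x=0$'' faithfully records every reset of $x$ along the run. Everything else is the elementary fact that a clock that is never reset again is monotone and, once past $c_x$, stays past $c_x$ forever.
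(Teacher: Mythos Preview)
Your proof is correct and follows essentially the same approach as the paper: reduce to a single clock, handle the forward direction by trivial temporal-logic reasoning, and for the backward direction use the key fact that a clock can only decrease by being reset to $0$, so once $x>c_x$ and no further resets occur the predicate $V^*_{>\max}(x)$ stays true forever. Your write-up is in fact more careful than the paper's, since you explicitly track the semantics of $V^*_{>\max}$ (that it concerns the entire preceding move) and verify that this predicate, not merely the clock value, eventually stays true.
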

\begin{proof}
We prove inclusion in both directions.
\begin{enumerate}
\item 
  ($\Rightarrow$).  Suppose a run $\ddot{r}$ in $\ddot{\A}$
  satisfies $\bigwedge_{x\in C}\left(\Box\Diamond (x=0)\,\vee\,
    \Diamond\Box(V^*_{> \max}(x) = \true)\right)$.  
  Consider a clock $x\in C$.  
  If either $\Box\Diamond (x=0)$ or $\Diamond\Box(V^*_{>
    \max}(x) = \true)$ holds on $\ddot{r}$, it can be seen that
  $\Box\Diamond\, \left((x=0)\,\vee\, (V^*_{> \max}(x) =
    \true)\right)$ holds on $\ddot{r}$.
\item 
  ($\Leftarrow$).
  Suppose a run $\ddot{r}$ in $\ddot{\A}$ satisfies 
  $\bigwedge_{x\in C}\Box\Diamond\, \left((x=0)\,\vee\, 
    (V^*_{> \max}(x) = \true)\right)$.
  Consider a clock $x\in C$.  
  We must have either $\Box\Diamond(x=0)$ or
  $\Box\Diamond(V^*_{> \max}(x) = \true)$.
  If  $\Box\Diamond(x=0)$ on the run $\ddot{r}$, then it satisfies our 
  requirement.
  We show that if  run $\ddot{r}$ satisfies
  $\Box\Diamond(V^*_{> \max}(x) = \true)$; then it must satisfy either
  $\Box\Diamond(x=0)$ or $\Diamond\Box(V^*_{> \max}(x) = \true)$.
  This is because the only way for the value of a clock to decrease is to be
  reset to 0.
  In particular, once the clock $x$ becomes more than $c_x$, the only way for
  it to become less than or equal to $c_x$ is to be reset to 0.
  If the clock $x$ becomes more than $c_x$ and is never reset, it will stay
  more than $c_x$ forever.
  \qed
\end{enumerate}

\end{proof}

\begin{lemma}[Receptive strategies when clocks may be unbounded in $\A$]
\label{lemma:NewReceptiveGeneral}
Let $\A$ be a timed automaton game, and $\ddot{\A}$ be the corresponding
enlarged game.
Then player~1 has a receptive strategy from a state $s$ of $\A$ iff 
$\tuple{s,\cdot}\in \sureu_1^{\ddot{\A}}(\Phi^{\dagger})$, where
$\Phi^{\dagger} = \Box\Diamond(bl_1=\true) \rightarrow \Psi^{\dagger}$,
and $\Psi^{\dagger} =$
\[ 
\begin{array}{c}
  \left (
    \begin{array}{c}
      \vspace*{1mm}
      \left(\bigwedge_{x\in C}\Box\Diamond\,\left( (x=0)\,\vee\,
          (V^*_{> \max}(x) = \true)\right)\right) \\
      \vspace*{1mm}
      \bigwedge\\
      \left ( 
        \begin{array}{c}
          \vspace*{1mm} 
          \Box\Diamond\, \left(\, (\bl_1=\true) \,\wedge\, 
            \left(\bigwedge_{x\in C}(V_{>0}(x)=\true) 
              \right)\,\wedge\, \left(\bigvee_{x\in C}(V^*_{> \max}(x) = \false)
              \right)\,\right) \\
          \vspace*{1mm}
          \bigvee \\
          \Box\Diamond\, \left(\,(\bl_1=\false)\, \wedge\, \bigvee_{x\in C}
              \,\left(\left(V_{\geq 1}(x)=\true\right) 
              \wedge (V^*_{> \max}(x) = \false)\right)\,
              \right) 
        \end{array}
      \right )
    \end{array}
  \right ) \\
  \\
  \bigvee\\
  \\
  \left(\bigwedge_{x\in C}  \Diamond\Box(V^*_{> \max}(x) = \true)\right)
\end{array}
\]
\end{lemma}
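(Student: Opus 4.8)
The plan is to reduce the statement to the already-established characterization of Lemma~\ref{lemma:ReceptiveGeneral} by showing that, on any run in the common refinement $\ddot{\A}$, the new formula $\Phi^{\dagger}$ holds if and only if the old formula $\Phi^*$ holds. Since $\ddot{\A}$ differs from $\widetilde{\A}$ only by the extra observation predicates $V^*_{>\max}$ (which, as noted before the lemma, do not restrict transitions), we have $\sureu_1^{\ddot{\A}}(\Phi^{\dagger}) = \sureu_1^{\widetilde{\A}}(\Phi^*)$ as soon as the two objectives define the same set of runs under the obvious projection; the lemma then follows immediately from Lemma~\ref{lemma:ReceptiveGeneral}. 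So the real content is a purely $\omega$-regular/temporal-logic equivalence between the two LTL formulas, run by run.

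First I would fix a run $\ddot{r}$ on which $\Box\Diamond(bl_1=\true)$ holds (otherwise both formulas are trivially true) and let $X = \set{x\in C \mid \ddot{r} \models \Diamond\Box(x>c_x)}$ be the set of clocks that escape to infinity and are never reset again. The key elementary observation — essentially the content of Lemma~\ref{lemma:StreettElim} and its proof — is that for each clock $x$, exactly one of the following holds on $\ddot{r}$: either $x$ is reset infinitely often (so $\Box\Diamond(x=0)$ and $x\notin X$ unless it also escapes between resets, but in any case $\Box\Diamond(x=0)$ holds), or $x$ eventually stays above $c_x$ forever, i.e. $\Diamond\Box(V^*_{>\max}(x)=\true)$ and $x\in X$. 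Using this dichotomy, the first conjunct $\bigwedge_{x\in C}\Box\Diamond((x=0)\vee (V^*_{>\max}(x)=\true))$ of $\Psi^{\dagger}$ is, by Lemma~\ref{lemma:StreettElim}, equivalent to $\bigwedge_{x\in C}(\Box\Diamond(x=0) \vee \Diamond\Box(V^*_{>\max}(x)=\true))$, which splits precisely into $\bigwedge_{x\in X}\Diamond\Box(x>c_x)$ and $\bigwedge_{x\in C\setminus X}\Box\Diamond(x=0)$ — matching the corresponding conjuncts of $\phi_X$ in Lemma~\ref{lemma:ReceptiveGeneral}.

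Next I would handle the ``blame'' disjunction. On the left disjunct of $\Psi^{\dagger}$: a clause like $\bigwedge_{x\in C}(V_{>0}(x)=\true)\wedge\bigvee_{x\in C}(V^*_{>\max}(x)=\false)$ holding infinitely often together with a $\bl_1=\true$ transition is, modulo clocks in $X$, the same as $\bigwedge_{x\in C\setminus X}(V_{>0}(x)=\true)$ holding infinitely often with $\bl_1=\true$: clocks in $X$ satisfy $V_{>0}(x)=\true$ eventually always anyway (they are huge), and the disjunct $\bigvee_{x\in C}(V^*_{>\max}(x)=\false)$ is automatically witnessed by any clock in $C\setminus X$ when $C\setminus X\ne\emptyset$; the case $X=C$ is absorbed into the last global disjunct $\bigwedge_{x\in C}\Diamond\Box(V^*_{>\max}(x)=\true)$, which is exactly the statement $X=C$ and on which $\Phi^*$ also becomes trivially satisfied since $\phi_C$ only requires the clock-escape conjunct. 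The same bookkeeping applies to the $\bl_1=\false$ / $V_{\geq 1}$ disjunct, where the extra guard $V^*_{>\max}(x)=\false$ simply restricts the witnessing clock to lie in $C\setminus X$, matching $\bigvee_{x\in C\setminus X}(V_{\geq 1}(x)=\true)$ in $\phi_X$. Putting the pieces together, on the region $X\subsetneq C$ we get $\Psi^{\dagger}\equiv\phi_X$, and on $X=C$ both sides hold; taking the (implicit, run-determined) disjunction over $X$ recovers $\Phi^*=\bigvee_{X\subseteq C}\phi_X$ on the left of the implication.

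The main obstacle I expect is the careful case analysis around clocks in $X$ and the edge case $X=C$: one must check that $V^*_{>\max}(x)$, which is defined to be true only when $x>c_x$ \emph{throughout} the last move (not merely at its endpoint), indeed stabilizes to $\true$ exactly for $x\in X$ and to $\false$-infinitely-often for $x\notin X$ — the proof of Lemma~\ref{lemma:StreettElim} gives this, but one should also verify it interacts correctly with the blame clauses, where $V^*_{>\max}$ is evaluated at the \emph{same} transition as $V_{>0}$, $V_{\geq1}$ and $\bl_1$. Once this synchronization is pinned down, the equivalence is a routine, if somewhat tedious, LTL manipulation, and the transfer of winning sets from Lemma~\ref{lemma:ReceptiveGeneral} (via the $\ddot{\A}$ analogues of Lemmas~\ref{lemma:FiniteConcurrentCPre}, \ref{lemma:FiniteConcurrentEquivalence} and~\ref{lemma:FiniteSafetyGame}) completes the argument.
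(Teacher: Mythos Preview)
Your overall plan --- reduce to Lemma~\ref{lemma:ReceptiveGeneral} by establishing a run-by-run equivalence $\Phi^{\dagger}\Leftrightarrow\Phi^*$ on $\ddot{\A}$ --- is a genuinely different route from the paper's. The paper does \emph{not} invoke Lemma~\ref{lemma:ReceptiveGeneral} at all; it argues each direction directly. For ($\Leftarrow$) it takes a region winning strategy for $\Phi^{\dagger}$ and perturbs it (jump near the boundary of the target region, and take delays $>1$ once every clock has escaped) so that the resulting runs are visibly time-divergent; for ($\Rightarrow$) it replays the spoiling-strategy machinery (the $\ddot{\A}$-analogue of Lemma~\ref{lemma:FiniteSafetyGame}) and exhibits a player-2 strategy with geometrically shrinking delays that forces convergence. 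Your reduction, if it went through, would be shorter and would cleanly explain $\Phi^{\dagger}$ as a ``compact encoding'' of the exponential disjunction $\bigvee_{X\subseteq C}\phi_X$.

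However, the reduction as you sketch it has a real gap in the $\phi_X\Rightarrow\Psi^{\dagger}$ direction, precisely at the point you flag as the ``main obstacle'' and then dismiss. Your key claim is that the extra conjunct $\bigvee_{x\in C}(V^*_{>\max}(x)=\false)$ in the first blame disjunct of $\Psi^{\dagger}$ is ``automatically witnessed by any clock in $C\setminus X$''. This is false: a clock $x\in C\setminus X$ can very well have $V^*_{>\max}(x)=\true$ at the specific player-1-blame instants singled out by $\phi_X$. For a single clock $x$ with $c_x=1$, consider the periodic run
\[
x=0 \ \xrightarrow{\text{pl.~2},\ \Delta=2}\ x=2 \ \xrightarrow{\text{pl.~1},\ \Delta=1}\ x=3 \ \xrightarrow{\text{pl.~2},\ \text{reset}}\ x=0\ \cdots
\]
Here $X=\emptyset$, $\phi_\emptyset$ holds via its first blame clause (at every player-1 step $V_{>0}(x)=\true$), yet at \emph{every} player-1 step $V^*_{>\max}(x)=\true$, so the first blame clause of $\Psi^{\dagger}$ \emph{fails}. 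The formula $\Psi^{\dagger}$ is rescued only by its \emph{second} blame clause, witnessed by the player-2 step that crosses $c_x$. So the two blame clauses do not match one-to-one as you assert; one must argue that whenever $\phi_X$'s first clause holds but $\Psi^{\dagger}$'s first clause fails, there are infinitely many transitions at which some tracked clock crosses its bound $c_x$, and those crossings must be player-2 moves (else they would already witness $\Psi^{\dagger}$'s first clause), which then witness $\Psi^{\dagger}$'s second clause. That ``crossing'' argument is the missing idea; Lemma~\ref{lemma:StreettElim} alone does not supply it, because it gives $V^*_{>\max}(x)=\false$ only \emph{infinitely often}, not at the blame instants you need.
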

\begin{proof}
We prove inclusion in both directions.
\begin{enumerate}
\item ($\Leftarrow$).
For a state $\ddot{s}\in \sureu_1^{\ddot{\A}}(\Phi^{\dagger})$, we
show that player~1 has a receptive strategy from $\ddot{s}$.
Let $\pi_1$ be a pure sure winning region strategy: 
since $\Phi^{\dagger}$ is an
$\omega$-regular region objective such a strategy exists by 
Lemma~\ref{lemma:RegionStrategies}.
Let $\ddot{R}_{\max}$ denote the region where for every clock $x$, the value 
of $x$ is more than $c_x$.
Consider a region strategy $\pi_1'$ for player~1 that 
is region-equivalent to $\pi_1$ 
such that given a run prefix $\ddot{r}[0..k]$, the strategy $\pi_1'$ acts like
$\pi_1$ except when:
\begin{itemize}
\item 
  If  $\reg(\ddot{r}[k]) = \ddot{R}_{\max}$ and $\pi_1(\ddot{r}[0..k]) =
  \tuple{\Delta,a_1}$,
  then $\pi_1'(\ddot{r}[0..k]) =
  \tuple{\Delta',a_1}$ such that $\Delta' >1$ (observe that
  $\reg(\ddot{r}[k]+\Delta') = \ddot{R}_{\max}$ for any $\Delta'$).

\item 
  If  $\reg(\ddot{r}[k]) \neq \ddot{R}_{\max}$ and $\pi_1(\ddot{r}[0..k]) =
  \tuple{\Delta,a_1}$ with
  the state $\ddot{r}[k]+\Delta$ being  such that
  the value of some clock $x$ is less than or equal to $c_x$ but more than $0$,
  then $\pi_1'(\ddot{r}[0..k]) =
  \tuple{\Delta',a_1}$ such that (1)~$\reg(\ddot{r}[k]+\Delta') =
  \reg(\ddot{r}[k]+\Delta)$, and
  (2)~the value of some clock $y$ (possibly different from $x$)
  is less than $c_y$ at $\ddot{r}[k]$, and is more
  than $1/2$ at $\ddot{r}[k]+\Delta'$
  (intuitively, $\pi_1'$  jumps near the region boundary of
  $\reg(\ddot{r}[k]+\Delta)$).
\end{itemize}

We have $\Phi^{\dagger} \equiv \left(\neg\Box\Diamond (\bl_1=\true)\right)
\, \vee\, 
 \left(\left(\Psi_1^{\dagger} \wedge 
  \left(\Psi_2^{\dagger} \vee \Psi_3^{\dagger}\right)\right) 
\vee \Psi_4^{\dagger}\right) $, where
\begin{eqnarray*}
  \Psi_1^{\dagger} & = & \bigwedge_{x\in C}\Box\Diamond\, \left( (x=0)\,\vee\,
    (V^*_{> \max}(x) = \true)\right)\\
  \Psi_2^{\dagger} & = &
  \Box\Diamond\, \left(  (\bl_1=\true)\, \wedge\, \left(\bigwedge_{x\in C}(V_{>0}(x)=\true) 
    \right)\,\wedge\, \left(\bigvee_{x\in C}(V^*_{> \max}(x) = \false)
    \right)\right) \\
  \Psi_3^{\dagger} & = &
  \Box\Diamond\, \left(\,(\bl_1=\false)\, \wedge\, \bigvee_{x\in C}
    \,\left(\left(V_{\geq 1}(x)=\true\right) 
      \wedge (V^*_{> \max}(x) = \false)\right)\,
  \right)\\
  \Psi_4^{\dagger} & = &
  \bigwedge_{x\in C}  \Diamond\Box(V^*_{> \max}(x) = \true)
\end{eqnarray*}

Given any player-2 strategy $\pi_2$, 
consider any run $\ddot{r}\in \outcomes(\ddot{s},\pi_1',\pi_2)$.
The run $\ddot{r}$ must satisfy $\Phi^{\dagger}$.
One of the following conditions  must be satisfied on the run $\ddot{r}$.
\begin{enumerate}

\item 
  $\Diamond \Box (\bl_1= \false)$.
  This satisfies the receptiveness condition.
\item 
  $\left(\Box\Diamond (\bl_1=\true)\right)\,\wedge\, 
  \Psi^{\dagger}_4$
  This means that in the run $\ddot{r}$,  every clock $x$ eventually becomes
  greater than $c_x$; and  moves of player~1 are  chosen  infinitely often.
  Since the strategy $\pi_1'$ chooses moves of duration greater than 1 when 
  staying in $\ddot{R}_{\max}$, time diverges in the run $\ddot{r}$.
\item 
  $(\left(\Box\Diamond (\bl_1=\true)\right)\,\wedge\, \neg\Psi^{\dagger}_4)\,
  \bigwedge\, \left(\Psi_1^{\dagger} \wedge 
  \left(\Psi_2^{\dagger} \vee \Psi_3^{\dagger}\right)\right)$.
   The constraint $\neg\Psi^{\dagger}_4$  means that, 
   there is some clock $x$ which 
  is less than $c_x$ infinitely often.
  Satisfaction of the constraint $\Psi_1^{\dagger}$
  and 
  Lemma~\ref{lemma:StreettElim} imply that
  \[\bigwedge_{x\in C}\left(\Box\Diamond (x=0)\,\vee\,
    \Diamond\Box(V^*_{> \max}(x) = \true)\right)\]
  must be satisfied on the run $\ddot{r}$.
  That is, 
  each clock $x$ which is not eventually always greater than $c_x$ must be $0$
  infinitely often.
  Also, the run must satisfy either $\Psi^{\dagger}_2$ or $\Psi^{\dagger}_3$.

  Suppose we have the first case (i.e.,  $\Psi^{\dagger}_2$ holds).
  Then, for infinitely many $k$, player-1 moves are chosen 
  from $\ddot{r}[0..k]$ such that
  for some clock $x$, we have (1)~the value of the clock $x$ is less than $c_x$
  at $\ddot{r}[k]$ (note that if the value of $x$ is less than $c_x$ at some
  point during the move, then it must be less than $c_x$ at the origin), and 
  (2)~for $\pi_1'(\ddot{r}[0..k]) = \tuple{\Delta',a_1}$, 
  the value of the clock $x$ at $\ddot{r}[k] +\Delta'$ is more than $0$.
  Because of the design of $\pi_1'$, this means that for infinitely
  many $k$, there is some clock $y$ such that
  if $\pi'_1(\ddot{r}[0..k]) = \tuple{\Delta', a_1}$ then,
  (1)~the value of clock $y$ at $\ddot{r}[k]$ is not more than $c_y$, and
  (2)~the value of clock $y$ is more than $1/2$ at $\ddot{r}[k]+\Delta'$.
  Since the clock $y$ must also be equal to $0$ infinitely often (as it is
  not more than $c_y$ eventually from above, and due to $\Psi_1^{\dagger}$),
  this implies that time diverges.

  Suppose we have the second case (i.e., $\Psi_3^{\dagger}$ holds). 
  Then, for infinitely many $k$, player-2 moves are chosen 
  from $\ddot{r}[0..k]$ such that
  for some clock $x$, we have (1)~the value of the clock $x$ is less than $c_x$
  at $\ddot{r}[k]$,
  and,
  (2)~for $\pi_2(\ddot{r}[0..k]) = \tuple{\Delta_2,a_2}$, 
  the value of the clock $x$ at $\ddot{r}[k] +\Delta_2$ is more than or equal
  to $1$.
  Since the clock $x$ must also be equal to $0$ infinitely often (as it is
  not more than $c_x$ eventually from above, and due to $\Psi_1^{\dagger}$),
  this implies that time diverges.
\end{enumerate}
Thus, in all cases, the strategy $\pi_1'$ ensures that either player~1 is not
to blame, or time diverges.
Hence, $\pi_1'$ is a receptive strategy from $\ddot{s}$.
\item 
$(\Rightarrow)$.
For  a state $\ddot{s}\notin \sureu_1^{\ddot{\A}}(\Phi^{\dagger})$, 
we show that player~1 does not have any receptive strategy starting from state
$\ddot{s}$.
We have  $\neg\Phi^{\dagger} \equiv \left(\Box\Diamond (\bl_1=\true)\right) \wedge
\neg \left(\left(\Psi_1^{\dagger} \wedge 
  \left(\Psi_2^{\dagger} \vee \Psi_3^{\dagger}\right)\right) 
\vee \Psi_4^{\dagger}\right) $, where
$\Psi_1^{\dagger}, \Psi_2^{\dagger}, \Psi_3^{\dagger}$ and $\Psi_4^{\dagger}$ 
are as defined previously.
Simplifying, we get
$\neg\Phi^{\dagger} \equiv \left(\Box\Diamond (\bl_1=\true)\right) \wedge
 \left(\neg\Psi_1^{\dagger} \vee
  \left(\neg\Psi_2^{\dagger} \wedge \neg\Psi_3^{\dagger}\right)\right) 
\wedge  \neg\Psi_4^{\dagger} $, where
\begin{eqnarray*}
  \neg\Psi_1^{\dagger} & = & \bigvee_{x\in C}\Diamond\Box\, 
  \left( (x>0)\,\wedge\,
    (V^*_{> \max}(x) = \false)\right)\\
  \neg\Psi_2^{\dagger} & = &  \Diamond\Box\,\left( (\bl_1=\true) \rightarrow 
    \left(\left(\bigvee_{x\in C} \left(\left(V_{>0}(x)=\false\right)\right) 
      \right)
      \,\vee\, \left(\bigwedge_{x\in C}\left(V^*_{> \max}(x) = \true\right)\right)
    \right ) \right)\\
  \neg \Psi_3^{\dagger} & = & \Diamond\Box\,\left( (\bl_1=\false) \rightarrow 
     \bigwedge_{x\in C} \left(\,(V_{\geq 1}(x)=\false) 
      \vee (V^*_{> \max}(x) = \true)\,
       \right )\right )\\
%
   \neg \Psi_4^{\dagger} & = &
   \Box\Diamond\bigvee_{x\in C} (V^*_{> \max}(x) = \false)\\
   &&(\text{Using the identity } \bigvee_{x\in C} \Box\Diamond P(x)
   \equiv \Box\Diamond \bigvee_{x\in C} P(x))
\end{eqnarray*}
Recall the finite state game $\widetilde{\A}^{\f}$ based on the regions of
$\widetilde{\A}$.
There exists a similar finite state game $\ddot{\A}^{\f}$ based on the 
regions of $\ddot{\A}$, with results relating $\ddot{\A}^{\f}$ and $\ddot{\A}$
as the results relating $\widetilde{\A}^{\f}$ and $\widetilde{\A}$.
Suppose $\ddot{s}\notin\sureu_1^{\ddot{\A}}(\Phi^{\dagger})$.
Then $\ddot{s}\notin\pureu_1^{\ddot{\A}}(\Phi^{\dagger})$ by 
Corollary~\ref{corollary:PureGame}.
Consider any pure player-1 strategy $\pi_1$ in $\ddot{\A}$.
By Lemma~\ref{lemma:FiniteSafetyGame}, 
$\reg(\ddot{s})\notin\pureu_1^{\ddot{\A}^{\f}}(\Phi)^{\dagger}$, and
there exists a runcover
$\runcover$ for $\VRuns^{\widetilde{\A}^{\f}}$
such that for   
any player-2 pure 
spoiling strategy $\pi_2^{\ddot{\A}^{\f}}$ against 
$\mbf^{\runcover}\left(\pi_1\right)$ in $\ddot{\A}^{\f}$ from
$\reg(\ddot{s})$,
we have that
every player-2 strategy in 
$\spoil^{\runcover}(\pi_1, \pi_2^{\ddot{\A}^{\f}})$ is
a spoiling strategy against $\pi_1$ in the structure  $\ddot{\A}$.

Let $\runcover$ be such a runcover, and let 
$\pi_2^{\ddot{\A}^{\f}}$ be any such player-2 strategy against 
$\mbf^{\runcover}\left(\pi_1\right)$ in $\ddot{\A}^{\f}$ from
$\reg(\ddot{s})$.
We show that with an appropriately chosen $\pi_2$ in 
$\spoil^{\runcover}(\pi_1, \pi_2^{\ddot{\A}^{\f}})$,
player~2 can ensure that
in one of the resulting runs, player~1 is not blameless, and time converges,
 and hence player~1 does not have a receptive pure strategy in 
$\ddot{\A}$.
The result follows from observing that if player~1 does not have a pure 
receptive strategy, then it does not have a (possibly randomized) receptive
strategy (as a randomized strategy may be viewed as a random choice over pure
strategies).

Consider runs $\ddot{r} \in \outcomes(\ddot{s},\pi_1,\pi_2)$ for
$\pi_2\in \spoil^{\runcover}(\pi_1, \pi_2^{\ddot{\A}^{\f}})$..
One of the runs must satisfy $\neg\Phi^{\dagger}$, which can happen in
 one of the following ways.
\begin{enumerate}
\item $(\Box\Diamond(\bl_1=\true))\, \wedge\, \neg\Psi_1^{\dagger} \wedge
  \neg\Psi_4^{\dagger}$.
  The condition $\neg\Psi_1^{\dagger}$ means that there is some clock $x$
  which eventually stays strictly greater than 0, and also stays less than
  or equal to $c_x$.
  This is impossible in a time-divergent run as clocks can only be reset to
  0.
  Thus, in this run time does not diverge, and player~1 is not blameless.
\item 
  $(\Box\Diamond(\bl_1=\true))\, \wedge\,  \neg\Psi_2^{\dagger} \wedge
  \neg\Psi_3^{\dagger} \wedge  \neg\Psi_4^{\dagger}$.
  The clause $ \neg\Psi_4^{\dagger}$ implies that there is some clock
  $x$ such that it is not greater than $c_x$ infinitely often during transitions
  (including the originating state).
  The clause $\neg\Psi_2^{\dagger}$ means that eventually
  if an action of player~1 is chosen, then  either
  (1)~every clock $x$ has value greater than $c_x$ during  the move (this is 
  not possible if the run satisfies $\neq\Psi_4^{\dagger}$) , or
  (2)~for  some  clock $x$, the value of $x$ stays at 0 throughout the move
   (which means that the  move of player-1 is of duration $0$).
  This clause $\neg\Psi_3^{\dagger}$ means that eventually
  if an action of player~2 is chosen, then 
  for every  clock $x$,
  either the clock $x$ has value  greater
  than  $c_x$ during the move, 
  or the value of $x$ is strictly less than $1$ during the move.

  Player~2 can have a strategy which takes moves smaller than $1/2^j$ during
  the $j$-th visit to a region $\ddot{R}$ in which  every clock $x$ either has
  value less than 1, or greater than $c_x$.
  We formalize the above statement.
  The strategy $\pi_2^{\ddot{\A}^{\f}}$ spoils 
  $\mbf^{\runcover}\left(\pi_1\right)$ from winning in $\ddot{\A}^{\f}$
  for the objective $\Phi^{\dagger}$.
  Given a run prefix $\ddot{r}[0..k]$ of $\ddot{\A}$, let 
  $\pi_1(\ddot{r}[0..k]) = \tuple{\Delta_1,a_1}$.
  Consider  a player-2 strategy $\pi_2$ in 
  $\spoil^{\runcover}(\pi_1, \pi_2^{\ddot{\A}^{\f}})$, and
  let $\pi_2^{\ddot{\A}^{\f}}(\regmap(\ddot{r}[0..k])) = 
  \tuple{\ddot{R}_2,a_2,i}$.
  Let $\pi_2$ be a strategy in 
  $\spoil^{\runcover}(\pi_1, \pi_2^{\ddot{\A}^{\f}})$
  such  that for
  $\pi_2((\ddot{r}[0..k]) = \tuple{\Delta_2,a_2}$ we have
  $\Delta_2 \leq \Delta_1$ and $\Delta_2 < 1/2^k$ 
  whenever the following conditions hold.
  \begin{enumerate}
  \item 
    Each clock $x$ in $\ddot{R}_2$ is either less than 1, or
    more than $c_x$; and 
  \item
    Either
    \begin{enumerate}
    \item  
      $\ddot{R}_2$ is a region predecessor of
      $\reg(\ddot{r}[k] +\Delta_1)$; or
    \item 
      $i=2$ and  $\reg(\ddot{r}[k] +\Delta_1) = \ddot{R}_2$
    \end{enumerate}
  \end{enumerate}
  It can be observed from Equation~\ref{equation:FiniteSpoilingStrategy}
  that such
  a $\Delta_2$ and such a strategy $\pi_2$ in
  $\spoil^{\runcover}(\pi_1, \pi_2^{\ddot{\A}^{\f}})$ 
  always exist.
    The above condition ensures that if a move of player~2 is chosen to
    a region $\ddot{R}$ in which  every clock $x$ either has
    value less than 1, or greater than $c_x$, then the 
    moves smaller than $1/2^j$ during
    the $j$-th stage of the game.
    The strategy $\pi_2$ is a spoiling strategy against $\pi_1$ by
    Lemma~\ref{lemma:FiniteSafetyGame} as $\pi_2$ is in
  $\spoil^{\runcover}(\pi_1, \pi_2^{\ddot{\A}^{\f}})$.
    Moreover,  this strategy ensures that at least one of the resulting
    runs $\ddot{r}$  satisfies $\neg\Phi^{\dagger}$.
    \begin{enumerate}
    \item 
      If $\ddot{r}$ satisfies 
      $(\Box\Diamond(\bl_1=\true))\, \wedge\, \neg\Psi_1^{\dagger} \wedge
      \neg\Psi_4^{\dagger}$, then the run is time
      convergent, and player~1 is not blameless.
    \item 
      If $\ddot{r}$ satisfies 
      $(\Box\Diamond(\bl_1=\true))\, \wedge\,  \neg\Psi_2^{\dagger} \wedge
      \neg\Psi_3^{\dagger} \wedge  \neg\Psi_4^{\dagger}$, then we have that:
      \begin{enumerate}
      \item 
        Eventually every chosen move of player~2 results
        in a region $\ddot{R}$ in which  every clock $x$ either has
        value less than 1, or greater than $c_x$, with the duration 
        of the player-2 move being
        smaller than $1/2^j$ during
        the $j$-th stage of the game; and
      \item 
        Eventually every chosen move of player~1 is of time duration 0.
      \end{enumerate}
      Thus, time is convergent in the run $\ddot{r}$ and player~1 is not
      blameless.
    \end{enumerate}
  \end{enumerate}
  Hence, in both cases,
  player~1 does not have a pure receptive strategy from $\ddot{s}$ (from which
  it follows that it does not have any receptive  strategy from $\ddot{s}$).
    \qed

\end{enumerate}
 
\end{proof}

\subsection{Memory Requirement of Receptive Strategies}

In this subsection we deduce memory bounds on player-1 
receptive strategies
using Zielonka tree analysis (see~\cite{DziembowskiJW97} for details).
We first  deduce a bound that allows player~1 to win in the finite state 
concurrent game  $\ddot{\A}^{\f}$.
A player-1 winning strategy in $\ddot{\A}^{\f}$ can be mapped to a player-1
winning strategy in $\ddot{\A}$ by letting
$\pi_1^{\ddot{\A}}(\ddot{r}[0..k]) = \tuple{\Delta,a_1}$ such that
(a)~$\pi_1^{\ddot{\A}^{\f}}\left(\reg(\ddot{r}[0..k])\right) = 
\tuple{\ddot{R},a_1}$, and
(b)~$\reg(\ddot{r}[k]+\Delta)=\ddot{R}$.
Thus, the memory requirement for a player-1
winning strategy in $\ddot{\A}$  is not more than as for in the finite game
 $\ddot{\A}^{\f}$.
We note that Zielonka tree analysis holds only for turn based games,
 but since concurrent games with 
 sure winning conditions reduce to concurrent games in which both
players may use only pure strategies, which in turn reduce to
turn based games, the Zielonka tree analysis is valid for
game $\ddot{\A}^{\f}$ with sure winning conditions.

\smallskip\noindent\textbf{Zielonka tree analysis}.
Let $\ap$ be a set of atomic propositions, and let
$\ap_N$ be $AP$ together with the negations of the propositions, i.e.,
$\ap\,\cup\,\set{\neg P\mid P\in \ap}$.
We say a set  $\mcalb\subseteq \ap_N$ is \emph{consistent} with respect to
$\ap$ iff for all propositions  $P\in\ap$,
either $P\in \mcalb$, or $\neg P\in \mcalb$ (or both belong to $ \mcalb$).
A \emph{Muller} winning condition $\zf$ is a consistent subset of $2^{\ap_N}$.
An infinite play satisfies the Muller condition iff the set of
propositions (or the negation of propositions) 
occurring infinitely often in the play belongs to $\zf$.
Given $\mcalb\subseteq \ap_N$, let $\zf \upharpoonright \mcalb$ denote
the set $\set{D\in \zf \mid D\subseteq \mcalb}$.
The \emph{Zielonka tree} $\mcalz_{\zf,\mcalb}$ of a Muller condition
$\zf$ over $\ap$ with   $\mcalb = \ap_N$
is defined inductively as follows:
\begin{enumerate}
\item
  If $\mcalb\in\zf$, then the root of $\mcalz_{\zf,\mcalb}$ is labelled with 
  $\mcalb$.
  Let $\mcalb_1,\dots,\mcalb_k$ be all the maximal sets in:
  $
  \left\{
    \begin{array}{l}      
      \mcalb^*\notin\zf \mid \mcalb^*\subseteq \mcalb, \text{ and }
      \mcalb^* \text{ consistent with respect to } \ap
    \end{array}
  \right\}
  .
  $
  The root of $\mcalz_{\zf,\mcalb}$ then has as children the Zielonka trees 
  $\mcalz_{\zf\upharpoonright \mcalb_i, \mcalb_i}$ of $\zf\upharpoonright \mcalb_i$
  for $1\leq i\leq k$ .
\item
  If $\mcalb\notin\zf$, then $\mcalz_{\zf,\mcalb}= \mcalz_{\overline{\zf},\mcalb}$,
  where $\overline{\zf} = \set{D \in 2^{\mcalb}\mid D\notin \zf \text{ and }
    D \text{ is consistent with 
      respect to } \ap}$.
\end{enumerate}

A node of the Zielonka tree $\mcalz_{\zf,\mcalb}$ is a \emph{Good} node if it
is labelled with a set from $\zf$, otherwise it is a \emph{Bad} node.


\smallskip\noindent\textbf{Equivalent definition of Zielonka trees.}
We now present an equivalent definition  (which suffices for our purposes)
of the Zielonka tree
$\mcalz_{\zf,\ap_N}$ of a Muller condition
$\zf$ over $\ap$.
Every node of the Zielonka tree $\mcalz_{\zf,\ap_N}$  with is labelled with a 
consistent subset $\mcalb\subseteq \ap_N$.
A node of the Zielonka tree $\mcalz_{\zf,\ap_N}$ is a \emph{Good} node if it
is labelled with a set from $\zf$, otherwise it is a \emph{Bad} node.
The root is labelled with $\ap_N$.
The children of a node $v$  are defined inductively as follows:
\begin{enumerate}
\item
  Suppose $v$ is a Good node labelled with $\mcalb_v$.
  Let $\mcalb_1,\dots,\mcalb_k$ be all the maximal sets in:
  $\left\{
    \begin{array}{l}      
      \mcalb^*\notin\zf \mid \mcalb^*\subseteq \mcalb, \text{ and }
      \mcalb^* \text{ consistent with respect to } \ap
    \end{array}
  \right\}
  .
  $
  The node $v$ then has $k$  children (that are all Bad)  labelled with
  $\mcalb_1,\dots,\mcalb_k$.
\item
   Suppose $v$ is a Bad node labelled with $\mcalb_v$.
  Let $\mcalb_1,\dots,\mcalb_k$ be all the maximal sets in:
  $\left\{
    \begin{array}{l}      
      \mcalb^*\in\zf \mid \mcalb^*\subseteq \mcalb, \text{ and }
      \mcalb^* \text{ consistent with respect to } \ap
    \end{array}
  \right\}
  .
  $
  The node $v$ then has $k$ children (that are all Good) labelled with
  $\mcalb_1,\dots,\mcalb_k$.
\end{enumerate}

\smallskip\noindent\textbf{The number $\mathbf{m_{\zf}}$ of a Muller condition}.
Let $\zf$ be a a Muller condition that is a consistent subset of $2^{\ap_N}$.
Consider the Zielonka tree $\mcalz_{\zf,\ap_N}$ of  $\zf$.
We define a number  $m_{\zf}^v$ for each node $v$ of $\mcalz_{\zf,\ap_N}$ 
inductively.
\[ 
m_{\zf}^v = 
\begin{cases}
  1 & \text{if } v \text{ is a leaf},\\
  \sum_{i=1}^k m_{\zf}^{v_i} & \text{if } v \text{ is a Good node and has 
    children } v_1,\dots,v_k,\\
  \max\set{ m_{\zf}^{v_1},\dots,  m_{\zf}^{v_k}} 
  &\text{if } v \text{ is a Bad  node and has 
    children } v_1,\dots,v_k.\\
\end{cases}
\]
The number $m_{\zf}$ of the Muller condition $\zf$ is defined to be
$m_{\f}^{v_r}$ where $v_r$ is the root of the Zielonka tree $\mcalz_{\f,\ap_N}$.

\begin{lemma}[\cite{DziembowskiJW97}]
\label{lemma:ZielonkaMemory}
Let $\TG^f$ be a finite state turn based game.
If player~1 has a sure winning strategy for a Muller objective $\f$ from a
state $s$ in $\TG^f$, then it has a pure sure winning strategy from $s$
with at most $m_{\f}$ memory states.
\end{lemma}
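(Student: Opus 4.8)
This is the theorem of Dziembowski, Jurdzi\'nski and Walukiewicz, and I would prove it exactly as there, by induction on the Zielonka tree $\mcalz_{\f,\ap_N}$ (equivalently, on the nesting depth of $\f$). Since $\TG^f$ is a finite-state turn-based game and Muller games on finite arenas are determined with winning strategies on both sides, I may fix player~1's winning region $W$ and use that player~1 has a strategy that wins $\f$ while never leaving $W$. I would in fact prove the stronger claim by induction on subtrees: for every consistent $\mcalb\subseteq\ap_N$ and every finite turn-based game all of whose positions carry a consistent subset of $\mcalb$, player~1 wins $\f\upharpoonright\mcalb$ from each position of its winning region with at most $m_{\f\upharpoonright\mcalb}$ memory states. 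The base case is a leaf of $\mcalz_{\f,\mcalb}$: there $\f\upharpoonright\mcalb$ is either satisfied by every play (any memoryless strategy wins) or by no play (the winning region is empty); since $m_{\f\upharpoonright\mcalb}=1$ at a leaf, the claim holds.

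\textbf{Good node.} If the root is a Good node then $\mcalb\in\f$ and its children are labelled by the maximal consistent $\mcalb_1,\dots,\mcalb_k\subsetneq\mcalb$ with $\mcalb_i\notin\f$, and $m_{\f\upharpoonright\mcalb}=\sum_{i=1}^k m_{\f\upharpoonright\mcalb_i}$. By the induction hypothesis, in the appropriate restricted sub-game (positions kept inside $W$, moves that would expose a proposition outside $\mcalb_i$ pruned) player~1 has, from the relevant positions, a winning strategy $\sigma_i$ for $\f\upharpoonright\mcalb_i$ using at most $m_{\f\upharpoonright\mcalb_i}$ memory states. Player~1 then plays in cyclic phases $1,2,\dots,k,1,2,\dots$: in phase $i$ it follows $\sigma_i$, and the instant a proposition outside $\mcalb_i$ appears it advances to phase $i{+}1$ (mod $k$) and restarts $\sigma_{i+1}$ from scratch. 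If play eventually stays in a single phase $i$, then from some point on only $\mcalb_i$-propositions are seen and $\sigma_i$ is winning, so the limit set lies in $\f\upharpoonright\mcalb_i\subseteq\f$. Otherwise phases switch infinitely often, so each phase is entered (and exited) infinitely often, hence for every $i$ some proposition outside $\mcalb_i$ appears infinitely often; thus the set of propositions seen infinitely often is contained in no $\mcalb_i$, and since the $\mcalb_i$ are exactly the maximal consistent subsets of $\mcalb$ outside $\f$, this set belongs to $\f$. The memory is the disjoint union of the $k$ structures for $\sigma_1,\dots,\sigma_k$ (the active structure encodes the current phase), so at most $\sum_i m_{\f\upharpoonright\mcalb_i}=m_{\f\upharpoonright\mcalb}$ states are used.

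\textbf{Bad node.} If the root is a Bad node then $\mcalb\notin\f$, its children are labelled by the maximal consistent $\mcalb_1,\dots,\mcalb_k\subsetneq\mcalb$ with $\mcalb_i\in\f$, and $m_{\f\upharpoonright\mcalb}=\max_i m_{\f\upharpoonright\mcalb_i}$. Now player~1 cannot win by merely ``staying'', since a play whose limit set is exactly $\mcalb$ is losing; it must repeatedly force the play into a region associated with some child. I would use the McNaughton/Zielonka decomposition of $W$ obtained by alternating a (memoryless) attractor computation with a recursive call on a sub-game for some child $\mcalb_i$: player~1 first attracts the play to a trap in which it can commit to a child $\mcalb_i$ and there plays the induction-hypothesis strategy $\sigma_i$ for $\f\upharpoonright\mcalb_i$; if the play ever escapes that trap, the attractor phase restarts and a possibly different child is committed to. Since at most one sub-strategy $\sigma_i$ is active at any time and it is reinitialised whenever its trap is re-entered, the $\sigma_i$ may all share a single memory structure, and the attractor strategies are memoryless; hence the total memory is at most $\max_i m_{\f\upharpoonright\mcalb_i}=m_{\f\upharpoonright\mcalb}$. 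Correctness --- that this decomposition exhausts player~1's winning region and that the described strategy indeed wins from it (if the play eventually stays in a trap it wins by the induction hypothesis, otherwise the attractor/trap structure forces the limit set to avoid $\mcalb$ appropriately) --- is the standard content of Zielonka's analysis.

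\textbf{Main obstacle.} The crux, and the only genuinely non-routine part, is the Bad-node case: establishing the attractor/trap decomposition of player~1's winning region, proving the composite strategy is winning, and verifying that the recursive sub-strategies can be reset on re-entry so that only the maximum --- not the sum --- of their memory sizes is needed. The Good-node case, and the bookkeeping that makes the restricted sub-games in both cases legitimate (so that the induction hypothesis applies from the positions actually reached), is comparatively mechanical once the round-robin idea and the winning-region invariants are in place.
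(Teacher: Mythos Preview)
The paper does not prove this lemma at all: it is stated with the citation \cite{DziembowskiJW97} and immediately used, with no accompanying proof environment. Your proposal is a faithful sketch of the original Dziembowski--Jurdzi\'nski--Walukiewicz argument (induction on the Zielonka tree, round-robin at Good nodes giving the sum, attractor/trap decomposition at Bad nodes giving the max), so it is correct and is exactly what the paper is implicitly invoking by citation.
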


Now we use Zielonka tree analysis to deduce memory
requirements of receptive strategies.
\begin{lemma}
\label{lemma:ZielonkaMemoryReceptiveFinite}
\begin{enumerate}
\item 
  Let $\phi_1 = (\Diamond\Box F_1 )\, \vee\, (\Diamond\Box F_2)\, \vee\,
  \bigwedge_{i\leq n}(\Box\Diamond I_j)$, where $F_1,F_2,I_j$ are boolean
  predicates on states of a finite state game $\TG^f$.
  Player~1 has a pure sure winning strategy from $\sureu_1(\phi_1)$ 
  that requires 
  at most $n$  memory states 
  for the objective $\phi_1$.

\item 
  Let $\phi_2 =$
  $(\Diamond\Box F )\, \vee\, 
  \bigvee_{\alpha\leq m} \left(
    \Diamond\Box F_{\alpha} \,\wedge\,
    \left( \bigwedge_{i\leq n} \Box\Diamond I_{\alpha,i}\right)\,\wedge\,
    \Box\Diamond I_{\alpha}\right)$, where $F,F_{\alpha}, I_{\alpha,i},I_{\alpha}$ 
  are boolean
  predicates on states of a finite state game $\TG^f$.
  Player~1  has a pure sure winning strategy from
  $\sureu_1(\phi_2)$
  that requires 
  at most $(n+1)^m$  memory states 
  for the objective $\phi_2$.

\end{enumerate}

\end{lemma}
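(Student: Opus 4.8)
The plan is to read off the bounds from the Zielonka–tree memory theorem (Lemma~\ref{lemma:ZielonkaMemory}). Both $\phi_1$ and $\phi_2$ are Boolean combinations of $\Box\Diamond$ and $\Diamond\Box$ applied to state predicates, hence prefix–independent, and so they determine Muller conditions over the set $\ap$ of predicates occurring in them: for the set $S\subseteq\ap_N$ of predicates (and negations) seen infinitely often along a play, $\Diamond\Box F$ holds iff $\neg F\notin S$, and $\Box\Diamond I$ holds iff $I\in S$. Thus $\phi_1$ is the Muller condition $\zf_1=\{\mcalb\text{ consistent}\mid \neg F_1\notin\mcalb\ \text{or}\ \neg F_2\notin\mcalb\ \text{or}\ (I_i\in\mcalb\text{ for all }i\le n)\}$, and, writing $I_{\alpha,n+1}$ for $I_\alpha$, $\phi_2$ is $\zf_2=\{\mcalb\text{ consistent}\mid \neg F\notin\mcalb\ \text{or}\ \exists\alpha\le m:\,(\neg F_\alpha\notin\mcalb\ \text{and}\ I_{\alpha,i}\in\mcalb\text{ for all }i\le n+1)\}$. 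It then suffices to compute $m_{\zf_1}$ and $m_{\zf_2}$; since a finite-state concurrent game with a sure-winning objective reduces to a turn-based one (as noted above), Lemma~\ref{lemma:ZielonkaMemory} then hands us pure winning strategies in $\TG^f$ with that many memory states, and hence from $\sureu_1(\phi_1)$ resp.\ $\sureu_1(\phi_2)$.

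\textbf{Part (1).} I would describe the Zielonka tree of $\zf_1$. The root is labelled $\ap_N\in\zf_1$ (it contains every $I_i$), so it is Good and its value is the \emph{sum} over its children. A maximal consistent subset of $\ap_N$ fails $\zf_1$ iff it keeps $\neg F_1$ and $\neg F_2$ (to kill the first two disjuncts) and drops exactly one $I_j$; hence the root has $n$ (Bad) children $\ap_N\setminus\{I_j\}$. Such a child can no longer satisfy the third disjunct, so its maximal-in-$\zf_1$ subsets are $\ap_N\setminus\{I_j,\neg F_1\}$ and $\ap_N\setminus\{I_j,\neg F_2\}$; each of these is Good and every consistent subset of it lies in $\zf_1$, so it is a leaf. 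Therefore each Bad child has value $\max(1,1)=1$ and $m_{\zf_1}=\sum_{j=1}^n 1=n$, giving the claim by Lemma~\ref{lemma:ZielonkaMemory}.

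\textbf{Part (2).} Here $\ap_N\notin\zf_2$, so the root is Bad and its value is the \emph{maximum} over its children. The maximal consistent subsets in $\zf_2$ are $C_0=\ap_N\setminus\{\neg F\}$ (a Good leaf, since every consistent subset contains $F$, not $\neg F$) and, for each $\alpha\le m$, $C_\alpha=\ap_N\setminus\{\neg F_\alpha\}$. Below a $C_\alpha$ the tree develops regularly. Call a disjunct $\alpha$ \emph{committed} along a branch once $\neg F_\alpha$ has been deleted. At a Good node where $k$ disjuncts are committed and the $k$-th is still ``fresh'', the only way to leave $\zf_2$ is to delete one of $I_{\alpha_k,1},\dots,I_{\alpha_k,n+1}$ (the earlier committed disjuncts already fail, and $\neg F$ and the remaining $\neg F_\gamma$ cost nothing to keep), so it has $n+1$ Bad children. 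At a Bad node where $k$ disjuncts are committed, the maximal-in-$\zf_2$ subsets are the $\Diamond\Box F$-leaf and, for each of the $m-k$ uncommitted $\gamma$, the node obtained by deleting $\neg F_\gamma$; so it has $1+(m-k)$ Good children. Letting $a_k$ (resp.\ $b_k$) be the values of these Good (resp.\ Bad) nodes — which depend only on $k$ by symmetry — one gets $b_m=1$, $b_k=\max(1,a_{k+1})=a_{k+1}$ for $k<m$, and $a_k=(n+1)\,b_k$; hence $a_k=(n+1)^{\,m-k+1}$, so $a_1=(n+1)^m$. The Bad root has children of values $1$ (namely $C_0$) and $a_1$ (each $C_\alpha$), so $m_{\zf_2}=\max(1,(n+1)^m)=(n+1)^m$, and Lemma~\ref{lemma:ZielonkaMemory} gives the bound.

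\textbf{Main obstacle.} The substantive work is the combinatorial bookkeeping: verifying that at every node of the two Zielonka trees the maximal consistent subsets that lie in (resp.\ do not lie in) $\zf$ are exactly those listed, and — for $\zf_2$ — that the values of Good and Bad nodes depend only on the number of committed disjuncts, so the recursion $a_k=(n+1)a_{k+1}$, $b_k=a_{k+1}$ closes and yields $(n+1)^m$ rather than, say, an extra additive $m(n+1)^m$ term (which is precisely what is avoided because the root is Bad and takes a maximum, not a sum). Once this regularity is in place, evaluating $m_{\zf_1}$ and $m_{\zf_2}$ is routine; the only other point to flag is that prefix-independence of $\phi_1,\phi_2$ is what lets us treat them as Muller conditions, and the concurrent-to-turn-based reduction is what makes Lemma~\ref{lemma:ZielonkaMemory} applicable to $\TG^f$.
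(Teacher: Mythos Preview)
Your proposal is correct and follows essentially the same approach as the paper: both compute the Zielonka tree of the Muller condition corresponding to $\phi_1$ resp.\ $\phi_2$ and then invoke Lemma~\ref{lemma:ZielonkaMemory} (together with the concurrent-to-turn-based reduction already noted in the paper). Your write-up is in fact more explicit than the paper's, which argues mainly from two figures; in particular, for $\phi_2$ you set up the recursion $a_k=(n+1)\,b_k$, $b_k=a_{k+1}$, $b_m=1$ that the paper only summarizes as ``each Good node leads to $n+1$ Good nodes at the next level, and there are $m$ Good levels''.
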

\begin{proof}
  We present Zielonka tree analysis for each case (in the figures $U=\ap_N$),
  and use Lemma~\ref{lemma:ZielonkaMemory} to deduce the memory bounds.
  The leaves are depicted with double boundaries in the Figures.
  Bad nodes are pictured  as boxes, and Good nodes as ovals.
  \begin{enumerate}
  \item 
    Consider the Zielonka tree in Figure~\ref{figure:Zielonka-gen}.
    \begin{figure}[t]
      \strut\centerline{\setlength{\unitlength}{0.00043745in}
\begingroup\makeatletter\ifx\SetFigFontNFSS\undefined%
\gdef\SetFigFontNFSS#1#2#3#4#5{%
  \reset@font\fontsize{#1}{#2pt}%
  \fontfamily{#3}\fontseries{#4}\fontshape{#5}%
  \selectfont}%
\fi\endgroup%
{\renewcommand{\dashlinestretch}{30}
\begin{picture}(10691,3676)(0,-10)
\put(4695,3249){\ellipse{1800}{810}}
\put(6226,456){\ellipse{2160}{900}}
\put(6216,452){\ellipse{2070}{810}}
\put(3966,486){\ellipse{2160}{900}}
\put(3966,490){\ellipse{2070}{810}}
\path(3975,3024)(2985,2259)
\path(3090.744,2416.538)(2985.000,2259.000)(3164.118,2321.583)
\path(4515,2844)(4515,2349)
\path(4455.000,2529.000)(4515.000,2349.000)(4575.000,2529.000)
\path(4875,2844)(6000,2214)
\path(5813.633,2249.598)(6000.000,2214.000)(5872.265,2354.299)
\path(5145,2889)(6765,2259)
\path(6575.492,2268.320)(6765.000,2259.000)(6618.986,2380.161)
\path(5595,3159)(9600,2349)
\path(9411.678,2325.873)(9600.000,2349.000)(9435.466,2443.491)
\dottedline{105}(6585,2574)(7665,2574)
\path(1950,1584)(1410,864)
\path(1470.000,1044.000)(1410.000,864.000)(1566.000,972.000)
\path(1410,2259)(3390,2259)(3390,1584)
	(1410,1584)(1410,2259)
\path(3750,2259)(5730,2259)(5730,1584)
	(3750,1584)(3750,2259)
\path(8565,2304)(10545,2304)(10545,1629)
	(8565,1629)(8565,2304)
\path(2431,1575)(2836,945)
\path(2688.193,1063.967)(2836.000,945.000)(2789.134,1128.858)
\path(4425,1584)(3840,909)
\path(3912.546,1084.320)(3840.000,909.000)(4003.229,1005.728)
\path(4740,1584)(5640,819)
\path(5463.992,889.860)(5640.000,819.000)(5541.710,981.293)
\path(9549,1639)(8334,874)
\path(8454.353,1020.680)(8334.000,874.000)(8518.291,919.132)
\path(9689,1628)(10679,863)
\path(10499.882,925.583)(10679.000,863.000)(10573.256,1020.538)
\put(4560,3159){\makebox(0,0)[lb]{\smash{{\SetFigFontNFSS{9}{10.8}{\familydefault}{\mddefault}{\updefault}$U$}}}}
\put(1635,3249){\makebox(0,0)[lb]{\smash{{\SetFigFontNFSS{11}{13.2}{\familydefault}{\mddefault}{\updefault}Good}}}}
\put(420,1809){\makebox(0,0)[lb]{\smash{{\SetFigFontNFSS{11}{13.2}{\familydefault}{\mddefault}{\updefault}Bad}}}}
\put(15,279){\makebox(0,0)[lb]{\smash{{\SetFigFontNFSS{11}{13.2}{\familydefault}{\mddefault}{\updefault}Good}}}}
\put(1860,1809){\makebox(0,0)[lb]{\smash{{\SetFigFontNFSS{9}{10.8}{\familydefault}{\mddefault}{\updefault}$U\setminus \set{I_1}$}}}}
\put(4245,1809){\makebox(0,0)[lb]{\smash{{\SetFigFontNFSS{9}{10.8}{\familydefault}{\mddefault}{\updefault}$U\setminus \set{I_2}$}}}}
\put(9060,1809){\makebox(0,0)[lb]{\smash{{\SetFigFontNFSS{9}{10.8}{\familydefault}{\mddefault}{\updefault}$U\setminus \set{I_n}$}}}}
\put(3120,369){\makebox(0,0)[lb]{\smash{{\SetFigFontNFSS{9}{10.8}{\familydefault}{\mddefault}{\updefault}$U\setminus \set{I_2, \neg F_1}$}}}}
\put(5370,369){\makebox(0,0)[lb]{\smash{{\SetFigFontNFSS{9}{10.8}{\familydefault}{\mddefault}{\updefault}$U\setminus \set{I_2,\neg F_2}$}}}}
\end{picture}
}}
      \caption{Zielonka tree for 
        $\phi_1 = (\Diamond\Box F_1 )\, \vee\, (\Diamond\Box F_2)\, \vee\,
        \bigwedge_{i\leq n}(\Box\Diamond I_j)$.}
      \label{figure:Zielonka-gen}
    \end{figure}
    The number $m_{\f}^v$ for the leaf nodes is 1, and also for all the 
    Bad nodes.
    The number is hence $n$ for root.

  \item
    Consider the (partial) Zielonka tree in 
    Figure~\ref{figure:Zielonka-gen-old}.
    The leaves (not shown) are Bad nodes.
    \begin{figure}[t]
      \strut\centerline{\setlength{\unitlength}{0.00043745in}
\begingroup\makeatletter\ifx\SetFigFontNFSS\undefined%
\gdef\SetFigFontNFSS#1#2#3#4#5{%
  \reset@font\fontsize{#1}{#2pt}%
  \fontfamily{#3}\fontseries{#4}\fontshape{#5}%
  \selectfont}%
\fi\endgroup%
{\renewcommand{\dashlinestretch}{30}
\begin{picture}(12627,7014)(0,-10)
\put(4470,5007){\ellipse{1800}{810}}
\put(1052,5006){\ellipse{1800}{810}}
\put(7620,5007){\ellipse{1800}{810}}
\put(11719,5003){\ellipse{1800}{810}}
\put(11727,5006){\ellipse{1710}{720}}
\put(4471,1448){\ellipse{2520}{810}}
\path(3525,6987)(5505,6987)(5505,6312)
	(3525,6312)(3525,6987)
\path(5190,6312)(7620,5412)
\path(7430.366,5418.252)(7620.000,5412.000)(7472.044,5530.782)
\path(5505,6492)(11265,5367)
\path(11076.837,5342.617)(11265.000,5367.000)(11099.839,5460.392)
\path(4515,6312)(4515,5412)
\path(4455.000,5592.000)(4515.000,5412.000)(4575.000,5592.000)
\path(3615,6312)(1095,5412)
\path(1244.333,5529.045)(1095.000,5412.000)(1284.694,5416.036)
\path(3750,6312)(2175,5367)
\path(2298.479,5511.059)(2175.000,5367.000)(2360.218,5408.160)
\path(3975,6312)(2805,5367)
\path(2907.329,5526.777)(2805.000,5367.000)(2982.730,5433.424)
\path(4920,6312)(6495,5367)
\path(6309.782,5408.160)(6495.000,5367.000)(6371.521,5511.059)
\path(4785,6312)(5550,5412)
\path(5387.707,5510.290)(5550.000,5412.000)(5479.140,5588.008)
\dottedline{90}(3390,5682)(4245,5682)
\dottedline{90}(4695,5682)(5190,5682)
\path(3570,3657)(5370,3657)(5370,2982)
	(3570,2982)(3570,3657)
\path(6135,3657)(7935,3657)(7935,2982)
	(6135,2982)(6135,3657)
\path(1140,3657)(2940,3657)(2940,2982)
	(1140,2982)(1140,3657)
\path(9105,3657)(10905,3657)(10905,2982)
	(9105,2982)(9105,3657)
\path(3795,4737)(1950,3657)
\path(2075.032,3799.713)(1950.000,3657.000)(2135.653,3696.151)
\path(4380,4602)(4380,3657)
\path(4320.000,3837.000)(4380.000,3657.000)(4440.000,3837.000)
\path(5145,4737)(7215,3657)
\path(7027.661,3687.067)(7215.000,3657.000)(7083.169,3793.457)
\path(5280,4827)(9645,3702)
\path(9455.722,3688.822)(9645.000,3702.000)(9485.671,3805.025)
\path(3885,4692)(3075,3747)
\path(3146.587,3922.714)(3075.000,3747.000)(3237.698,3844.619)
\path(4920,4647)(6000,3702)
\path(5825.026,3775.376)(6000.000,3702.000)(5904.046,3865.685)
\dottedline{90}(3480,4062)(4245,4062)
\dottedline{90}(4560,4062)(5370,4062)
\dottedline{90}(2760,4062)(3210,4062)
\dottedline{90}(5730,4062)(6360,4062)
\dottedline{90}(5460,5682)(5820,5682)
\dottedline{90}(6135,5682)(6720,5682)
\dottedline{90}(2850,5682)(3120,5682)
\dottedline{90}(2085,5682)(2580,5682)
\path(4380,2982)(4380,1857)
\path(4320.000,2037.000)(4380.000,1857.000)(4440.000,2037.000)
\path(4245,2982)(2175,1722)
\path(2297.559,1866.842)(2175.000,1722.000)(2359.953,1764.339)
\path(4920,2982)(8475,1677)
\path(8285.349,1682.704)(8475.000,1677.000)(8326.702,1795.354)
\path(1320,2982)(690,2622)
\path(816.515,2763.400)(690.000,2622.000)(876.052,2659.210)
\path(1500,2982)(1500,2577)
\path(1440.000,2757.000)(1500.000,2577.000)(1560.000,2757.000)
\path(1725,2982)(2130,2622)
\path(1955.604,2696.741)(2130.000,2622.000)(2035.328,2786.430)
\path(6495,2982)(6225,2667)
\path(6296.587,2842.714)(6225.000,2667.000)(6387.698,2764.619)
\path(6990,2982)(6990,2667)
\path(6930.000,2847.000)(6990.000,2667.000)(7050.000,2847.000)
\path(9645,2982)(9015,2577)
\path(9133.967,2724.807)(9015.000,2577.000)(9198.858,2623.866)
\path(9870,2982)(9870,2577)
\path(9810.000,2757.000)(9870.000,2577.000)(9930.000,2757.000)
\path(10140,2982)(10725,2577)
\path(10542.853,2630.126)(10725.000,2577.000)(10611.158,2728.789)
\path(7530,2982)(7800,2667)
\path(7637.302,2764.619)(7800.000,2667.000)(7728.413,2842.714)
\dottedline{90}(3255,2307)(4290,2307)
\dottedline{90}(4470,2307)(6405,2307)
\dottedline{90}(1095,2757)(1365,2757)
\dottedline{90}(1635,2757)(1860,2757)
\dottedline{90}(6495,2802)(6810,2802)
\dottedline{90}(7170,2802)(7575,2802)
\dottedline{90}(9420,2757)(9780,2757)
\dottedline{90}(10005,2757)(10365,2757)
\path(4425,1047)(4425,57)
\path(4365.000,237.000)(4425.000,57.000)(4485.000,237.000)
\path(4110,1047)(2445,147)
\path(2574.816,285.375)(2445.000,147.000)(2631.878,179.811)
\path(4875,1047)(6630,12)
\path(6444.475,51.755)(6630.000,12.000)(6505.433,155.119)
\dottedline{90}(3165,462)(4290,462)
\dottedline{90}(4560,462)(5685,462)
\path(600,4647)(150,4287)
\path(253.075,4446.297)(150.000,4287.000)(328.038,4352.593)
\path(1005,4602)(1005,4287)
\path(945.000,4467.000)(1005.000,4287.000)(1065.000,4467.000)
\path(1545,4647)(1995,4287)
\path(1816.962,4352.593)(1995.000,4287.000)(1891.925,4446.297)
\path(7170,4647)(6990,4467)
\path(7074.853,4636.706)(6990.000,4467.000)(7159.706,4551.853)
\path(7665,4602)(7665,4332)
\path(7605.000,4512.000)(7665.000,4332.000)(7725.000,4512.000)
\dottedline{90}(465,4467)(825,4467)
\dottedline{90}(1185,4467)(1590,4467)
\dottedline{90}(7260,4512)(7530,4512)
\dottedline{90}(7845,4512)(8160,4512)
\path(8070,4647)(8340,4422)
\path(8163.309,4491.140)(8340.000,4422.000)(8240.131,4583.326)
\put(4290,6537){\makebox(0,0)[lb]{\smash{{\SetFigFontNFSS{9}{10.8}{\familydefault}{\mddefault}{\updefault}$U$}}}}
\put(645,4917){\makebox(0,0)[lb]{\smash{{\SetFigFontNFSS{9}{10.8}{\familydefault}{\mddefault}{\updefault}$U\setminus\set{F_{1}}$}}}}
\put(3840,4917){\makebox(0,0)[lb]{\smash{{\SetFigFontNFSS{9}{10.8}{\familydefault}{\mddefault}{\updefault}$U\setminus\set{F_{\alpha}}$}}}}
\put(7035,4917){\makebox(0,0)[lb]{\smash{{\SetFigFontNFSS{9}{10.8}{\familydefault}{\mddefault}{\updefault}$U\setminus\set{F_{m}}$}}}}
\put(11175,4917){\makebox(0,0)[lb]{\smash{{\SetFigFontNFSS{9}{10.8}{\familydefault}{\mddefault}{\updefault}$U\setminus\set{F}$}}}}
\put(1185,3207){\makebox(0,0)[lb]{\smash{{\SetFigFontNFSS{9}{10.8}{\familydefault}{\mddefault}{\updefault}$U\setminus\set{F_{\alpha},I_{\alpha,1}}$}}}}
\put(3615,3207){\makebox(0,0)[lb]{\smash{{\SetFigFontNFSS{9}{10.8}{\familydefault}{\mddefault}{\updefault}$U\setminus\set{F_{\alpha},I_{\alpha,i}}$}}}}
\put(6180,3207){\makebox(0,0)[lb]{\smash{{\SetFigFontNFSS{9}{10.8}{\familydefault}{\mddefault}{\updefault}$U\setminus\set{F_{\alpha},I_{\alpha,n}}$}}}}
\put(9150,3207){\makebox(0,0)[lb]{\smash{{\SetFigFontNFSS{9}{10.8}{\familydefault}{\mddefault}{\updefault}$U\setminus\set{F_{\alpha},I_{\alpha}}$}}}}
\put(3345,1362){\makebox(0,0)[lb]{\smash{{\SetFigFontNFSS{9}{10.8}{\familydefault}{\mddefault}{\updefault}$U\setminus\set{F_{\alpha},I_{\alpha,i},F_{\alpha'}}$}}}}
\put(5595,1002){\makebox(0,0)[lb]{\smash{{\SetFigFontNFSS{9}{10.8}{\familydefault}{\mddefault}{\updefault}$\alpha'\notin\set{\alpha}$}}}}
\put(15,5457){\makebox(0,0)[lb]{\smash{{\SetFigFontNFSS{11}{13.2}{\familydefault}{\mddefault}{\updefault}Good}}}}
\put(15,1272){\makebox(0,0)[lb]{\smash{{\SetFigFontNFSS{11}{13.2}{\familydefault}{\mddefault}{\updefault}Good}}}}
\put(15,6447){\makebox(0,0)[lb]{\smash{{\SetFigFontNFSS{11}{13.2}{\familydefault}{\mddefault}{\updefault}Bad}}}}
\put(15,3252){\makebox(0,0)[lb]{\smash{{\SetFigFontNFSS{11}{13.2}{\familydefault}{\mddefault}{\updefault}Bad}}}}
\end{picture}
}}
      \caption{Zielonka tree for 
        $\phi_2 =  (\Diamond\Box F )\, \vee\, 
        \bigvee_{\alpha\leq m} \left(
          \Diamond\Box F_{\alpha} \,\wedge\,
          \left( \bigwedge_{i\leq n} \Box\Diamond I_{\alpha,i}\right)\,\wedge\,
          \Box\Diamond I_{\alpha}\right)$}
      \label{figure:Zielonka-gen-old}
    \end{figure}
    To compute the $m_{\f}^v$ number for the root, pick an outgoing edge from
    each Bad node, and retain all edges from Good nodes.
    For such an edge choice $\mcale$, let $\leaf(\mcalz_{\f,\ap_N},\mcale)$
    denote the number of leaves reachable from the root in the resulting
    graph.
    The $m_{\f}^v$ number for the root is then 
    $\max_{\mcale}\left(\leaf(\mcalz_{\f,\ap_N},\mcale)\right)$.
    For the Zielonka tree in Figure~\ref{figure:Zielonka-gen-old}, let
    $\mcale$ be any such edge choice.
    It can be seen that each Good node in the resulting graph leads to 
    $n+1$ reachable Good nodes in the next Good level below it.
    Also, there are $m$ Good levels.
    Thus the number of  leaves  reachable from the root in the resulting
    graph for any $\mcale$ is $(n+1)^m$.
    \qed

  \end{enumerate}
\end{proof}

\begin{corollary}
\label{corollay:ZielonkaMemoryReceptive}
Let $\A$ be a timed automaton game with the clocks $C$ , 
and let $\ddot{\A}$ be the corresponding
enlarged game.
\begin{enumerate}
\item
  Let $\Phi^{\dagger}$ be as in Lemma~\ref{lemma:NewReceptiveGeneral}.
  Player~1 has a pure sure winning strategy in $\ddot{\A}$
   from $\sureu_1(\Phi^{\dagger})$
  that requires at most $(|C|+1)$ memory states.
\item
  Let $\Phi^*$ be as in Lemma~\ref{lemma:ReceptiveGeneral}.
  Player~1 has a pure sure winning strategy  in $\ddot{\A}$
  from $\sureu_1(\Phi^*)$
  that requires at most $(|C|+1)^{2^{|C|}}$ memory states.
\end{enumerate}
\end{corollary}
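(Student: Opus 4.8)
The plan is to reduce both statements to Lemma~\ref{lemma:ZielonkaMemoryReceptiveFinite} by rewriting the objectives $\Phi^{\dagger}$ and $\Phi^{*}$ into the syntactic normal forms $\phi_1$ and $\phi_2$ appearing there, and then transporting the resulting memory bounds back from the finite-state concurrent game $\ddot{\A}^{\f}$ to $\ddot{\A}$. As noted in the text preceding the corollary, any pure finite-memory winning strategy in $\ddot{\A}^{\f}$ induces a pure winning strategy in $\ddot{\A}$ using no more memory states (propose a move that lands in the region dictated by the $\ddot{\A}^{\f}$-strategy), and the analogues of Lemmas~\ref{lemma:FiniteConcurrentCPre}, \ref{lemma:FiniteConcurrentEquivalence}, and~\ref{lemma:FiniteSafetyGame} for $\ddot{\A},\ddot{\A}^{\f}$ ensure this map preserves winning from the $\sureu_1$ states; moreover the Zielonka-tree memory bound (Lemma~\ref{lemma:ZielonkaMemory}) applies to $\ddot{\A}^{\f}$ with sure-winning objectives, since such games reduce to turn-based games via Corollary~\ref{corollary:PureGame}. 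Since $\Phi^{\dagger}$ and $\Phi^{*}$ are $\omega$-regular location objectives, it therefore suffices to fit them to $\phi_1$ and $\phi_2$ and read off $n$ and $m$.

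For part~(1), I would rewrite $\Phi^{\dagger} = \Box\Diamond(\bl_1=\true)\rightarrow \Psi^{\dagger}$ as $\Diamond\Box(\bl_1=\false)\ \vee\ \Psi^{\dagger}$, and use the decomposition $\Psi^{\dagger} \equiv \big(\Psi_1^{\dagger}\wedge(\Psi_2^{\dagger}\vee\Psi_3^{\dagger})\big)\vee\Psi_4^{\dagger}$ from the proof of Lemma~\ref{lemma:NewReceptiveGeneral}. Applying the standard identities $\bigvee_j\Box\Diamond P_j \equiv \Box\Diamond\bigvee_j P_j$ and $\bigwedge_j\Diamond\Box G_j \equiv \Diamond\Box\bigwedge_j G_j$: $\Psi_4^{\dagger}=\bigwedge_{x\in C}\Diamond\Box(V^{*}_{> \max}(x)=\true)$ collapses to a single $\Diamond\Box F_2$ with $F_2=\bigwedge_{x\in C}(V^{*}_{> \max}(x)=\true)$; the disjunction $\Psi_2^{\dagger}\vee\Psi_3^{\dagger}$ of two $\Box\Diamond$-of-boolean formulas collapses to a single $\Box\Diamond$ predicate; and together with the $|C|$ conjuncts $\Box\Diamond\big((x=0)\vee(V^{*}_{> \max}(x)=\true)\big)$ of $\Psi_1^{\dagger}$ this makes $\Psi_1^{\dagger}\wedge(\Psi_2^{\dagger}\vee\Psi_3^{\dagger})$ into $\bigwedge_{i\le |C|+1}\Box\Diamond I_i$. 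Taking $F_1=(\bl_1=\false)$, the objective $\Phi^{\dagger}$ is exactly the shape $\phi_1$ of Lemma~\ref{lemma:ZielonkaMemoryReceptiveFinite}(1) with $n=|C|+1$, yielding a pure winning strategy in $\ddot{\A}^{\f}$ from $\sureu_1(\Phi^{\dagger})$ with at most $|C|+1$ memory states, hence one in $\ddot{\A}$.

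For part~(2), I would rewrite $\Phi^{*} = \Box\Diamond(\bl_1=\true)\rightarrow\bigvee_{X\subseteq C}\phi_X$ as $\Diamond\Box(\bl_1=\false)\ \vee\ \bigvee_{X\subseteq C}\phi_X$. In each $\phi_X$, collapse $\bigwedge_{x\in X}\Diamond\Box(x>c_x)$ to a single $\Diamond\Box F_X$, collapse the inner disjunction of two $\Box\Diamond$ predicates to a single $\Box\Diamond I_X$, and pad the $|C\setminus X|\ (\le|C|)$ conjuncts $\Box\Diamond(x=0)$ up to exactly $|C|$ of them with the trivially true $\Box\Diamond\true$ (which does not change the winning set). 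Then $\Phi^{*}$ has the form $\phi_2$ of Lemma~\ref{lemma:ZielonkaMemoryReceptiveFinite}(2) with the leading $\Diamond\Box F$ instantiated by $\Diamond\Box(\bl_1=\false)$, with the outer disjunction indexed by the $2^{|C|}$ subsets (so $m=2^{|C|}$) and with $n=|C|$; the lemma then gives a pure winning strategy with at most $(n+1)^m=(|C|+1)^{2^{|C|}}$ memory states in $\ddot{\A}^{\f}$, hence one in $\ddot{\A}$.

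I do not expect a conceptual obstacle here; the substantive content --- the Zielonka-tree computation --- is already contained in Lemma~\ref{lemma:ZielonkaMemoryReceptiveFinite}. The only work, and the place one has to be careful, is the bookkeeping: verifying that after adding the $V^{*}_{> \max}$-predicates to $\ddot{\A}$ every inner temporal operator in $\Phi^{\dagger}$ and $\Phi^{*}$ is a single $\Box\Diamond$ or $\Diamond\Box$ of a \emph{state} predicate (so the collapsing identities are legitimately applicable), that padding the $\Box\Diamond$-conjunctions to a uniform length $n$ leaves the winning set unchanged, and that the analogues of Lemmas~\ref{lemma:FiniteConcurrentCPre}--\ref{lemma:FiniteSafetyGame} and the memory-preserving strategy map back to $\ddot{\A}$ genuinely hold for $\ddot{\A}^{\f}$ as asserted just before Lemma~\ref{lemma:StreettElim}.
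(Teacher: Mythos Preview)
Your proposal is correct and follows essentially the same approach as the paper: apply Lemma~\ref{lemma:ZielonkaMemoryReceptiveFinite} to the finite game $\ddot{\A}^{\f}$ and then lift the resulting finite-memory strategy back to $\ddot{\A}$ via the region map. The paper's own proof is in fact much terser than yours --- it simply asserts that Lemma~\ref{lemma:ZielonkaMemoryReceptiveFinite} applies and describes the lifting --- whereas you spell out explicitly how $\Phi^{\dagger}$ and $\Phi^{*}$ are massaged (via the $\Box\Diamond$/$\Diamond\Box$ collapsing identities and, for $\Phi^{*}$, padding to a uniform $n$) into the syntactic shapes $\phi_1$ and $\phi_2$; this bookkeeping is exactly what the paper leaves implicit.
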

\begin{proof}
For both cases, we first Lemma~\ref{lemma:ZielonkaMemoryReceptiveFinite}
to the finite game structure $\ddot{\A}^{\f}$ to obtain a pure sure winning
strategy $\pi_1^{\ddot{\A}^{\f}}$ in the finite game structure $\ddot{\A}^{\f}$; 
and then we obtain a pure sure winning
strategy $\pi_1^{\ddot{\A}}$ in the game structure $\ddot{\A}$
by letting
$\pi_1^{\ddot{\A}}(\ddot{r}[0..k]) = \tuple{\Delta,a_1}$ such that
(a)~$\pi_1^{\ddot{\A}^{\f}}\left(\reg(\ddot{r}[0..k])\right) = 
\tuple{\ddot{R},a_1}$, and
(b)~$\reg(\ddot{r}[k]+\Delta)=\ddot{R}$.
Thus, the memory requirement for a player-1
winning strategy in $\ddot{\A}$  is at  as most as that for in the finite game
 $\ddot{\A}^{\f}$.
\qed
\end{proof}

\subsection{Finite Memory Receptive Strategies for
Safety Objectives}

Player~1 can ensure it stays in a set $Y$ in a receptive fashion if
it uses a receptive strategy that only plays moves to $Y$ states at each 
step.
The next theorem uses this fact to characterize safety strategies.

\begin{theorem}[Memory requirement for safety]
\label{theorem:Safety} Let $\A$ be a timed automaton game and
$\ddot{\A}$ be the corresponding enlarged game. 
Let $Y$ be a
union of regions of $\A$.
Then the following assertions hold.
\begin{enumerate}
\item
  $\sure_1^{\ddot{\A}}(\Box Y) = 
  \sureu_1^{\ddot{\A}}\left( (\Box Y) \wedge \Phi^{\sharp} \right)$, where 
  $\Phi^{\sharp} = \Phi^*$  (as defined in Lemma~\ref{lemma:ReceptiveGeneral}), 
  or $\Phi^{\sharp} = \Phi^{\dagger}$  
  (as defined in Lemma~\ref{lemma:NewReceptiveGeneral}).
\item
  Player~1 has  a 
  pure, finite-memory,  receptive, region   strategy  in $\ddot{\A}$ that is
  sure winning for
  the safety objective $\safe(Y)$ at every state in
  $ \sure_1^{\ddot{\A}}(\Box Y)$, that requires at most $(|C|+1)$ memory
  states (where $|C|$ is the number of clocks in $\A$).
\item 
  Player~1 has  a 
  pure, finite-memory,  receptive,  strategy  in $\A$ that is
  sure winning for
  the safety objective $\safe(Y)$ at every state in
  $ \sure_1^{\A}(\Box Y)$, that requires at most $(|C|+1)\cdot 2^{3\cdot|C|+1}$ 
  memory states, i.e. $\big(\,\lg(|C|+1)\, +\, 3\cdot|C|+1\,\big)$ bits of memory
  (where $|C|$ is the number of clocks in $\A$).

\end{enumerate}
\end{theorem}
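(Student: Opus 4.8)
The plan is to obtain Theorem~\ref{theorem:Safety} by assembling ingredients already in place: Theorem~\ref{theorem:ReceptiveTimedivbl}, the receptiveness characterizations (Lemmas~\ref{lemma:ReceptiveGeneral} and~\ref{lemma:NewReceptiveGeneral}), and the Zielonka memory bounds (Corollary~\ref{corollay:ZielonkaMemoryReceptive}). For part~(1) I would prove $\sure_1^{\ddot{\A}}(\Box Y)=\sureu_1^{\ddot{\A}}\big((\Box Y)\wedge\Phi^{\sharp}\big)$ by re-running the proof of Lemma~\ref{lemma:NewReceptiveGeneral} (resp.\ Lemma~\ref{lemma:ReceptiveGeneral}) with the extra conjunct $\Box Y$ carried through. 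For $(\supseteq)$: from a strategy winning $(\Box Y)\wedge\Phi^{\sharp}$ in the unrestricted game, pass to the region-equivalent \emph{jumping} strategy $\pi_1'$ of that lemma (jump near region boundaries; propose delay $>1$ inside $\ddot{R}_{\max}$); by Lemma~\ref{lemma:RegionStrategies} it still wins, and the lemma's analysis shows every outcome satisfying $\Phi^{\sharp}$ is time-divergent or eventually player-1-blameless, so $\pi_1'$ is a receptive strategy that wins $\Box Y$ against all (hence all receptive) player-2 strategies, placing the state in $\sure_1^{\ddot{\A}}(\Box Y)$. For $(\subseteq)$: if a state is outside $\sureu_1^{\ddot{\A}}\big((\Box Y)\wedge\Phi^{\sharp}\big)$, then for every player-1 strategy $\pi_1$ one uses $\ddot{\A}^{\f}$ and Lemma~\ref{lemma:FiniteSafetyGame} to build a player-2 strategy whose outcomes violate $(\Box Y)\wedge\Phi^{\sharp}$, tuning the player-2 delays exactly as in Lemma~\ref{lemma:NewReceptiveGeneral} (do not shrink them on outcomes that already leave $Y$, shrink them below $1/2^{k}$ when forcing $\neg\Psi^{\dagger}$) so that the bad outcome is either time-divergent and outside $Y$ (whence $\pi_1$ fails $\Box Y$ against a receptive opponent) or zeno with player~1 blamed infinitely often (whence $\pi_1$ is not receptive); either way $\pi_1$ is not a receptive $\Box Y$-winning strategy.

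For part~(2) the key point is that the safety conjunct costs no memory. Working in the finite game $\ddot{\A}^{\f}$ (via the analogue of Lemma~\ref{lemma:FiniteConcurrentEquivalence} for $\ddot{\A}$), I would prune every region not in $Y$ and every region from which player~2 can force reaching such a region --- a memoryless safety fixpoint justified by Lemma~\ref{lemma:RegionsBeatRegions} --- obtaining a finite game $\ddot{\A}^{\f}\!\restriction\! Y$ on which winning $(\Box Y)\wedge\Phi^{\dagger}$ is the same as winning $\Phi^{\dagger}$. Using $\Box\Diamond A\vee\Box\Diamond B\equiv\Box\Diamond(A\vee B)$, $\Diamond\Box A\vee\Diamond\Box B\equiv\Diamond\Box(A\vee B)$, and $\bigwedge_{x}\Diamond\Box P(x)\equiv\Diamond\Box\bigwedge_{x}P(x)$, the formula $\Phi^{\dagger}=\Box\Diamond(\bl_1=\true)\rightarrow\Psi^{\dagger}$ rewrites as one of the shape $\phi_1=(\Diamond\Box F)\vee\bigwedge_{i\le|C|+1}\Box\Diamond I_i$ of Lemma~\ref{lemma:ZielonkaMemoryReceptiveFinite}(1) with $n=|C|+1$ (the blamelessness clause and $\Psi_4^{\dagger}$ fold into $\Diamond\Box F$; $\Psi_1^{\dagger}$ contributes $|C|$ of the $\Box\Diamond I_i$ and $\Psi_2^{\dagger}\vee\Psi_3^{\dagger}$ the last one), so player~1 wins it in $\ddot{\A}^{\f}\!\restriction\! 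Y$ with at most $|C|+1$ memory states; mapping back to $\ddot{\A}$ as in the proof of Corollary~\ref{corollay:ZielonkaMemoryReceptive} and taking the jumping form of the resulting region strategy yields a pure, finite-memory, receptive, region strategy sure winning for $\safe(Y)$ from every state of $\sure_1^{\ddot{\A}}(\Box Y)$ with at most $|C|+1$ memory states.

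For part~(3) I would simulate the strategy $\ddot{\pi}_1$ of part~(2) inside $\A$ itself. A state of $\ddot{\A}$ augments a state of $\A$ only with the observational components $\bl_1\in\{\true,\false\}$ and $V_{>0},V_{\ge1},V^{*}_{>\max}\in\{\true,\false\}^{C}$, which do not restrict transitions and change only at transitions. A player-1 strategy in $\A$ therefore needs to carry these $3|C|+1$ bits in addition to the at-most-$(|C|+1)$ memory states of $\ddot{\pi}_1$, i.e.\ at most $(|C|+1)\cdot 2^{3|C|+1}$ memory states, which is $\lg(|C|+1)+3|C|+1$ bits. The thing to verify is that these bits can be maintained from what player~1 observes: from a round starting at $\A$-state $s$, player~1 sees the chosen delay $\Delta^{*}$ and the reached state, hence the pre-reset valuation $\kappa_s+\Delta^{*}$, and thereby recomputes $V_{>0}$ and $V_{\ge1}$ coordinatewise, sets $V^{*}_{>\max}(x)$ to $\true$ iff $\kappa_s(x)>c_x$, and sets $\bl_1$ according to whether its own move was taken; together with the observed $\A$-state this reconstitutes the current $\ddot{\A}$-state and drives $\ddot{\pi}_1$. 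Since $\ddot{\pi}_1$ wins $(\Box Y)\wedge\Phi^{\dagger}$ against every player-2 strategy and the runs of $\A$ are exactly the projections of the runs of $\ddot{\A}$ (cf.\ Lemma~\ref{lemma:ExpandedGame}), the simulated strategy stays in $Y$, is receptive (by the $\Phi^{\dagger}$ part, as in part~(1)), and is sure winning for $\safe(Y)$ from $\sure_1^{\A}(\Box Y)$, which is the projection of $\sure_1^{\ddot{\A}}(\Box Y)$.

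I expect the main obstacle to be the $(\subseteq)$ direction of part~(1): arranging the player-2 spoiling strategy obtained from $\ddot{\A}^{\f}$ so that every bad outcome is bad in exactly one of the two admissible ways --- time-divergent outside $Y$, versus zeno with player~1 blamed --- which is the delicate delay-bookkeeping of the proof of Lemma~\ref{lemma:NewReceptiveGeneral} and must be checked to survive the addition of the $\Box Y$ constraint. Everything else is assembly, and the only genuinely new quantitative content is the $3|C|+1$-bit cost of internalizing $\bl_1,V_{>0},V_{\ge1},V^{*}_{>\max}$ in part~(3).
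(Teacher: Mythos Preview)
Your proposal is correct and follows essentially the paper's route: part~(1) couples the receptiveness characterisations with a case split (an outcome violates $\Box Y$ versus violates $\Phi^{\sharp}$); part~(2) invokes the Zielonka bound of Corollary~\ref{corollay:ZielonkaMemoryReceptive} after observing that the safety conjunct adds no memory; part~(3) counts the $3|C|+1$ bits needed to internalise the $\ddot{\A}$-predicates in $\A$.

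Two remarks. First, the equivalence $\Diamond\Box A\vee\Diamond\Box B\equiv\Diamond\Box(A\vee B)$ you invoke is \emph{false} (only the left-to-right implication holds; alternating $A$ and $B$ gives a counterexample). Fortunately you do not need it: Lemma~\ref{lemma:ZielonkaMemoryReceptiveFinite}(1) is already stated with \emph{two} $\Diamond\Box$-disjuncts, so $\Diamond\Box(\neg\bl_1)$ and $\Psi_4^{\dagger}=\Diamond\Box\bigwedge_{x}V^*_{>\max}(x)$ can serve as $F_1$ and $F_2$ separately, while $\Psi_1^{\dagger}$ together with $\Psi_2^{\dagger}\vee\Psi_3^{\dagger}=\Box\Diamond(\cdot)$ supplies the $|C|+1$ conjuncts $\Box\Diamond I_i$. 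The memory bound $n=|C|+1$ then follows exactly as you say.

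Second, your $(\subseteq)$-argument for part~(1) is in fact more careful than the paper's. The paper simply asserts ``if $\Phi^{\sharp}$ is violated, then $\pi_1$ is not a receptive strategy'' and, for the $\Box Y$ case, that player~2 can switch to a receptive strategy once $Y$ is left. The first assertion is not self-evident for an \emph{arbitrary} violating outcome (a time-divergent run can fail $\Psi^{\dagger}$), so routing through the delay-shrinking spoiling construction of Lemma~\ref{lemma:NewReceptiveGeneral}, as you propose, is the right way to make the argument rigorous: one arranges that the spoiling outcome is either zeno with player~1 blamed infinitely often (so $\pi_1$ is not receptive) or leaves $Y$ in finitely many steps (so a receptive tail suffices for player~2). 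This is exactly the ``main obstacle'' you flagged, and your plan for it is sound.
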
 

\begin{proof}
\begin{enumerate}
\item ($\Leftarrow$).
  If a state $\ddot{s}\in \sureu_1^{\ddot{\A}}(\Box Y \wedge
  \Phi^{\sharp})$, then there exists a player-1 winning strategy 
  $\pi_1$ such that
  given any player-2 strategy $\pi_2$, we have that every run $\ddot{r}$
  in $\outcomes(\ddot{s},\pi_1,\pi_2)$ satisfies both $\Box Y$ and
  $\Phi^{\sharp})$.
  Since $\Phi^{\sharp})$ is satisfies, the strategy $\pi_1$ is a receptive
  strategy by Lemmas~\ref{lemma:ReceptiveGeneral} 
  and~\ref{lemma:NewReceptiveGeneral}.
  Moreover this strategy ensures 
  that the game stays in $Y$.

  ($\Rightarrow$).
  If $\ddot{s}\notin \sureu_1^{\ddot{\A}}(\Box Y\wedge \Phi^{\sharp} )$, 
  then for every player-1 strategy $\pi_1$,
  there exists a player-2 strategy $\pi_2$ such that one of the
  resulting runs either violates $\Box Y$, or $\Phi^{\sharp}$.
  If $\Phi^{\sharp}$ is violated, then $\pi_1$ is not a receptive strategy.
  If $\Box Y$ is violated, then player~2 can switch over to a receptive
  strategy as soon as the game gets outside $Y$.
  Thus, in both cases $s\notin \sure_1^{\ddot{\A}}(\Box Y)$.

\item
  The result follows from (a)~the first part of the lemma, 
  (b)~observing that
  $\Box Y \wedge \Phi^{\sharp})$ is an $\omega$-regular objective, 
  (c)~Lemma~\ref{lemma:RegionStrategies}, and
  (d)~the first part of Corollary~\ref{corollay:ZielonkaMemoryReceptive}
  (the memory requirement to ensure 
  $\Box Y \wedge \Phi^{\sharp})$ is the same as that to ensure 
  $\Phi^{\sharp})$.
  We note that the characterization of Lemma~\ref{lemma:ReceptiveGeneral}
  for receptive strategies gives a memory bound of $(|C|+1)^{2^{|C|}}$ for
  safe receptive strategies.

\item 
  It suffices to show that in the structure $\A$, 
  player~1 needs only $(3\cdot |C| +1)$ bits to
  maintain the predicates used in
  the definition of $\ddot{\A}$ in memory.
  Then, with the help of these $(3\cdot |C|+1)$ bits, player~1 can play
  as if it is playing in $\ddot{\A}$.
  We assume that player~1 can observe the ``flow'' during a transition.
  That is, if the game moves from $s$ to $s'$ in a single game transition,
  player~1 can observe the ``intermediate'' states (arising from time passage)
  ``in between'' $s$ and $s'$.
  Then, player~1 needs only one bit for each of the predicates added to 
  $\A$ in the construcion of $\ddot{\A}$.
  These bits are updated during the flow of the transition.
  There are $(3\cdot|C|+1)$ predicates.

\qed

\end{enumerate}
\end{proof}

\subsection{Memory Requirement of  Receptive Region Strategies for
Safety Objectives}

We now show memoryless \emph{region} strategies for  safety objectives do 
not suffice (where the regions are as 
classically defined for timed automata).

\begin{example}[Memory necessity  of  winning region strategies for safety]
\label{example:MemoryRequiredRegionSafety}
Consider the timed automaton game $\A_3$ 
in Figure~\ref{figure:ExampleReceptiveRegion}.
\begin{figure}[t]
      \strut\centerline{\setlength{\unitlength}{0.00043745in}
\begingroup\makeatletter\ifx\SetFigFontNFSS\undefined%
\gdef\SetFigFontNFSS#1#2#3#4#5{%
  \reset@font\fontsize{#1}{#2pt}%
  \fontfamily{#3}\fontseries{#4}\fontshape{#5}%
  \selectfont}%
\fi\endgroup%
{\renewcommand{\dashlinestretch}{30}
\begin{picture}(10081,5829)(0,-10)
\put(998,1402){\ellipse{1980}{990}}
\put(5086,1372){\ellipse{1980}{990}}
\put(9083,1374){\ellipse{1980}{990}}
\put(5107,5142){\ellipse{1980}{990}}
\path(5123,1869)(6068,3309)(3953,3309)
	(3953,4074)(5708,4074)(4898,4659)
\path(5079.052,4602.252)(4898.000,4659.000)(5008.793,4504.971)
\path(4313,1689)(4311,1690)(4307,1693)
	(4299,1697)(4286,1704)(4270,1714)
	(4249,1725)(4224,1738)(4196,1753)
	(4166,1768)(4135,1784)(4103,1799)
	(4071,1814)(4039,1827)(4007,1840)
	(3976,1852)(3945,1863)(3914,1872)
	(3883,1881)(3851,1889)(3819,1896)
	(3785,1903)(3750,1909)(3713,1914)
	(3686,1918)(3658,1921)(3629,1924)
	(3598,1927)(3567,1930)(3535,1932)
	(3501,1934)(3467,1936)(3431,1938)
	(3395,1940)(3357,1941)(3319,1942)
	(3280,1943)(3241,1943)(3201,1943)
	(3161,1943)(3120,1942)(3080,1941)
	(3040,1940)(3000,1939)(2961,1937)
	(2922,1935)(2884,1932)(2846,1929)
	(2810,1926)(2774,1923)(2740,1919)
	(2706,1915)(2674,1910)(2643,1906)
	(2612,1901)(2583,1895)(2555,1890)
	(2528,1884)(2494,1876)(2461,1867)
	(2430,1857)(2400,1846)(2370,1835)
	(2341,1822)(2313,1807)(2284,1792)
	(2256,1775)(2227,1756)(2197,1736)
	(2168,1715)(2138,1692)(2109,1669)
	(2079,1645)(2052,1621)(2025,1598)
	(2001,1577)(1980,1558)(1963,1542)
	(1949,1529)(1928,1509)
\path(2016.966,1676.586)(1928.000,1509.000)(2099.724,1589.690)
\path(1883,1194)(1885,1192)(1889,1189)
	(1896,1183)(1907,1174)(1922,1161)
	(1942,1146)(1964,1128)(1990,1109)
	(2017,1088)(2046,1068)(2076,1047)
	(2106,1028)(2136,1009)(2166,992)
	(2196,977)(2225,962)(2255,949)
	(2285,938)(2316,927)(2348,918)
	(2381,909)(2416,901)(2453,894)
	(2480,889)(2508,885)(2538,881)
	(2568,877)(2600,873)(2632,869)
	(2666,866)(2701,863)(2737,861)
	(2774,858)(2812,856)(2851,854)
	(2891,852)(2931,851)(2971,850)
	(3012,849)(3053,849)(3094,848)
	(3135,848)(3175,849)(3215,849)
	(3255,850)(3294,851)(3332,853)
	(3369,854)(3405,856)(3440,858)
	(3474,861)(3506,863)(3538,866)
	(3568,869)(3598,872)(3626,875)
	(3653,879)(3690,884)(3725,890)
	(3758,897)(3790,904)(3821,912)
	(3851,921)(3881,930)(3910,941)
	(3940,953)(3970,966)(4000,979)
	(4030,994)(4060,1009)(4089,1025)
	(4116,1040)(4142,1055)(4164,1068)
	(4184,1079)(4199,1089)(4223,1104)
\path(4102.160,957.720)(4223.000,1104.000)(4038.560,1059.480)
\path(6023,1509)(6025,1511)(6029,1514)
	(6036,1520)(6047,1529)(6062,1542)
	(6082,1557)(6104,1575)(6130,1594)
	(6157,1615)(6186,1635)(6216,1656)
	(6246,1675)(6276,1694)(6306,1711)
	(6336,1726)(6365,1741)(6395,1754)
	(6425,1765)(6456,1776)(6488,1785)
	(6521,1794)(6556,1802)(6593,1809)
	(6620,1814)(6648,1818)(6678,1822)
	(6708,1826)(6740,1830)(6772,1834)
	(6806,1837)(6841,1840)(6877,1843)
	(6914,1845)(6952,1847)(6990,1849)
	(7030,1851)(7070,1853)(7110,1854)
	(7150,1855)(7191,1855)(7232,1856)
	(7272,1856)(7312,1856)(7351,1856)
	(7390,1855)(7429,1854)(7466,1853)
	(7502,1852)(7538,1850)(7572,1849)
	(7605,1847)(7637,1845)(7667,1842)
	(7697,1840)(7725,1837)(7752,1834)
	(7778,1831)(7816,1827)(7852,1821)
	(7885,1815)(7917,1808)(7947,1801)
	(7977,1793)(8005,1783)(8033,1773)
	(8061,1761)(8089,1749)(8117,1736)
	(8143,1722)(8169,1708)(8193,1694)
	(8215,1681)(8234,1669)(8249,1660)(8273,1644)
\path(8089.949,1693.923)(8273.000,1644.000)(8156.513,1793.769)
\path(8228,1104)(8226,1103)(8222,1100)
	(8214,1096)(8202,1089)(8186,1079)
	(8165,1068)(8142,1055)(8115,1040)
	(8086,1025)(8056,1009)(8026,994)
	(7995,979)(7965,966)(7936,953)
	(7907,941)(7878,930)(7850,921)
	(7822,912)(7794,904)(7765,897)
	(7736,890)(7705,884)(7673,879)
	(7648,875)(7622,872)(7596,868)
	(7568,865)(7539,862)(7510,860)
	(7479,857)(7447,855)(7415,853)
	(7381,852)(7347,851)(7312,850)
	(7276,849)(7240,848)(7204,848)
	(7167,849)(7131,849)(7094,850)
	(7058,851)(7022,853)(6986,855)
	(6951,857)(6917,859)(6883,862)
	(6850,865)(6818,868)(6787,872)
	(6757,876)(6728,880)(6700,884)
	(6672,889)(6646,894)(6613,901)
	(6582,908)(6552,916)(6522,924)
	(6492,934)(6463,944)(6432,955)
	(6402,968)(6370,981)(6338,996)
	(6304,1012)(6270,1029)(6235,1047)
	(6200,1066)(6165,1085)(6131,1104)
	(6099,1122)(6070,1139)(6044,1155)
	(6022,1167)(6005,1178)(5978,1194)
\path(6163.441,1153.853)(5978.000,1194.000)(6102.264,1050.618)
\path(1028,1869)(1028,1871)(1028,1874)
	(1028,1881)(1028,1891)(1028,1907)
	(1029,1927)(1029,1953)(1029,1985)
	(1030,2022)(1031,2066)(1031,2115)
	(1032,2169)(1033,2228)(1034,2291)
	(1036,2358)(1037,2427)(1039,2499)
	(1040,2572)(1042,2646)(1044,2720)
	(1046,2793)(1049,2866)(1051,2937)
	(1054,3007)(1056,3075)(1059,3140)
	(1062,3204)(1065,3265)(1068,3324)
	(1071,3380)(1075,3435)(1079,3486)
	(1083,3536)(1087,3584)(1091,3630)
	(1095,3673)(1100,3716)(1105,3756)
	(1110,3795)(1116,3833)(1122,3870)
	(1128,3905)(1134,3940)(1141,3973)
	(1148,4006)(1157,4046)(1166,4084)
	(1176,4122)(1187,4159)(1198,4195)
	(1210,4231)(1223,4267)(1236,4302)
	(1250,4336)(1265,4370)(1281,4404)
	(1298,4437)(1316,4469)(1335,4501)
	(1354,4532)(1375,4562)(1396,4592)
	(1419,4621)(1442,4649)(1466,4676)
	(1491,4702)(1517,4728)(1543,4752)
	(1570,4775)(1598,4798)(1627,4819)
	(1656,4840)(1686,4859)(1716,4877)
	(1747,4895)(1779,4911)(1811,4927)
	(1844,4942)(1877,4956)(1911,4969)
	(1946,4981)(1982,4993)(2018,5004)
	(2050,5013)(2083,5022)(2117,5030)
	(2153,5038)(2189,5046)(2227,5053)
	(2266,5060)(2307,5066)(2350,5072)
	(2395,5078)(2441,5083)(2490,5088)
	(2542,5093)(2596,5098)(2652,5103)
	(2711,5107)(2772,5111)(2836,5115)
	(2903,5118)(2971,5122)(3042,5125)
	(3115,5128)(3189,5131)(3264,5134)
	(3339,5136)(3414,5139)(3488,5141)
	(3561,5143)(3631,5145)(3698,5146)
	(3761,5148)(3819,5149)(3872,5150)
	(3919,5151)(3961,5152)(3995,5153)
	(4024,5153)(4046,5153)(4063,5154)(4088,5154)
\path(3908.000,5094.000)(4088.000,5154.000)(3908.000,5214.000)
\path(9173,1869)(9173,1871)(9173,1874)
	(9173,1881)(9173,1891)(9172,1907)
	(9172,1927)(9172,1953)(9171,1985)
	(9171,2022)(9170,2066)(9169,2115)
	(9168,2169)(9166,2228)(9165,2292)
	(9163,2358)(9161,2428)(9160,2500)
	(9157,2573)(9155,2647)(9153,2721)
	(9150,2794)(9147,2867)(9144,2939)
	(9141,3008)(9138,3076)(9134,3142)
	(9131,3206)(9127,3267)(9123,3326)
	(9119,3383)(9114,3437)(9110,3490)
	(9105,3540)(9100,3588)(9094,3633)
	(9089,3678)(9083,3720)(9077,3761)
	(9070,3800)(9063,3838)(9056,3875)
	(9048,3911)(9040,3946)(9032,3981)
	(9023,4014)(9012,4054)(9000,4093)
	(8988,4131)(8975,4169)(8961,4206)
	(8946,4243)(8931,4279)(8914,4315)
	(8897,4350)(8879,4385)(8860,4419)
	(8839,4453)(8818,4486)(8796,4519)
	(8774,4551)(8750,4582)(8725,4613)
	(8700,4642)(8673,4671)(8646,4700)
	(8618,4727)(8589,4753)(8560,4778)
	(8530,4803)(8500,4826)(8469,4848)
	(8437,4869)(8405,4890)(8373,4909)
	(8340,4927)(8307,4944)(8274,4961)
	(8240,4976)(8205,4991)(8171,5004)
	(8135,5017)(8100,5030)(8063,5042)
	(8031,5051)(7998,5060)(7964,5069)
	(7930,5077)(7894,5085)(7857,5093)
	(7819,5100)(7780,5107)(7739,5113)
	(7696,5119)(7651,5125)(7605,5130)
	(7556,5135)(7505,5140)(7452,5145)
	(7397,5149)(7339,5154)(7279,5158)
	(7216,5162)(7152,5165)(7086,5169)
	(7018,5172)(6949,5175)(6879,5178)
	(6809,5181)(6739,5183)(6670,5185)
	(6602,5187)(6537,5189)(6475,5191)
	(6416,5193)(6362,5194)(6313,5195)
	(6269,5196)(6231,5197)(6199,5198)
	(6172,5198)(6152,5198)(6136,5199)(6113,5199)
\path(6293.000,5259.000)(6113.000,5199.000)(6293.000,5139.000)
\put(4808,1509){\makebox(0,0)[lb]{\smash{{\SetFigFontNFSS{11}{13.2}{\familydefault}{\mddefault}{\updefault}$l_0$}}}}
\put(758,1509){\makebox(0,0)[lb]{\smash{{\SetFigFontNFSS{11}{13.2}{\familydefault}{\mddefault}{\updefault}$l_1$}}}}
\put(8813,1509){\makebox(0,0)[lb]{\smash{{\SetFigFontNFSS{11}{13.2}{\familydefault}{\mddefault}{\updefault}$l_2$}}}}
\put(5798,114){\makebox(0,0)[lb]{\smash{{\SetFigFontNFSS{11}{13.2}{\familydefault}{\mddefault}{\updefault}$1>y>0 \rightarrow x:=0$}}}}
\put(1478,159){\makebox(0,0)[lb]{\smash{{\SetFigFontNFSS{11}{13.2}{\familydefault}{\mddefault}{\updefault}$1>y>0 \rightarrow x:=0$}}}}
\put(6923,474){\makebox(0,0)[lb]{\smash{{\SetFigFontNFSS{11}{13.2}{\familydefault}{\mddefault}{\updefault}$a_2^1$}}}}
\put(2873,474){\makebox(0,0)[lb]{\smash{{\SetFigFontNFSS{11}{13.2}{\familydefault}{\mddefault}{\updefault}$a_2^0$}}}}
\put(4673,1149){\makebox(0,0)[lb]{\smash{{\SetFigFontNFSS{11}{13.2}{\familydefault}{\mddefault}{\updefault}$2>y$}}}}
\put(8633,1104){\makebox(0,0)[lb]{\smash{{\SetFigFontNFSS{11}{13.2}{\familydefault}{\mddefault}{\updefault}$2>y$}}}}
\put(533,1149){\makebox(0,0)[lb]{\smash{{\SetFigFontNFSS{11}{13.2}{\familydefault}{\mddefault}{\updefault}$2>y$}}}}
\put(6698,2004){\makebox(0,0)[lb]{\smash{{\SetFigFontNFSS{11}{13.2}{\familydefault}{\mddefault}{\updefault}$1>y>x$}}}}
\put(7013,2274){\makebox(0,0)[lb]{\smash{{\SetFigFontNFSS{11}{13.2}{\familydefault}{\mddefault}{\updefault}$a_1^1$}}}}
\put(1658,2094){\makebox(0,0)[lb]{\smash{{\SetFigFontNFSS{11}{13.2}{\familydefault}{\mddefault}{\updefault}$x=0 \rightarrow y:=0$}}}}
\put(2603,2364){\makebox(0,0)[lb]{\smash{{\SetFigFontNFSS{11}{13.2}{\familydefault}{\mddefault}{\updefault}$a_1^0$}}}}
\put(4898,5064){\makebox(0,0)[lb]{\smash{{\SetFigFontNFSS{11}{13.2}{\familydefault}{\mddefault}{\updefault}$\bad$}}}}
\put(4178,3399){\makebox(0,0)[lb]{\smash{{\SetFigFontNFSS{11}{13.2}{\familydefault}{\mddefault}{\updefault}$y>1$}}}}
\put(4178,3759){\makebox(0,0)[lb]{\smash{{\SetFigFontNFSS{11}{13.2}{\familydefault}{\mddefault}{\updefault}$a_1^3, a_2^3$}}}}
\put(848,5064){\makebox(0,0)[lb]{\smash{{\SetFigFontNFSS{11}{13.2}{\familydefault}{\mddefault}{\updefault}$y>1$}}}}
\put(8093,5154){\makebox(0,0)[lb]{\smash{{\SetFigFontNFSS{11}{13.2}{\familydefault}{\mddefault}{\updefault}$y>1$}}}}
\put(848,5469){\makebox(0,0)[lb]{\smash{{\SetFigFontNFSS{11}{13.2}{\familydefault}{\mddefault}{\updefault}$a_1^4, a_2^4$}}}}
\put(8093,5559){\makebox(0,0)[lb]{\smash{{\SetFigFontNFSS{11}{13.2}{\familydefault}{\mddefault}{\updefault}$a_1^5, a_2^5$}}}}
\end{picture}
}}
      \caption{A time automaton game $\A_3$ where player-1 does not have 
        receptive region strategies for the safety objective.}
      \label{figure:ExampleReceptiveRegion}
    \end{figure}
The edges $a_1^j$ are player-1 edges and $a_2^j$ player-2 edges.
The safety objective of player-1 is to avoid the location ``$\bad$''.
It is clear that to avoid the bad location, 
player-1 must ensure that the game keeps cycling around the locations 
$l_0, l_1, l_2$, and that the clock value of $y$ never exceeds 1.
Cycling around only in  $l_0, l_1$ cannot be ensured by a receptive player-1
strategy as player~2 can take smaller and smaller time steps to take the 
$a_2^0$ transition.
Cycling around only in  $l_0, l_2$ also cannot be ensured by a receptive 
player-1 strategy as the clock value of would always need to stay below 1 
without being reset, implying that more than 1 time unit does not pass.
Thus, any receptive player-1 strategy which avoids the bad location must cycle
infinitely often between $l_0, l_1$, and also between $l_0, l_2$.

Suppose a player-1 \emph{memoryless} region strategy $\pi_1^*$ exists for 
avoiding the 
bad location,
starting from a state in the region $R= \tuple{l_0, x=0 \wedge 0<y<1}$.
Suppose $\pi_1^*$ always proposes the transition $a_1^0$ from the region 
$R_1$.
Then, player~2 can take the $a_2^0$ transitions with smaller and smaller time
delays and ensure that the region is $R$ after each $a_2^0$ transition.
This will make time converge, and player~1 will not be blameless, thus $\pi_1^*$
is not a receptive strategy.
Suppose $\pi_1^*$ always proposes the transition $a_1^1$ from the region 
$R_1$ (or proposes a non-zero time delay move, which has the 
equivalent effect of disabling the $a_1^0$ transition).
In this case,  player~2 can take the  $a_2^1$ transition to again ensure that 
the region is $R$ after the  $a_2^1$ transition.
This will result in the situation where
 the $l_0, l_2$ cycle is always taken,  time
is not  divergent, and player~1 is not blameless; thus $\pi_1^*$
is again not a receptive strategy.

We now demonstrate that a finite-memory (actually memoryless in this case)
receptive player-1 strategy $\pi_1^{\dagger}$ exists from states in the region 
$R= \tuple{l_0, x=0 \wedge 0<y<1}$ for avoiding the bad location.
If the current  state 
is in the region $R$ with the clock value of $y$ being less than
$1/2$, then player~1 proposes  the $a_1^1$ transition with a delay which
will make make clock $y$ have a value greater than $1/2$.
If the current  state 
is in the region $R$ with the clock value of $y$ being  greater than or
equal to $1/2$, then player~1 proposes to take the $a_1^0$ transition 
(immediately).
This strategy ensures that against any player-2 receptive strategy:
(1)~the
game will  cycle
infinitely often between $l_0, l_1$, and also between $l_0, l_2$, and
(2)~the clock $y$ will be at least $1/2$ infinitely often, and also be reset
infinitely often, giving us time divergence.
Thus, $\pi_1^{\dagger}$ is a receptive memoryless player-1 winning strategy.

Finally, we demonstrate a player-1 finite-memory receptive \emph{region} 
strategy $\pi_1^{\ddagger}$ for avoiding the bad location, starting from
a state in
the region $R= \tuple{l_0, x=0 \wedge 0<y<1}$.
The strategy acts as follows when at region $R$.
If the previous cycle was to $l_1$, 
the strategy $\pi_1^{\ddagger}$ proposes to take the edge $a_1^1$ 
with a delay which
will make make clock $y$ have a value greater than $1/2$.
If the  previous cycle was to $l_2$, the strategy $\pi_1^{\ddagger}$ proposes to 
take the edge $a_1^0$ (immediately).
It can be verified that the strategy $\pi_1^{\ddagger}$ requires only one memory
state, and is a player-1 winning  receptive region  strategy.
\qed
\end{example}

\begin{theorem}[Memory necessity  of  winning region strategies for safety]
\label{theorem:RegionMemorySafety} There is a timed automaton game $\A$,
a union of regions $Y$  of $\A$, and a state $s$ such that player~1 does not 
have a  winning memoryless receptive region strategy  from $s$,
but has a winning receptive region strategy  from $s$ 
that requires at most $(|C|+1)$ memory
states (where $|C|$ is the number of clocks in $\A$), for the objective
of staying in the  set $Y$.
\end{theorem}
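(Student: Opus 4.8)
The plan is to prove the statement by exhibiting a single witness --- the timed automaton game $\A_3$ from Example~\ref{example:MemoryRequiredRegionSafety} --- and arguing that it has all the required properties. The theorem is a purely existential claim, so it suffices to combine two facts about $\A_3$: (a) player~1 has no memoryless winning receptive region strategy from the designated state $s$ in the region $R = \tuple{l_0, x=0 \wedge 0<y<1}$ for the safety objective of avoiding $\bad$; and (b) player~1 does have a winning receptive region strategy from $s$ using at most $(|C|+1)$ memory states, for the set $Y$ consisting of all regions with location in $\set{l_0,l_1,l_2}$. Both of these have already been established in Example~\ref{example:MemoryRequiredRegionSafety}, so the proof is essentially a matter of packaging that example's analysis into the precise statement of the theorem and checking the memory bound.

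First I would set $\A := \A_3$, $Y := $ the union of all regions whose location lies in $\set{l_0,l_1,l_2}$ (so $\safe(Y)$ is exactly ``avoid $\bad$''), and $s$ a state in $R$. For part~(a), I would recall the case analysis from the example: any memoryless region strategy $\pi_1^*$ must commit, at region $R$, to one fixed behavior. If it proposes the $a_1^0$ transition from $R$, player~2 repeatedly takes $a_2^0$ with geometrically shrinking delays, keeping the game region-equivalent to $R$ after each step; this produces a time-convergent run in which player~1 is blamed infinitely often, contradicting receptiveness. If instead $\pi_1^*$ proposes $a_1^1$ (or any positive-delay move that disables $a_1^0$), player~2 takes $a_2^1$ and forces the run to stay in the $(l_0,l_2)$ cycle, where the clock $y$ must stay below $1$ without being reset, so time is bounded and again player~1 is not blameless. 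Either way no memoryless region strategy is receptive and winning, establishing~(a).

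For part~(b), I would point to the finite-memory receptive region strategy $\pi_1^{\ddagger}$ already constructed in the example: it uses one bit of memory recording whether the last excursion from $l_0$ went to $l_1$ or to $l_2$, and alternates --- when that bit says ``was $l_1$'' it proposes $a_1^1$ with a delay pushing $y$ above $1/2$, and when it says ``was $l_2$'' it proposes $a_1^0$ immediately. This guarantees infinitely many visits to both the $(l_0,l_1)$ and $(l_0,l_2)$ cycles, so $y$ exceeds $1/2$ infinitely often and is reset infinitely often, yielding time divergence; and it never leaves $\set{l_0,l_1,l_2}$, so $\safe(Y)$ holds. Since $\A_3$ has one clock, $|C|+1 = 2$, and $\pi_1^{\ddagger}$ uses $2$ memory states, the bound $(|C|+1)$ is met. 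The main (minor) obstacle is just bookkeeping: verifying that $\pi_1^{\ddagger}$ as described is genuinely a \emph{region} strategy (its prescription depends only on the current region together with the finite memory, not on exact clock values) and that the memory update is itself region-determined; this is routine given the explicit description in Example~\ref{example:MemoryRequiredRegionSafety}. Assembling (a) and (b) gives the theorem. \qed
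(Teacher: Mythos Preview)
Your proposal is essentially the same approach as the paper's: exhibit $\A_3$ from Example~\ref{example:MemoryRequiredRegionSafety}, use the example's case analysis for the non-existence of a memoryless receptive region strategy, and then certify the positive $(|C|+1)$ bound. There is, however, a small factual slip and a minor divergence from the paper worth flagging.

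First, the slip: $\A_3$ has \emph{two} clocks, $x$ and $y$ (the region $R$ is described by $x=0 \wedge 0<y<1$), so $|C|=2$ and $|C|+1=3$, not $2$. This does not break your argument --- the two-state strategy $\pi_1^{\ddagger}$ still fits comfortably under the bound $3$ --- but the sentence ``Since $\A_3$ has one clock, $|C|+1 = 2$'' should be corrected.

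Second, the paper does not certify the $(|C|+1)$ bound by inspecting $\pi_1^{\ddagger}$ directly; instead it simply invokes Theorem~\ref{theorem:Safety}, which guarantees for \emph{any} timed automaton game that a winning receptive region strategy with at most $(|C|+1)$ memory states exists from every state in $\sure_1(\Box Y)$. Your route (checking the concrete strategy) is perfectly valid and more self-contained, but the paper's route sidesteps the clock-counting and the need to argue that $\pi_1^{\ddagger}$ is a region strategy with region-determined memory updates.
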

\begin{proof}
Example~\ref{example:MemoryRequiredRegionSafety} presents such a timed
automaton game.
The memory bound follows from Theorem~\ref{theorem:Safety}.
\qed
\end{proof}

\bibliographystyle{alpha}
\bibliography{timed}

\end{document}